\newtheorem{theorem}{Theorem}
\newtheorem{lemma}[theorem]{Lemma}
\newtheorem{corollary}[theorem]{Corollary}
\newtheorem{definition}[theorem]{Definition}
\numberwithin{equation}{section}
\numberwithin{figure}{section}
\newcommand{\N}{\mathbb{N}}
\newcommand{\R}{\mathbb{R}}
\newcommand{\C}{\mathbb{C}}
\newcommand{\Sig}{\Sigma_f}
\newcommand{\Otilde}{\tilde{O}}
\newcommand{\aqirarrow}{\stackrel{\textsc{Aqir}}{\rightarrow}}
\newcommand{\Z}{\mathbb{Z}}
\renewcommand{\a}{\alpha}
\newcommand{\eps}{\epsilon}
\newcommand{\sgn}{\mathrm{sign}}
\newcommand{\res}{\mathrm{res}}
\newcommand{\IBox}{\mathfrak{B}}
\newcommand{\down}{\mathrm{down}}
\newcommand{\up}{\mathrm{up}}
\newcommand{\RB}{\Gamma}
\newcommand{\ND}{\textsc{New}\textsc{Dsc}}
\begin{document}

\title{Root Refinement for Real Polynomials.}

\author{Michael Kerber\thanks{Stanford University, Stanford, USA and Max-Planck-Center for Visual Computing and Communication, Saarbr\"ucken, Germany ({\tt mkerber@mpi-inf.mpg.de}).}
        \and Michael Sagraloff\thanks{Max-Planck Institut f\"ur Informatik, Saarbr\"ucken, Germany ({\tt msagralo@mpi-inf.mpg.de}).}}

\maketitle

\begin{abstract}
We consider the problem of approximating all real roots of a square-free polynomial $f$. 
Given isolating intervals, our algorithm refines each of them 
to a width of $2^{-L}$ or less, that is, each of the roots is approximated to $L$ bits after the binary point.
Our method provides a certified answer for arbitrary real polynomials, only considering finite
approximations of the polynomial coefficients and choosing a suitable working precision adaptively.
In this way, we get a correct algorithm that is simple to implement and practically efficient.
Our algorithm uses the quadratic interval refinement method; we adapt that method to be able to cope with
inaccuracies when evaluating $f$, without sacrificing its quadratic convergence behavior.
We prove a bound on the bit complexity of our algorithm in terms of the degree of the polynomial, the size
and the separation of the roots, that is, parameters exclusively related to the geometric location of the roots. 
Our bound is near optimal and significantly improves previous work on integer polynomials. Furthermore, it essentially matches the best known theoretical 
bounds on root approximation which are obtained by
very sophisticated algorithms. 
We also investigate the practical behavior of the algorithm and demonstrate how closely the practical
performance matches our asymptotic bounds.
\end{abstract}

\section{Introduction}
\label{sec:intro}

The problem of computing the real roots of a polynomial in one variable
is one of the best studied problems in mathematics.
If one asks for a \emph{certified}
method that finds all roots, 
it is common to write the solutions as a set of disjoint
\emph{isolating} intervals, each containing exactly one root; for that reason,
the term \emph{real root isolation} is common in the literature.
Simple, though efficient methods for this problem have been presented,
for instance, based on Descartes' rule of signs~\cite{ca-prriudrs-76}, 
or on Sturm's theorem~\cite{dusharmayap-dmbound2007}.
Recently, the focus of research shifted to polynomials with real
coefficients which are approximated during the algorithm.
It is worth remarking that this approach does not just generalize the integer
case but has also leads to practical~\cite{ek+-descartes,RS} and theoretical~\cite{sag-complexity} 
improvements of it.

We consider the related \emph{real root refinement problem}: assuming that isolating intervals
of a polynomial are known, \emph{refine} them to a width of $2^{-L}$ or less,
where $L\in\mathbb{N}$ is an additional input parameter.
Clearly, the combination of root isolation and root refinement,
also called \emph{strong root isolation}, yields 
a certified approximation of all roots of the polynomial
to an absolute precision of $2^{-L}$ or, in other words, to $L$ bits 
after the binary point in binary representation.

We introduce an algorithm, called $\textsc{Aqir}$, to solve the root refinement problem for arbitrary square-free
polynomials with real coefficients. Most of the related approaches are formulated
in the REAL-RAM model where exact operations on real numbers are assumed to be
available at unit cost.
In contrast, our approach works only with approximations of the input and exclusively performs
approximate but certified arithmetic. Here, we assume the existence of an oracle which, for an arbitrary positive integer $\rho$, provides approximations of the coefficients of the input
polynomial to an error of less than $2^{-\rho}$. In the analysis of our algorithm, we also quantify the size of $\rho$ in the worst case. 
The refinement uses the quadratic interval refinement method~\cite{abbott-quadratic} (QIR for short)
which is a quadratically converging hybrid of the bisection and the secant method.
We adapt the method to work with an increasing working precision
and use interval arithmetic to validate the correctness of the outcome.
In this way,
we obtain an algorithm that always returns a correct root approximation,
is simple to implement on an actual computer 
(given that arbitrary approximations of the coefficients are accessible),
and is adaptive in the sense that it might succeed with a much lower
working precision than predicted by the worst-case bound.

We provide a bound on the bit complexity of our algorithm. To state it properly, we first define several
magnitudes depending on the polynomial which remain fixed throughout the paper.
Let
\begin{align}
f(x):=\sum_{i=0}^{d}a_{i}x^{i}\in\R[x]\label{polyf}
\end{align}
be a square-free polynomial of degree $d\geq 2$ with $|a_d|\ge 1$ and $\tau:=\left\lceil \log(\max_i |a_i|)\right\rceil\ge 1$ (throughout the paper, $\log$
means the logarithm with base $2$). We denote the (complex) roots of $f$ by $z_1,\ldots,z_d$, and, w.l.o.g., we can assume that the roots are numbered such that the first $m$ roots $z_1,\ldots,z_m$ are exactly the real roots of $f$.
For each $z_{i}$, $\sigma_i=\sigma(z_{i},f):=\min_{j\neq 
i}|z_{i}-z_{j}|$ denotes the \emph{separation of $z_{i}$}, 
$\Sigma_f:=\sum_{i=1}^{d}\log \sigma_i^{-1}$ and $\Gamma_f:=\max(1,\log(\max_i |z_i|))$ the \emph{logarithmic root bound} of $f$. An interval $I=(a,b)$ is called 
\emph{isolating} for a root $z_i$ if $I$ contains $z_i$ and no other 
root of $F$. We set $\operatorname{mid}(I)=\frac{a+b}{2}$ for the \emph{center} and 
$w(I):=b-a$ for the \emph{width} of $I$.\\

\textbf{\emph{Main Result.}} \emph{Given initial isolating intervals for the roots of $f$, 
our algorithm refines all intervals to the width $2^{-L}$ using 
\begin{align}\label{ref:mainresult}
\tilde{O}(d(d\Gamma_f+\Sigma_f)^2+dL)
\end{align}
bit operations, where $\Otilde$ means that we ignore logarithmic factors.
To do so, our algorithm requires the coefficients of $f$ \emph{at a precision} of at most
$$\tilde{O}(d\Gamma_f+\Sigma_f+L)$$
bits after the binary point.}\\\\
We remark that, if $L$ dominates all other input parameters, the bound in (\ref{ref:mainresult}) is optimal up to logarithmic factors because reading the output already takes $\Theta(dL)$ bit operations in the presence of $m=\Theta(d)$ real roots.

For the analysis, we divide the sequence of QIR steps in the refinement process
into a \emph{linear sequence} where the method behaves like bisection in the worst case,
and a \emph{quadratic sequence} where the interval is converging quadratically
towards the root, following the approach in~\cite{kerber-complexity}.
We do not require any conditions on the initial intervals except that they are
disjoint and cover all real roots of $F$;
an initial \emph{normalization phase} 
modifies the intervals to guarantee the efficiency of our refinement strategy.

We give two variants of our algorithm; for the first variant which we consider to be more 
practical, we use approximate polynomial evaluation at single points only, whereas, for the second 
(more theoretical) variant, we group up to $n$ evaluations together and use fast approximate 
multipoint evaluation~\cite{ks-fast}. The idea behind the second approach is that we perform polynomial evaluations simultaneously for all intervals at the same cost as for a single classical evaluation.\footnote{Very recent work~\cite{pantsi:ISSAC13} introduces an alternative method for real root refinement which, for the task of refining a single isolating interval, achieves comparable running times as \textsc{Aqir}. In a preliminary version of their conference paper (which has been sent by the authors to M.~Sagraloff in April 2013), the authors claim that using approximate multipoint evaluation also yields an improvement by a factor $n$ for their method. Given the results from this paper, this seems to be correct, however, their version of the paper did not contain a rigorous argument to bound the precision demand for the fast multipoint evaluation. This has been achieved first in~\cite{ks-fast}.} 
This yields the complexity bound in (\ref{ref:mainresult}) which is by a factor $m\le d$ better (if $L$ dominates all other input parameters) than the bound $\tilde{O}(d(d\Gamma_f+\Sigma_f)^2+m\cdot dL)=\tilde{O}(d(d\Gamma_f+\Sigma_f)^2+d^2L)$ as achieved by the variant without multipoint evaluations.   

We remark that, using the root solver from~\cite{sag-complexity}, initial isolating intervals can be obtained with
$\Otilde(d(d\Gamma_f+\Sigma_f)^2)$ bit operations using coefficient approximations of $f$ to
$\tilde{O}(d\Gamma_f+\Sigma_f)$ bits after the binary point. Hence, 
our complexity result from (\ref{ref:mainresult}) also gives a bound on the strong root isolation problem.
 
The case of integer coefficients is often of special interest, and, with respect to the QIR method, the problem has been 
investigated by previous work~\cite{kerber-complexity} for this specific case.
In that work, the complexity of root refinement was bounded by
$\Otilde(d^4\tau^2+m\cdot d^2L)=\Otilde(d^4\tau^2+d^3L)$.
We lower this bound and arrive at a complexity of
\begin{align}
\Otilde(d^3\tau+dL).\label{mainresult:integer}
\end{align}
The improvement stems from a combination of several ideas that we describe separately:
In comparison to the purely exact method from~\cite{kerber-complexity}, we get rid of one factor of $d$ 
because, for $\textsc{Aqir}$, we consider a different approach for evaluating
the sign of $f$ at rational points (the main operation in the refinement procedure) than for the classical QIR method:
for an interval of size $2^{-\ell}$,
the evaluation of $f$ at the endpoints of the interval has a complexity of $\Otilde(d^2(\tau+\ell))$ when 
using exact rational arithmetic because the function values can
consist of up to $d(\tau+\ell)$ bits.
However, we show that we can still compute the sign of the function value
with certified numerical methods
using the substantially smaller working precision of $O(d\tau+\ell)$. We remark that the latter result certainly only applies to points whose distance to a root is not much smaller than $2^{-\ell}$, thus, for \textsc{Aqir}, we modified the QIR method in way such that the latter requirement is assured;
this improvement is described in Sections~\ref{sec:aqir} to~\ref{sec:root_refinement}. The latter modifications yield an algorithm with bit complexity
$\tilde{O}(d^3\tau^2+m\cdot dL)=\tilde{O}(d^3\tau^2+d^2L)$
for all real roots. Another factor of $m$ in the second term is then shaved off by using approximate multipoint evaluation in the algorithm as already mentioned above.

Finally, we mix the ideas from~\cite{sag-complexity} with our approach. 
The $\tau^2$ term in the complexity is due to the fact that, in the worst case, our refinement algorithm performs bisections 
until the isolating interval has reached a certain threshold. We change the algorithm such that it performs a hybrid
of bisections and Newton-like steps initially, and switches to QIR after reaching the threshold. This further reduces the complexity to
$\tilde{O}(d^3\tau+dL).$
We remark that this last optimization is restricted to the case of integer polynomials, whereas the first two improvements
apply to our general setup and lead to the main result (\ref{ref:mainresult}) stated above.

We have implemented the exact version of the QIR algorithm and the approximate variant \textsc{Aqir} that realizes our first improvement step.
We report on experimental results
when applying both versions to two families of random input instances. We focus on the comparison
of both variants when increasing one of the input parameters. We demonstrate that, for increasing degree
of the input polynomial, refining a single root scales quadratically for the exact version
and linearly for the approximate version. Hence, by choosing a smaller working precision,
we get rid of a factor of $d$ both in theory and in practice.\\

\textbf{\emph{Related work.}} 
The problem of accurate root approximation is omnipresent in mathematical applications;
certified methods are of particular importance in the context of computations
with algebraic objects, for instance, when computing the topology
of algebraic curves~\cite{RUR-topology,ekw-fast} or when solving systems of multivariate
equations~\cite{bes-bisolve11}.

The idea of combining bisection with a faster
converging method to find roots of continuous functions
has been first introduced in \emph{Dekker's method}
and elaborated in \emph{Brent's method}; see~\cite{bd-two} for a summary.
However, these approaches assume exact arithmetic for their convergence results.

For polynomial equations, numerous algorithms are available,
for instance, the \emph{Jenkins-Traub algorithm} or \emph{Durand-Kerner iteration};
although they usually approximate the roots very fast in practice~\cite{BF00}, 
general worst-case bounds on their arithmetic complexity 
are not available. In fact, for some variants, even termination cannot be
guaranteed in theory; we refer to the survey~\cite{pan-survey} for extensive references
on these and further methods. 

The theoretical complexity of root approximation has been investigated by Pan~\cite{Pan:alg,pan-optimal}.
Assuming all roots to be in the unit disc, he achieves a bit complexity
of $\Otilde(n^3+n^2L)$ for approximating all roots to an accuracy of $2^{-L}$, which matches
our bound (if $L$ is the dominant input parameter) for the first variant of \textsc{Aqir} which does not use fast multipoint evaluation.
His approach even works for polynomials with multiple roots.
However, as Pan admits in~\cite{pan-survey},
the algorithm is difficult to implement and so is the complexity analysis when taking rounding errors
in intermediate steps into account. In addition, the method is global in a sense that all complex roots are approximated in parallel, hence it does not profit from situations where the number of real roots is small. 
A very recent approach~\cite{mesawa:ISSAC13} for root isolation and refinement uses Pan's factorization algorithm from~\cite{Pan:alg} as a key ingredient. For square-free polynomials, the corresponding algorithm achieves a comparable bit complexity bound for the refinement as our asymptotically fast variant of \textsc{Aqir}. However, this does not turn our present results obsolete: First, our approach considerably differs from all existing global root finding algorithms which combine the splitting circle method~\cite{schonhage:fundamental} with techniques from numerical analysis  (Newton iteration, GraeffeÕs method, discrete
Fourier transforms) and fast algorithms for polynomial and
integer multiplication. Second, our algorithm is adaptive in the sense that its computational complexity is directly related to the number of real roots which is often much smaller than the degree of the polynomial. Third, because of its simpleness and the low algorithmic overhead\footnote{In particular, this holds for the first variant of \textsc{Aqir} which does not use fast multipoint evaluation. We consider the second variant based on fast approximate multipoint evaluation more to be a theoretical proof of concept that the overall approach may yield almost optimal complexity bounds.}, it is well suited for an efficient implementation.

We improve upon the conference version of this paper~\cite{ks-complexity} in several ways:
in our bit complexity result, we remove the dependence on the coefficient size and, thus, relate the hardness of root approximation to parameters that exclusively depend on the geometric location of the roots.
In addition, we redefine the threshold for the interval
width that guarantees quadratic convergence (Definition~\ref{def:C_bound}); 
in this way, we get rid of the magnitude 
$R=\log|\res(f,f')|^{-1}$, which is a pure artifact of the analysis of~\cite{ks-complexity}. 
Moreover, the improvements on the complexity result using multipoint evaluations and hybrid Newton steps, 
as well as the experimental evaluations did not appear in~\cite{ks-complexity}.\\

\textbf{\emph{Outline.}} We summarize the (exact) QIR method
in Section~\ref{sec:eqir}. Our \textsc{Aqir} algorithm that only uses 
approximate coefficients
is described in Section~\ref{sec:aqir}. 
Its precision demand is analyzed in Section~\ref{apxivarithemtic}.
Based on that analysis of a single refinement step, 
the complexity bound 
of root refinement is derived in Section~\ref{sec:root_refinement}.
Some experimental comparison between QIR with exact and approximate
coefficients is presented in Section~\ref{sec:experiments}.
Further asymptotic improvements using multipoint evaluation
and special techniques for integer polynomials
are described in Section~\ref{sec:tricks}.
We end with concluding remarks in Section~\ref{sec:conclusion}.

\section{Review on exact QIR}
\label{sec:eqir}

\begin{algorithm}[t]
\caption{\textsc{Eqir}: Exact Quadratic Interval Refinement}
\label{alg:eqir}
\textsc{Input:} $f\in\R[x]$ square-free, $I=(a,b)$ isolating, $N=2^{2^i}\in\N$
\newline
\textsc{Output:} $(J,N')$ with $J\subseteq I$ isolating for $\xi$ and $N'\in\N$
\begin{algorithmic}[1]
\Procedure {eqir}{$f,I=(a,b),N$}
  \State \textbf{if} $N=2$, \textbf{return} $($\Call{Bisection}{$f,I$},$4)$.
  \State $\omega\gets\frac{b-a}{N}$
  \State \label{eqir:point_location}$m'\gets a+\mathrm{round}(N\frac{f(a)}{f(a)-f(b)}) \omega$
\Comment $m'\approx a+\frac{f(a)}{f(a)-f(b)}(b-a)$
  \State \label{eqir:evaluation_begin}$s\gets \sgn(f(m'))$
  \State \textbf{if} $s=0$, \textbf{return} $([m',m'],\infty)$
  \State \textbf{if} $s=\sgn(f(a))$ \textbf{and} $\sgn(f(m'+\omega))=\sgn(f(b))$,
    \textbf{return} $([m',m'+\omega],N^2)$
  \State \label{eqir:evaluation_end}\textbf{if} $s=\sgn(f(b))$ \textbf{and} $\sgn(f(m'-\omega))=\sgn(f(a))$, \textbf{return} $([m'-\omega,m'],N^2)$
  \State Otherwise, \textbf{return} $(I$,$\sqrt{N})$.
\EndProcedure
\end{algorithmic}
\end{algorithm}

Abbott's QIR method~\cite{abbott-quadratic, kerber-complexity} 
is a hybrid of 
the simple (but inefficient) bisection method with a quadratically converging
variant of the \emph{secant method}. We refer to this method as \textsc{Eqir},
where ``E'' stands for ``exact'' in order to distinguish from the variant
presented in Section~\ref{sec:aqir}.\footnote{To avoid confusion,
the approximate version presented later is also ``exact'' in the sense
that the refined intervals are isolating, but the intermediate
computations are only approximate.}
Given an isolating interval $I=(a,b)$ for a real root $\xi$ of $f$, 
we consider the
secant through $(a,f(a))$ and $(b,f(b))$ (see also Figure~\ref{fig:qir}). This secant intersects the real axis
in the interval $I$, say at $x$-coordinate $m$. 
For $I$ small enough, the secant should approximate the graph of the
function above $I$ quite well and, so,
$m\approx\xi$ should hold. An \textsc{Eqir} step tries to exploit this
fact:

The isolating interval $I$ is (conceptually) subdivided into $N$
subintervals of same size, using $N+1$ equidistant grid points. 
Each subinterval has width $\omega:=\frac{w(I)}{N}$.
Then $m'$,
the closest grid point to $m$, is computed and the sign of $f(m')$ is evaluated.
If that sign equals the sign of $f(a)$, the sign of $f(m'+\omega)$ is
evaluated. Otherwise, $f(m'-\omega)$ is evaluated. If the sign changes
between the two computed values, the interval $(m',m'+\omega)$
or the interval $(m'-\omega,m')$, respectively, 
is set as new isolating interval for $\xi$.
In this case, the \textsc{Eqir} step is called \emph{successful}. 
Otherwise, the isolating interval remains unchanged, and the \textsc{Eqir} step
is called \emph{failing}. See Algorithm~\ref{alg:eqir} for a description in
pseudo-code.

In~\cite{kerber-complexity}, the root refinement problem is analyzed using the
just described \textsc{Eqir} method for the case of integer coefficients and
exact arithmetic with rational numbers.
For that, a sequence of \textsc{Eqir} steps
is performed with $N=4$ initially. After a successful \textsc{Eqir} step,
$N$ is squared for the next step; after a failing step, $N$ is set to 
$\sqrt{N}$. If $N$ drops to $2$, a bisection step is performed,
and $N$ is set to $4$ for the next step. In~\cite{kerber-complexity}, a bound
on the size of an interval is provided to guarantee success of every \textsc{Eqir} and, thus, quadratic convergence of the overall method. 

\section{Approximate QIR}

The most important numerical operation in an \textsc{Eqir} step is the computation
of $f(x_0)$ for values $x_0\in I$. Note that $f(x_0)$ is needed
for determining the closest grid point $m'$ to the secant 
(Step~\ref{eqir:point_location} of Algorithm~\ref{alg:eqir}),
and its sign is required for checking for sign changes in
subintervals (Steps~\ref{eqir:evaluation_begin}-\ref{eqir:evaluation_end}). 

What are the problems if $f$ is a bitstream polynomial as in~(\ref{polyf}), so that $f(x_0)$
can only be evaluated up to a certain precision? First of all, $\frac{Nf(a)}{f(a)-f(b)}$ can only be computed
approximately, too, which might lead to checking the wrong subinterval in the algorithm
if $m$ is close to the center of a subinterval.
Even more seriously, if $f(x_0)$ is zero, then, in general, its sign can never be evaluated
using any precision. Even if we exclude this case, the evaluation of $f(x_0)$
can become costly if $x_0$ is too close to a root of $f$. The challenge is to modify the
QIR method such that it can cope with the uncertainties in the evaluation of $f$,
requires as few precision as possible in a refinement step and still
shows a quadratic convergence behavior eventually.

Bisection is a subroutine called in the QIR method if $N=2$;
before we discuss the general case, we first describe our variant of the bisection
in the bitstream context. Note that we face the same problem: 
$f$ might be equal or almost equal to zero at $\operatorname{mid}(I)$, 
the center of $I$.
We will overcome this problem by evaluating $f$ at several $x$-coordinates
``in parallel''. For that, we subdivide $I$ into 4 equally wide parts
using the subdivision points $m_j:=a+j\cdot \frac{b-a}{4}$ for $1\leq j\leq 3$. 
We also assume that the sign of $f$ at $a$ is already known.
We choose a starting precision 
$\rho$ and compute $f(m_1),\ldots,f(m_3)$ using interval arithmetic in
precision $\rho$ (cf. Section~\ref{apxivarithemtic} for details). 
If less than $2$ out of $3$ signs have been determined using precision $\rho$, 
we set $\rho\gets 2\rho$ and 
repeat the calculation with increased precision. 
Once the sign at at least $2$ subdivision points is determined, 
we can determine a subinterval of at most half the
size of $I$ that contains $\xi$ (Algorithm~\ref{alg:abisect}). 
We will refer to this algorithm as ``bisection'', although the resulting
interval may sometimes be only a quarter of the original size.
Note that $f$ can only become zero at one of the subdivision points which guarantees
termination also in the bitstream context. Moreover, at least $2$ of the $3$ subdivision
points have a distance of at least $\frac{b-a}{8}$ to $\xi$. This asserts that the function
value at these subdivision points is reasonably large and leads to an upper bound
of the required precision (Lemma~\ref{lem:apxbisection}).

\label{sec:aqir}

\begin{figure}[t]
\begin{center}
\vspace{-5cm}
\includegraphics[width=10.5cm]{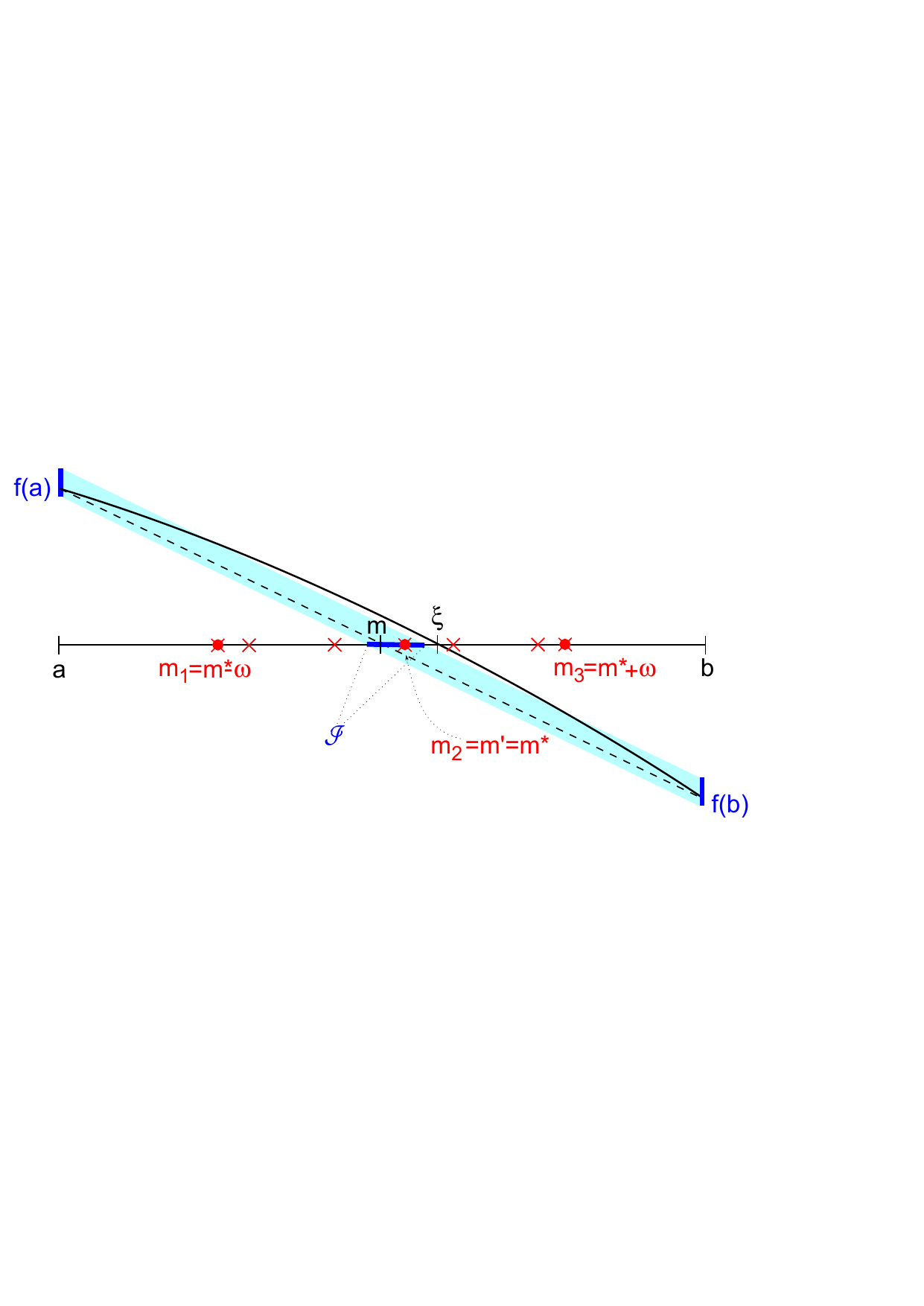}\end{center}
\vspace{-5.75cm}
\caption{\label{fig:qir}Illustration of an \textsc{Aqir} step for $N=4$.}
\end{figure}

We next describe our bitstream variant of the QIR method that we call 
\emph{approximate quadratic interval refinement}, or \textsc{Aqir} for short (see also Figure~\ref{fig:qir} for the illustration of an \textsc{Aqir} step for $N=4$).
Compared to the exact variant, we replace two substeps. In Step~\ref{eqir:point_location}, we replace the computation of $\lambda:=N\frac{f(a)}{f(a)-f(b)}$
as follows: For a working precision $\rho$, 
we evaluate $f(a)$ and $f(b)$ via interval arithmetic with precision $\rho$ (blue vertical intervals in the above figure) and evaluate $N\frac{f(a)}{f(a)-f(b)}$ with interval arithmetic accordingly (cf. Section~\ref{apxivarithemtic}).
Let $J=(c,d)$ denote the resulting interval (in Figure~\ref{fig:qir}, $\mathcal{I}=a+J\cdot\frac{b-a}{N}$ is the intersection of the stripe defined by the interval evaluations of $f(a)$ and $f(b)$ with the real axis). 
If the width $w(J)$ of $J$ is more than $\frac{1}{4}$,
we set $\rho$ to $2\rho$ and retry. Otherwise, 
let $\ell$ be the integer closest to $\operatorname{mid}(J)$ and set $m^*:=a+\ell\cdot\frac{b-a}{N}$.
For $m=a+\frac{f(a)}{f(a)-f(b)}(b-a)$ as before and
$m_j:=a+j\cdot\frac{b-a}{N}$ (red dots) for $j=0,\ldots,N$, 
the following Lemma shows that the computed $m^*=m_\ell$ indeed 
approximates $m$ on the $m_j$-grid:

\begin{algorithm}[t]
\caption{Approximate Bisection}
\label{alg:abisect}
\textsc{Input:} $f\in\R[x]$ square-free, $I=(a,b)$ isolating, $s=\sgn(f(a))$
\newline
\textsc{Output:} $J\subseteq I$ isolating with $2\cdot w(J)\leq w(I)$.
\begin{algorithmic}[1]
\Procedure {Approximate\_Bisection}{$f,I=(a,b),s$}
  \State $V\gets [a+(i-1)\cdot\frac{b-a}{4},i=1,\ldots,5]$
  \State $S=[s,0,0,0,-s]$
  \State $\rho\gets 2$
  \While {$S$ contains more than one zero}
  \For{i=2,\ldots,4}
  \State\label{algstep:ia1} If $S[i]=0$, set $S[i]\gets\sgn\,\IBox(f(V[i]),\rho)$
  \EndFor
  \State $\rho\gets 2\rho$
  \EndWhile
  \State Find $v,w$, such that $S[v]\cdot S[w]=-1\wedge (v+1=w\vee (v+2=w\wedge S[v+1]=0))$
  \State \textbf{return} $(V[v],V[w])$ 
\EndProcedure
\end{algorithmic}
\end{algorithm}

\begin{lemma}\label{lem:point_location}
Let $m$ be inside the subinterval $[m_j,m_{j+1}]$. Then,
$m^*=m_j$ or $m^*=m_{j+1}$.
Moreover, let $m'\in\{m_j,m_{j+1}\}$ be the point that is closer to $m$. 
If $|m-m'|<\frac{b-a}{4N}$, then $m^*=m'$.
\end{lemma}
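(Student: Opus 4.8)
The plan is to reduce both assertions to elementary statements about the single real number $\lambda := N\frac{f(a)}{f(a)-f(b)}$, whose interval-arithmetic enclosure is the interval $J=(c,d)$ computed by the algorithm. Since $m = a + \frac{\lambda}{N}(b-a)$ and $m_j = a + \frac{j}{N}(b-a)$, the hypothesis $m\in[m_j,m_{j+1}]$ is equivalent to $\lambda\in[j,j+1]$, and for $k\in\{j,j+1\}$ the condition $|m-m_k|<\frac{b-a}{4N}$ is equivalent to $|\lambda-k|<\tfrac14$. Since $m^*=a+\ell\cdot\frac{b-a}{N}$ with $\ell$ the integer nearest $\operatorname{mid}(J)$, i.e.\ $m^*=m_\ell$, it suffices to locate $\operatorname{mid}(J)$ relative to $j$ and $j+1$.

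Next I would record two facts. First, by the enclosure property of the approximate arithmetic set up in Section~\ref{apxivarithemtic}, and because $I$ is isolating so that $f(a)$ and $f(b)$ have opposite signs and $f(a)-f(b)\neq0$, once the working precision is large enough the computed $J$ is a finite interval with $\lambda\in J$; in particular this holds when the retry loop terminates, i.e.\ when $w(J)\leq\tfrac14$. Second, since $\operatorname{mid}(J)$ is the midpoint of $J$ and $\lambda\in J$, we have $|\operatorname{mid}(J)-\lambda|\leq\tfrac12 w(J)\leq\tfrac18$.

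For the first part of the lemma, combining the second fact with $\lambda\in[j,j+1]$ gives $\operatorname{mid}(J)\in[j-\tfrac18,\,j+1+\tfrac18]$; the integer nearest any point of $[j-\tfrac18,\,j+\tfrac12)$ is $j$, the integer nearest any point of $(j+\tfrac12,\,j+1+\tfrac18]$ is $j+1$, and $j+\tfrac12$ rounds to one of the two, so $\ell\in\{j,j+1\}$ and hence $m^*\in\{m_j,m_{j+1}\}$. For the second part, write $m'=m_k$ with $k\in\{j,j+1\}$; the hypothesis $|m-m'|<\frac{b-a}{4N}$ means $|\lambda-k|<\tfrac14$, so $|\operatorname{mid}(J)-k|\leq|\operatorname{mid}(J)-\lambda|+|\lambda-k|<\tfrac18+\tfrac14=\tfrac38<\tfrac12$, whence $k$ is the unique integer nearest $\operatorname{mid}(J)$, i.e.\ $\ell=k$ and $m^*=m_k=m'$.

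I do not expect a real obstacle here; the only point requiring care is the first fact above, namely that when the precision-doubling loop exits the reported interval $J$ genuinely encloses the exact secant parameter $\lambda$. This follows from the standard enclosure guarantees of the interval evaluations of $f(a)$, $f(b)$ and of the derived arithmetic expression for $\lambda$ (detailed in Section~\ref{apxivarithemtic}), together with the fact that $f(a)-f(b)\neq0$ for an isolating interval, which is what makes the quotient — and hence a finite enclosure — available in the first place. Everything else is elementary nearest-integer arithmetic.
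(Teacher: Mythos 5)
Your argument is correct and matches the paper's proof essentially step for step: both reduce to the scalar $\lambda=N\frac{f(a)}{f(a)-f(b)}$, use $\lambda\in J$ with $w(J)\leq\frac14$ to get $|\operatorname{mid}(J)-\lambda|\leq\frac18$, and then apply the triangle inequality to bound $|\operatorname{mid}(J)-k|$ by $\frac38$ for the second part. The only cosmetic difference is that you spell out the reduction $m\leftrightarrow\lambda$ and the enclosure property more explicitly (and you correctly write $m=a+\frac{\lambda}{N}(b-a)$ where the paper's proof has an apparent typo $m=f(a)+\lambda\frac{b-a}{N}$).
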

\begin{proof}
Let $\lambda:=N\frac{f(a)}{f(a)-f(b)}$ and $J$ the interval computed by interval arithmetic as above,
with width at most $\frac{1}{4}$. Since $m=f(a)+\lambda\frac{b-a}{N}\in[m_j,m_{j+1}]$, it follows that $j\le \lambda\le j+1$. By construction, $\lambda\in J$. Therefore, $|\lambda-\operatorname{mid}(J)|\leq\frac{1}{8}$ and, thus, it follows that $\operatorname{mid}(J)$ can only be rounded to $j$ or $j+1$.
Furthermore, for $m'=m_j$, $|m-m'|<\frac{b-a}{4N}$ implies that $|\lambda-j|<\frac{1}{4}$. It follows that 
$|\operatorname{mid}(J)-j|<\frac{3}{8}$ by triangle inequality, so $\operatorname{mid}(J)$
must be rounded to~$j$. The case $m'=m_{j+1}$ is analogous.
\end{proof}

The second substep to replace in the QIR method is to check for sign changes in subintervals in Steps~\ref{eqir:evaluation_begin}-\ref{eqir:evaluation_end}.
As before, we set $\omega:=w(I)/N$.
Instead of comparing the signs at $m'$ and $m'\pm\omega$, we choose seven
subdivision points (red crosses in Figure~\ref{fig:qir}), namely
\begin{align}
m^*-\omega,m^*-\frac{7\omega}{8},m^*-\frac{\omega}{2},m^*,
m^*+\frac{\omega}{2},m^*-\frac{7\omega}{8},m^*+\omega.\label{7points}
\end{align}
In case that $m^*=a$ or $m^*=b$, we only choose the $4$ points of (\ref{7points}) that lie in $I$.
For a working precision $\rho$, we evaluate the sign of $f$ at all subdivision points
using interval arithmetic.
If the sign remains unknown for more than one point, we set
$\rho$ to $2\rho$ and retry. After the sign is determined for all except
one of the points, we look for a sign change in the sequence.
If such a sign change occurs, we set the corresponding interval $I^*$ as isolating
and call the \textsc{Aqir} step \emph{successful}. Otherwise, we call the step \emph{failing}
and keep the old isolating interval.
As in the exact case, we square up $N$ after a successful step,
and reduce it to its square root after a failing step.
See Algorithm~\ref{alg:aqir} for a complete description.

Note that, in case of a successful step, the new isolating interval $I^*$ 
satisfies $\frac{1}{8N}w(I)\leq w(I^*)\leq\frac{1}{N}w(I)$.
Also, similar to the bisection method, the function can only be zero at one of the chosen
subdivision points, and the function is guaranteed to be reasonably large for all but one
of them, which leads to a bound on the necessary precision (Lemma~\ref{lem:apxqir}).
The reader might wonder why we have chosen a non-equidistant grid
involving the subdivision points $m^*\pm\frac{7}{8}\omega$.
The reason is that these additional points allow us to 
give a success guarantee of the method under certain assumptions in the following lemma which is the basis to prove quadratic convergence 
if the interval is smaller than a certain threshold (Section~\ref{sec:aqir_sequence}).\\
\begin{lemma}\label{aqir-success}
Let $I=(a,b)$ be an isolating interval for some root $\xi$ of $f$, 
$s=\sgn(f(a))$ and 
$m$ as before. If $|m-\xi|<\frac{b-a}{8N}=\frac{\omega}{8}$,
then \textsc{Aqir}($f,I,N,s$) succeeds.\\
\end{lemma}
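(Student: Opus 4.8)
The goal is to show that if the secant intersection point $m$ satisfies $|m-\xi| < \frac{\omega}{8}$ (where $\omega = \frac{b-a}{N}$), then among the seven (or four) subdivision points in~(\ref{7points}), we can always locate a subinterval on which $f$ exhibits a sign change and whose endpoints all had their signs correctly determined, so that the \textsc{Aqir} step returns a proper isolating subinterval. First I would invoke Lemma~\ref{lem:point_location}: since $m$ lies in some subinterval $[m_j, m_{j+1}]$, the computed grid point $m^*$ equals $m_j$ or $m_{j+1}$, i.e. $m^* \in \{m_j, m_{j+1}\}$ and hence $|m^* - m| \le \omega$. Combined with $|m - \xi| < \frac{\omega}{8}$, this gives $|m^* - \xi| < \omega + \frac{\omega}{8}$, but more usefully I want a sharper statement: I would argue that $\xi$ lies in one of the two subintervals $[m^*-\omega, m^*]$ or $[m^*, m^*+\omega]$ of length $\omega$ adjacent to $m^*$. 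Indeed, $m$ and $\xi$ are within $\frac{\omega}{8}$ of each other, and $m^*$ is the closer of $m_j, m_{j+1}$ to $m$ (or at worst differs from $m$ by at most $\omega$), so $\xi$ is within distance $\omega$ of $m^*$ on the appropriate side.

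**Key case analysis.** The heart of the argument is to pin down \emph{where} $\xi$ sits relative to the seven points $m^*-\omega,\ m^*-\frac{7\omega}{8},\ m^*-\frac{\omega}{2},\ m^*,\ m^*+\frac{\omega}{2},\ m^*+\frac{7\omega}{8},\ m^*+\omega$, and to show that no matter where $\xi$ is, there is a pair of consecutive chosen points straddling $\xi$ — hence with opposite signs of $f$ — such that \emph{both} points are at distance at least (say) $\frac{\omega}{8}$ from $\xi$, so their signs are determined by the interval arithmetic, and moreover at most one of the seven points can fail to have a determined sign. The crucial sub-claim is: $\xi$ cannot be simultaneously close to two non-adjacent chosen points, and whichever single point might be "too close" to $\xi$ (the one whose sign stays unknown), the two points flanking it on either side still straddle $\xi$ and are far enough from $\xi$. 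This is exactly why the non-equidistant points $m^* \pm \frac{7}{8}\omega$ were inserted: if $\xi$ were near $m^*$, the pair $(m^*-\frac{\omega}{2}, m^*+\frac{\omega}{2})$ straddles it; if $\xi$ were near $m^*+\frac{\omega}{2}$, the pair $(m^*, m^*+\frac{7\omega}{8})$ straddles it with both endpoints at distance $\ge \frac{3\omega}{8}$; if $\xi$ were near $m^*+\frac{7\omega}{8}$, the pair $(m^*+\frac{\omega}{2}, m^*+\omega)$ works. I would tabulate these cases (using that $\xi$ lies within $\omega$ of $m^*$) and check in each that (i) a straddling consecutive pair exists, (ii) the two endpoints of that pair are each $\ge \frac{\omega}{8}$ from $\xi$, and (iii) any point whose sign is undetermined is isolated, i.e. at most one such point exists because two undetermined points would both have to be within, say, $\frac{\omega}{8}$ of $\xi$, forcing them to be within $\frac{\omega}{4}$ of each other, but the minimum gap between chosen points is $\frac{\omega}{8}$ — so I need the slightly finer observation that being "close enough to have undetermined sign" forces proximity to $\xi$ that is incompatible with two of the points (here I would rely on the precision-bound machinery of Section~\ref{apxivarithemtic}/Lemma~\ref{lem:apxqir} that quantifies exactly how close to $\xi$ a point must be for its sign to remain unresolved at the final precision, or simply note that the algorithm keeps doubling $\rho$ until only one sign is unknown, and then argue the leftover point must be the one nearest $\xi$).

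**The main obstacle.** The delicate part is (iii): ruling out that \emph{two} of the seven subdivision points keep an undetermined sign, which would cause the while-loop to raise $\rho$ indefinitely past the point where we can certify a sign change. The clean way around this is the structural fact already noted in the text — $f$ can be zero at at most one of the finitely many chosen points (square-freeness plus the points being distinct rationals), so for sufficiently large $\rho$ at most one sign is unknown, and termination is guaranteed; the content of the lemma is then purely combinatorial-geometric: with only one point's sign possibly missing, and $\xi$ located within $\omega$ of $m^*$, one of the straddling consecutive pairs among the remaining six points must have opposite signs. I would finish by noting the symmetric cases $m^*=a$ or $m^*=b$ reduce to the same reasoning using the four in-range points together with the known sign $s=\sgn(f(a))$ at the endpoint, and that in every case the subinterval $I^*$ returned has $\frac{1}{8N}w(I) \le w(I^*) \le \frac{1}{N}w(I)$, completing the proof that the step succeeds.
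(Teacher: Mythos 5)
Your overall plan---use Lemma~\ref{lem:point_location} to locate $\xi$ relative to the seven points of~(\ref{7points}), then argue that at most one sign stays unresolved and that a straddling consecutive pair is found---follows the same outline as the paper's proof. However, there is a concrete gap in the crucial localization step.

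You derive $|m^*-m|\le\omega$ from the first part of Lemma~\ref{lem:point_location} (namely $m^*\in\{m_j,m_{j+1}\}$), which combined with $|m-\xi|<\frac{\omega}{8}$ gives only $|\xi-m^*|<\frac{9\omega}{8}$. You then assert ``$\xi$ is within distance $\omega$ of $m^*$ on the appropriate side,'' but this does not follow from what you've proved, and even the claimed bound $|\xi-m^*|\le\omega$ would be insufficient: if $\xi$ were allowed to lie in, say, $(m^*+\frac{7\omega}{8},\,m^*+\omega]$, then all of the points $m^*-\omega,\ldots,m^*+\frac{7\omega}{8}$ lie on the same side of $\xi$, and only the single point $m^*+\omega$ could carry the opposite sign; if that point's sign happens to remain undetermined, no straddling pair exists and the step need not succeed. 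The missing ingredient is the \emph{second} conclusion of Lemma~\ref{lem:point_location}: in contrapositive form, if $m^*$ is \emph{not} the closer of $m_j,m_{j+1}$ to $m$, then $m$ must be at least $\frac{\omega}{4}$ from the closer endpoint, hence $|m-m^*|\le\frac{3\omega}{4}$; combined with the trivial $|m-m^*|\le\frac{\omega}{2}$ in the other case, one gets $|m-m^*|\le\frac{3\omega}{4}$ always, so $|\xi-m^*|<\frac{3\omega}{4}+\frac{\omega}{8}=\frac{7\omega}{8}$. This \emph{strict} bound is what guarantees that $\xi$ lies strictly between $m^*-\frac{7\omega}{8}$ and $m^*+\frac{7\omega}{8}$, i.e.\ that the leftmost two and the rightmost two of the seven points genuinely flank $\xi$ with opposite signs.

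Once that bound is in place, the rest is simpler than you make it. You need not track which particular point is the undetermined one nor verify that flanking points are $\ge\frac{\omega}{8}$ from $\xi$. The while loop of Algorithm~\ref{alg:aqir} doubles $\rho$ until at most one entry of $S$ is zero (termination is guaranteed since $f$ can vanish at only one of the seven distinct points). At that stage, at least one of the two leftmost entries and at least one of the two rightmost entries are nonzero; being on opposite sides of $\xi$ they have opposite signs, and since the (correct) signs along the grid change exactly once and at most one entry is zero, the algorithm's test $S[v]\cdot S[w]=-1$ with $w=v+1$ or $w=v+2,\,S[v+1]=0$ necessarily fires. That is the whole content of the success argument.
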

\begin{proof}
Let $m^*$ be the subdivision point selected by the \textsc{Aqir} method.
We assume that $m^*\notin\{a,b\}$; otherwise, a similar (simplified) argument
applies.
By Lemma~\ref{lem:point_location},
$m\in [m^*-\frac{3}{4}\omega,m^*+\frac{3}{4}\omega]$ and, thus, $\xi\in (m^*-\frac{7}{8}\omega,m^*+\frac{7}{8}\omega)$.
It follows that the leftmost two points of (\ref{7points}) have
a different sign than the rightmost two points of (\ref{7points}).
Since the sign of $f$ is evaluated for at least one value on each side, the
algorithm detects a sign change and, thus, succeeds.
\end{proof}

\begin{algorithm}[t]
\caption{Approximate Quadratic interval refinement}
\label{alg:aqir}
\textsc{Input:} $f\in\R[x]$ square-free, $I=(a,b)$ isolating, $N=2^{2^i}\in\N$, $s=\sgn(f(a))$
\newline
\textsc{Output:} $(J,N')$ with $J\subseteq I$ isolating and $N'\in\N$
\begin{algorithmic}[1]
\Procedure {\textsc{Aqir}}{$f,I=(a,b),N$}
  \State \textbf{if} $N=2$, \textbf{return} $($\Call{Approximate\_Bisection}{$f,I,s$},$4)$.
  \State $\omega\gets\frac{b-a}{N}$
  \State $\rho\gets 2$
  \State \label{aqir:point_location}\label{algstep:ia2}
         \textbf{while} $J\gets\IBox(N\frac{f(a)}{f(a)-f(b)},\rho)$ has width
         $>\frac{1}{4}$, set $\rho\gets 2\rho$
  \State $m^*\gets a+\mathrm{round}(\operatorname{mid}(J))\cdot\omega$
  \State \textbf{if} $m^*=a$, $s\gets 4, V\gets [m^*,m^*+\frac{1}{2}\omega,m^*+\frac{7}{8}\omega,m^*+\omega], S\gets [s,0,0,0]$
  \State \textbf{if} $m^*=b$, $s\gets 4, V\gets [m^*-\omega,m^*-\frac{7}{8}\omega,m^*-\frac{1}{2}\omega,m^*], S\gets [0,0,0,-s]$
  \State \textbf{if} $a<m^*<b$, $s\gets 7, V\gets [m^*-\omega,m^*-\frac{7}{8}\omega,m^*-\frac{1}{2}\omega,m^*,m^*+\frac{1}{2}\omega,m^*+\frac{7}{8}\omega,m^*+\omega], S\gets [0,0,0,0,0,0,0]$
  \State $\rho\gets 2$
  \While{$S$ contains more than one zero}\label{aqir:evaluation_begin}
  \For{i=1,\ldots,s}
  \State \label{algstep:ia3} If $S[i]=0$, set $S[i]\gets\sgn\,\IBox(f(V[i]),\rho)$
  \EndFor
  \State $\rho\gets 2\rho$\label{aqir:evaluation_end}
  \EndWhile
  \State \textbf{If} $\exists v,w: S[v]\cdot S[w]=-1\wedge (v+1=w\vee (v+2=w\wedge S[v+1]=0))$
    \textbf{return} $((V[v],V[w]),N^2)$
  \State \textbf{Otherwise, return} $(I,\sqrt{N})$
\EndProcedure
\end{algorithmic}
\end{algorithm}

\section{Analysis of an AQIR step}\label{apxivarithemtic}

The running time of an \textsc{Aqir} step depends on the maximal precision 
$\rho$ needed in the two while loops (Step~\ref{aqir:point_location}, 
Steps~\ref{aqir:evaluation_begin}-\ref{aqir:evaluation_end}) of Algorithm~\ref{alg:aqir}. 
The termination criterion of both loops is controlled by evaluations
of the form $\IBox(E,\rho)$, where $E$ is some polynomial expression
and $\rho$ is the current working precision.

We specify recursively what we understand 
by evaluating $E$ in precision $\rho$ with interval arithmetic.
For that, we define $\down(x,\rho)$ for $x\in\R$ and $\rho\in\N$ to be the
maximal $x_0\leq x$ such that $x_0=\frac{k}{2^\rho}$ for some integer $k$.
The same way $\up(x,\rho)$ is the minimal $x_0\geq x$ with $x_0$ of the same form.
We extend this definition to arithmetic expressions by the following rules
(we leave out $\rho$ for brevity):
\begin{eqnarray*}
\down(E_1+E_2)&:=&\down(E_1)+\down(E_2)\\
\up(E_1+E_2)&:=&\up(E_1)+\up(E_2)\\
\down(E_1\cdot E_2)&:=&\down(\min\{\down(E_1)\down(E_2),\up(E_1)\up(E_2),\\
&&\up(E_1)\down(E_2),\down(E_1)\up(E_2)\})\\
\up(E_1\cdot E_2)&:=&\up(\max\{\down(E_1)\down(E_2),\down(E_1)\up(E_2),\\
&&\up(E_1)\down(E_2),\up(E_1)\up(E_2)\})\\
\down(1/E_1)&:=&\down(1/\up(E_1))\\
\up(1/E_1)&:=&\up(1/\down(E_1))
\end{eqnarray*}
Finally, we define the interval $\IBox(E,\rho):=[\down(E,\rho),\up(E,\rho)]$.
By definition, the exact value of $E$ is guaranteed to be contained
in $\IBox(E,\rho)$. 
We assume that polynomials $f\in\R[x]$ are evaluated according to the Horner scheme, and when evaluating $f(c)$ with precision $\rho$, the above rules apply in each arithmetic step. The next lemma provides a worst case bound on the size of the resulting interval $\IBox(f(c),\rho)$ under certain conditions. We further remark that, in an actual implementation, $\IBox(E,\rho)$ is usually much smaller than the worst case bound derived here. Nevertheless, our complexity analysis is based on the latter bound. Throughout the following considerations , $\Gamma\in\mathbb{N}$ denotes an integer upper bound on the root bound $\Gamma_f$, that is, $\Gamma\ge\Gamma_f$, and, in particular $\log|z_i|\le\Gamma$ for all roots $z_i$ of $f$.\\

\begin{lemma}\label{intarithmetic}
Let $f$ be a polynomial as in (\ref{polyf}), $c\in\R$
with $|c|\leq 2^{\RB+2}$, and $\rho\in\N$.
Then, 
\begin{eqnarray}
|f(c)-\down(f(c),\rho)|\leq 2^{-\rho+1}(d+1)^{2}2^{\tau+ d(\RB+2)}\\
|f(c)-\up(f(c),\rho)|\leq 2^{-\rho+1}(d+1)^{2}2^{\tau+d(\RB+2)}
\end{eqnarray}
In particular, $\IBox (f(c),\rho)$ has a width of at most
$2^{-\rho+2}(d+1)^{2}2^{\tau+d(\RB+2)}$.\\
\end{lemma}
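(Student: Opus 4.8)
The plan is to track how the interval arithmetic error propagates through the Horner evaluation of $f(c) = (\cdots((a_d c + a_{d-1})c + a_{d-2})c + \cdots)c + a_0$. Let $H_d := a_d$ and $H_k := H_{k+1}\cdot c + a_k$ for $k = d-1, \ldots, 0$, so that $H_0 = f(c)$. In the interval arithmetic version, each $H_k$ becomes an interval $\IBox(H_k)$ of some width $w_k$, and I want to prove inductively that $w_k$ grows in a controlled way. Writing $C := 2^{\RB+2}$ as the bound on $|c|$, I will show by downward induction on $k$ that every endpoint of $\IBox(H_k)$ has absolute value at most (roughly) $(d-k+1)\cdot 2^{\tau}\cdot C^{d-k}$, and simultaneously that $w_k \le 2^{-\rho}\cdot(\text{something of the same order})$.

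First I would set up the induction base: $\IBox(a_d,\rho)$ has width at most $2^{-\rho}$ (in fact, if $a_d$ itself is a dyadic approximation the width may be $0$, but the safe bound is $2^{-\rho}$ since each operand gets snapped to the grid), and its endpoints are bounded by $2^\tau$ in absolute value up to a $2^{-\rho}$ correction. For the inductive step, I pass from $\IBox(H_{k+1})$ to $\IBox(H_{k+1}\cdot c + a_k)$. The multiplication by $c$ scales the width by at most $|c|\le C$ and introduces an additional rounding error of at most $2^{-\rho}$ from the $\down/\up$ snapping; the magnitude of the endpoints is multiplied by at most $C$ and then shifted by at most $|a_k| + 2^{-\rho} \le 2^\tau + 2^{-\rho}$ from adding $a_k$. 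Keeping track of the accumulated width gives a recurrence of the shape $w_k \le C\cdot w_{k+1} + (\text{magnitude of } H_{k+1})\cdot 2^{-\rho} + 2^{-\rho}$; since there are $d$ such steps and the magnitudes are each bounded by $(d+1)2^{\tau}C^{d}$, unrolling the recurrence yields $w_0 \le 2^{-\rho}\cdot(d+1)\cdot(d+1)2^{\tau}C^{d} \cdot(\text{geometric factor in }C)$. Collecting the crude constants and using $C = 2^{\RB+2}$, one gets $w_0 \le 2^{-\rho+1}(d+1)^2 2^{\tau + d(\RB+2)}$, and then $|f(c) - \down(f(c),\rho)| \le w_0$ because $f(c)\in\IBox(f(c),\rho) = [\down(f(c),\rho),\up(f(c),\rho)]$; the bound for $\up$ is identical, and the width bound is the sum of the two one-sided bounds, i.e.\ $2^{-\rho+2}(d+1)^2 2^{\tau+d(\RB+2)}$.

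The main obstacle I anticipate is bookkeeping the geometric series in $C = |c|$ correctly so that the final exponent is exactly $d(\RB+2)$ rather than, say, $(d+1)(\RB+2)$: the width picks up a factor $C$ at each of the $d$ Horner multiplications, and one must check that the "magnitude of $H_{k+1}$" term — which itself already contains $C^{d-k-1}$ — combines with the remaining $C$-scaling to give at most $C^{d}$ overall, not more. The cleanest way is probably to prove the two claims (magnitude bound and width bound) in a single simultaneous induction with slightly generous constants, e.g.\ magnitude of $H_k \le (d+1)2^\tau C^{d-k}$ and $w_k \le (d+1)2^\tau C^{d-k}\cdot 2^{-\rho+1}$, so that the additive rounding terms ($2^{-\rho}$ from each snap) are absorbed into the $(d+1)$ factor. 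A secondary subtlety is that $C = 2^{\RB+2} \ge 4$ so all geometric sums $\sum C^i$ are dominated by their top term up to a factor $\le 2$, which is where one of the factors of $2$ in $2^{-\rho+1}$ resp.\ $2^{-\rho+2}$ comes from; I would state this monotonicity remark explicitly rather than leave it implicit. Everything else is routine interval-arithmetic estimation.
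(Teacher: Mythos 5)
Your proposal follows the same route as the paper's own proof: induction along the Horner recursion (the paper phrases it as induction on $\deg f$, writing $f(c)=a_0+c\,g(c)$ with $\deg g=d-1$, which is your downward induction on $k$ in disguise), tracking simultaneously a magnitude bound on the partial Horner values and a width bound on the accumulated rounding error, with the two $(d+1)$ factors arising exactly as you describe. The only wrinkle to settle is the one you already flag as the main obstacle: with $C:=2^{\RB+2}$, your tentative magnitude invariant $|H_k|\le (d+1)2^\tau C^{d-k}$ is one power of $C$ short of closing the inductive step once the grid-snap on $c$ itself is accounted for, whereas the paper carries the slightly looser $C^{d-k+1}$ (stated there as the bound $d\,2^{\tau+d(\RB+2)}$ on $|\up(g(c))|$ and $|\down(g(c))|$ for the degree-$(d-1)$ tail $g$), which is exactly the ``slightly generous constants'' adjustment you anticipate.
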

\begin{proof}
We do induction on $d$. The statement is clearly true for $d=0$.
For $d>0$, 
we write $f(c)=a_0+cg(c)$ with $a_0\in\R$ the constant coefficient of $f$ and $g$ of degree $d-1$.
Note that, for any real value $x$, 
$|\down(x,\rho)-x|< 2^{-\rho}$, same for $\up$. Therefore, we can bound
as follows (again, leaving $\rho$ out for simplicity):
\begin{align*}
|f(c)-\down(f(c))| &= |a_0+cg(c)-\down(a_0+cg(c))|= |a_0+cg(c)-\down(a_0)-\down(cg(c))|\\
&\leq |cg(c)-\down(cg(c))| + 2^{-\rho}
\end{align*}
Note that $\down(c\cdot g(c))=\down(H_1(c)\cdot H_2(g(c)))$ 
where $H_{1,2}=\down$ or $H_{1,2}=\up$.
Moreover, we can write $H_1(c)=c-\eps$ with $|\eps|<2^{-\rho}$.
Therefore, we can rearrange
\begin{eqnarray*}
&& |cg(c)-\down(cg(c))| + 2^{-\rho}\leq |cg(c)-(c-\eps)\cdot H_2(g(c))| +2^{-\rho+1}\\
&\leq& |cg(c)-c\cdot H_2(g(c))| + |\eps| \cdot |H_2(g(c))| + 2^{-\rho+1}\\
&\leq& |c|\cdot|g(c)-H_2(g(c))| + 2^{-\rho} |H_2(g(c))| +  2^{-\rho+1}
\end{eqnarray*}
By a simple inductive proof on the degree, we can show that
both $|\up(g(c))|$ and $|\down(g(c))|$ are bounded by
$d 2^{\tau+d(\RB+2)}$. Using that and the induction hypothesis yields
\begin{eqnarray*}
&& |c|\cdot|g(c)-h(g(c))| + 2^{-\rho} |H_2(g(c))| +  2^{-\rho+1}\\
&<& 2^{\RB+2}2^{-\rho+1}d^{2}2^{\tau+(d-1)(\RB+2)}+ 2^{-\rho}d2^{\tau+d(\RB+2)} +  2^{-\rho+1}\\
&\leq& 2^{-\rho+1}(d^2+d+1)2^{\tau+d(\RB+2)}\leq 2^{-\rho+1}(d+1)^{2}2^{\tau d}
\end{eqnarray*}
The bound for $|f(c)-\up(f(c))|$ follows in the same way.
\end{proof}\vspace{0.25cm}

For the sake of simplicity, we decided to assume
fixed-point arithmetic, that means, $\rho$ determines the number
of bits \emph{after the binary point.} 
We refer the interested reader to~\cite[Thm.~12]{ms-cpgeneral2011},
where a corresponding result for floating-point arithmetic is given.

We analyze the required working precision of approximate bisection
and of an \textsc{Aqir} step next. We exploit that, whenever we evaluate
$f$ at $t$ subdivision points, $t-1$ of them have a certain minimal distance
to the root in the isolating interval. The following lemma gives
a lower bound on $|f(x_0)|$ for such a point $x_0$, given that it is
sufficiently far away from any other root of $f$.\\

\begin{lemma}\label{lem:sizeoff}
Let $f$ be as in (\ref{polyf}), $\xi=z_{i_0}$ a real root of $f$ and $x_0$ be a real value with distance $|x_0-z_i|\ge \frac{\sigma_i}{4}$ to all \emph{real} roots $z_i\neq z_{i_0}$. Then,
\[
|f(x_0)|>|\xi-x_0|\cdot 2^{-(2d+\RB+\Sig)}.
\]
(recall the notations from Section~\ref{sec:intro} for the definitions of $\sigma_i$ and $\Sig$)\\
\end{lemma}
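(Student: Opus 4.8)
The natural starting point is the factorization $f(x) = a_d \prod_{j=1}^{d}(x - z_j)$, which gives
\[
|f(x_0)| = |a_d| \cdot |x_0 - \xi| \cdot \prod_{z_j \neq \xi} |x_0 - z_j|.
\]
Since $|a_d| \ge 1$, it suffices to lower-bound the product $\prod_{z_j \neq \xi}|x_0 - z_j|$ by $2^{-(2d + \Gamma_f + \Sigma_f)}$. I would split the remaining $d-1$ roots into three groups: the real roots, the non-real roots, and (implicitly) handle the trivial constant factors. The key asymmetry in the hypothesis is that we only control the distance of $x_0$ to the \emph{real} roots $z_i \neq \xi$; for non-real roots we have no such guarantee, so those factors must be bounded purely from above in modulus, not from below in any refined way.

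For the non-real roots $z_j$, the only available bound is $|x_0 - z_j| \ge |\mathrm{Im}(z_j)| \ge \sigma_j / \text{(something)}$; more robustly, since $\xi$ is real and $z_j$ is the conjugate of some other root, one has $|z_j - \bar z_j| \le 2|z_j - \xi|$ is false in general — instead I would use that for a real point $x_0$, $|x_0 - z_j|^2 = (x_0 - \mathrm{Re}\, z_j)^2 + (\mathrm{Im}\, z_j)^2 \ge (\mathrm{Im}\, z_j)^2$, and $2|\mathrm{Im}\, z_j| \ge \sigma_j$ because the conjugate $\bar z_j$ is another root at distance $2|\mathrm{Im}\, z_j|$ from $z_j$. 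Hence $|x_0 - z_j| \ge \sigma_j/2$ for non-real $z_j$. For real roots $z_i \neq \xi$, the hypothesis gives $|x_0 - z_i| \ge \sigma_i/4$ directly. So in all cases $|x_0 - z_j| \ge \sigma_j/4$, and therefore
\[
\prod_{z_j \neq \xi} |x_0 - z_j| \ge \prod_{z_j \neq \xi} \frac{\sigma_j}{4} \ge 4^{-(d-1)} \prod_{z_j \neq \xi} \sigma_j \ge 2^{-2d} \cdot 2^{-\Sigma_f},
\]
using $\prod_j \sigma_j^{-1} = 2^{\Sigma_f}$ (so $\prod_{z_j \neq \xi}\sigma_j \ge \prod_{\text{all } j}\sigma_j = 2^{-\Sigma_f}$, discarding the one extra factor $\sigma_{i_0} \le$ something — here I need $\sigma_{i_0} \le 2^{\Gamma_f + 1}$ or similar to absorb it, which is where the $\Gamma_f$ term enters). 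Actually the cleanest route: keep the factor $\sigma_{i_0}^{-1}$ out, note $\sigma_{i_0} \le \mathrm{diam}(\text{roots}) \le 2 \cdot 2^{\Gamma_f}$, so $\prod_{z_j \neq \xi} \sigma_j = 2^{-\Sigma_f}/\sigma_{i_0} \ge 2^{-\Sigma_f - \Gamma_f - 1}$.

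Combining: $|f(x_0)| \ge |x_0 - \xi| \cdot 2^{-2d} \cdot 2^{-\Sigma_f - \Gamma_f - 1} = |x_0 - \xi| \cdot 2^{-(2d + 1 + \Gamma_f + \Sigma_f)}$, which is within the claimed bound $|x_0 - \xi| \cdot 2^{-(2d + \Gamma_f + \Sigma_f)}$ up to adjusting a constant (and the inequality is strict from $|a_d| \ge 1$ combined with at least one strict inequality, or simply by noting the stated bound has slack). The main obstacle I anticipate is the bookkeeping for non-real roots — making sure the conjugate-pair argument $2|\mathrm{Im}\,z_j| \ge \sigma_j$ is airtight (it relies on $\bar z_j$ also being a root of the real polynomial $f$, which holds, and on $\sigma_j \le |z_j - \bar z_j|$, which is exactly the definition of separation applied to the conjugate) — and carefully accounting for which root's $\sigma$ gets dropped from the product so that the $\Gamma_f$ term is correctly sourced. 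Everything else is a routine chain of inequalities.
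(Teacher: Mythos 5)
Your proof is correct and matches the paper's approach: factor $f$, lower-bound each factor $|x_0 - z_j|$ by $\sigma_j/4$ (using the conjugate-pair argument $|x_0 - z_j|\ge|\mathrm{Im}(z_j)|\ge\sigma_j/2$ for non-real roots), and absorb the dropped $\sigma_{i_0}^{-1}$ via $\sigma_{i_0}\le 2^{\RB+1}$. The small factor-of-two deficit you flagged disappears once you keep $4^{-(d-1)} = 2^{-2d+2}$ instead of relaxing it to $2^{-2d}$, which recovers the stated exponent with room to spare.
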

\begin{proof}
For each non-real root $z_i$ of $f$, there exists a complex conjugate root $\bar{z}_i$ and, thus, we have $|x_0-z_i|\ge\operatorname{Im}(z_i)\ge\frac{\sigma_i}{2}>\frac{\sigma_i}{4}$ for all $i=m+1,\ldots,d$ as well. It follows that
\begin{align*}
|f(x_0)|&=|a_d\prod_{i=1}^d (x_0-z_i)|=|a_d|\cdot |\xi-x_0|\cdot\prod_{i=1,\ldots,d:i\neq i_0}|x_0-z_i|\\
&\ge |\xi-x_0|\cdot\frac{4}{\sigma_{i_0}}\cdot\prod_{i=1}^d \frac{\sigma_i}{4}> |\xi-x_0|\cdot 2^{-2d-\RB}\cdot2^{-\Sig}, 
\end{align*}
where the last inequality uses that $|z_i|\leq 2^{\RB}$ and, thus, $\sigma(z_{i})\le 2^{\RB+1}$.
\end{proof}

We next analyze an approximate bisection step.\\

\begin{lemma}\label{lem:apxbisection}
Let $f$ be a polynomial as in (\ref{polyf}), $I=(a,b)\subset (-2^{\RB+2},2^{\RB+2})$ be an isolating 
interval for a root $\xi=z_{i_0}$ of $f$ and $s=\sgn(f(a))$. Then, Algorithm~\ref{alg:abisect} 
applied on $(f,I,s)$ requires a maximal precision of
\begin{align*}
\rho_0&:=2\log (b-a)^{-1}+4\log(d+1)+8d+10+2(d+1)\RB+\tau+2\Sig\\
&=O(\log(b-a)^{-1}+\tau+d\RB+\Sig),
\end{align*}
and its bit complexity is bounded by $\tilde{O}(d(\log(b-a)^{-1}+\tau+d\RB+\Sig))$.\\
\end{lemma}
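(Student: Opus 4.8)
The plan is to bound the largest precision $\rho$ reached in the \texttt{while}-loop of Algorithm~\ref{alg:abisect} by identifying, for the three subdivision points $m_1,m_2,m_3$ (the points $V[2],V[3],V[4]$), which two of them are guaranteed to be far from the isolating root $\xi$, and then quantifying how large the precision must be so that the interval $\IBox(f(m_j),\rho)$ produced by Lemma~\ref{intarithmetic} is strictly smaller in absolute terms than the true value $|f(m_j)|$, so that its sign is decided. Once we know $\rho$ is at most (a constant times) the quantity $\rho_0$, doubling of $\rho$ overshoots by at most a factor $2$, and each evaluation in precision $\rho$ of a degree-$d$ polynomial by the Horner scheme costs $\Otilde(d\cdot\rho)$ bit operations (multiplications of numbers with $\Otilde(\rho+\tau+d\RB)$ bits), summed over the $O(\log\rho_0)$ iterations; this yields the claimed complexity bound $\Otilde(d(\log(b-a)^{-1}+\tau+d\RB+\Sig))$.

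First I would observe that the three interior subdivision points are spaced $\tfrac{b-a}{4}$ apart inside $I$, and $\xi$ lies in $I$, so at most one of $m_1,m_2,m_3$ can be within distance $\tfrac{b-a}{8}$ of $\xi$; hence at least two of them, say $m_j$, satisfy $|m_j-\xi|\ge\tfrac{b-a}{8}$. Moreover, since $I$ is isolating, $I$ contains no real root other than $\xi$, so every real root $z_i\neq\xi$ lies outside $I$; combined with the fact that the $m_j$ are deep inside $I$ I would argue $|m_j-z_i|$ is comfortably larger than $\tfrac{\sigma_i}{4}$ — actually the cleanest route is to note $|m_j - z_i| \ge$ the distance from $m_j$ to the endpoint of $I$ nearest $z_i$; one has to be slightly careful here, but $\sigma_i\le w(I)$ is impossible for $z_i$ outside $I$ unless... in fact the honest statement is that $z_i\notin I$ forces $|m_j-z_i|\ge \tfrac{b-a}{4}\ge \tfrac{\sigma_{i_0}}{4}\cdot$(something), and for the other roots a separation-type bound applies. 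In any case the hypotheses of Lemma~\ref{lem:sizeoff} are met for these $\ge 2$ points, giving
\[
|f(m_j)| > |m_j-\xi|\cdot 2^{-(2d+\RB+\Sig)} \ge \tfrac{b-a}{8}\cdot 2^{-(2d+\RB+\Sig)}.
\]

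Next I would combine this with the width bound from Lemma~\ref{intarithmetic}: $\IBox(f(m_j),\rho)$ has width at most $2^{-\rho+2}(d+1)^2 2^{\tau+d(\RB+2)}$, and its sign is determined as soon as this width is less than $|f(m_j)|$, which by the display above is implied by
\[
2^{-\rho+2}(d+1)^2 2^{\tau+d(\RB+2)} < \tfrac{b-a}{8}\cdot 2^{-(2d+\RB+\Sig)}.
\]
Taking logarithms and solving for $\rho$ gives a threshold of the form $\log(b-a)^{-1}+2\log(d+1)+d(\RB+2)+\tau+2d+\RB+\Sig+5$; the precise constants in the statement of $\rho_0$ come from additionally accounting for the fact that the loop also needs \emph{two} signs (not just one) determined, which forces waiting until the second-farthest point also resolves — but both of the two far points obey the same bound, so they resolve at the same threshold — and from the doubling of $\rho$ (which at worst doubles the final value, explaining the factor $2$ on $\log(b-a)^{-1}$ and $\Sig$ and the inflated constants). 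I expect the main obstacle to be the careful bookkeeping in the second bullet: justifying rigorously that the two "far" points really do satisfy $|m_j-z_i|\ge\sigma_i/4$ for \emph{all} real roots $z_i\neq\xi$ simultaneously (not just for $\xi$), since a priori another real root could sit just outside $I$ very close to an endpoint; this needs the observation that the two far points can be chosen among $\{m_1,m_3\}$ versus $\{m_2\}$ appropriately, or a slightly more generous distance estimate — but it does not affect the asymptotic bound, only the constants. Everything else is a routine logarithmic manipulation plus the standard cost accounting for iterated-precision Horner evaluation.
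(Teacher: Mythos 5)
Your overall plan mirrors the paper's: identify the two subdivision points that are at distance at least $\tfrac{b-a}{8}$ from $\xi$, invoke Lemma~\ref{lem:sizeoff} to lower-bound $|f(m_j)|$, compare against the width bound of Lemma~\ref{intarithmetic} to get a precision threshold, and account for the precision-doubling and the cost of iterated Horner evaluation. The threshold you derive and the complexity bookkeeping are essentially the paper's.

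However, there is a genuine gap at exactly the point you flag: you never actually establish that the hypotheses of Lemma~\ref{lem:sizeoff} hold, i.e.\ that $|m_j - z_i| \ge \tfrac{\sigma_i}{4}$ for every real root $z_i \neq \xi$. You hedge (``one has to be slightly careful here'', ``I expect the main obstacle\ldots''), float the idea of choosing different subsets of subdivision points for different roots, and then assert the conclusion anyway. That route is a red herring; the paper closes the gap in one line, uniformly for \emph{all three} interior points $m_j = a + j\cdot\tfrac{b-a}{4}$ ($j=1,2,3$) and \emph{all} real roots $z_i \neq \xi$, with no case split. The key observations are: (i) since $z_i \notin I$ and each $m_j$ has distance at least $\tfrac{b-a}{4}$ from both endpoints of $I$, we have $|m_j - z_i| > \tfrac{b-a}{4}$; and (ii) since $|\xi - m_j| \le \tfrac{3}{4}(b-a) \le 3|m_j - z_i|$, the triangle inequality gives $\sigma_i \le |\xi - z_i| \le |\xi - m_j| + |m_j - z_i| \le 4|m_j - z_i|$, i.e.\ $|m_j - z_i| \ge \tfrac{\sigma_i}{4}$. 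Your worry about ``a root sitting just outside $I$ near an endpoint'' is already handled by (i): even then $m_j$ is a quarter-interval away from that endpoint, and (ii) then converts that absolute gap into the required $\sigma_i/4$ lower bound. Without this step, the application of Lemma~\ref{lem:sizeoff} — which is the crux of the whole proof — is unjustified, so the proposal as written does not constitute a proof. Note also that the ``at least two out of three'' distinction applies only to the distance to $\xi$ (used to lower-bound $|f(m_j)|$ by $\tfrac{b-a}{8}\cdot 2^{-(2d+\RB+\Sig)}$), not to the separation condition, which holds for all three points.
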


\begin{proof}
Consider the three 
subdivision points $m_j:=a+j\cdot\frac{b-a}{4}$, where $1\le j \le 3$, and an 
arbitrary real root $z_i\neq \xi$ of $f$. 
Note that $|m_j-z_i|>\frac{b-a}{4}$ because the segment from $m_j$ to $z_i$ 
spans at least a quarter of $(a,b)$.
Moreover, $|\xi-m_j|\leq \frac{3}{4}(b-a)$, and so
$$\sigma_i\leq |\xi-z_i| \leq |\xi-m_j| + |m_j-z_i| \leq \frac{3}{4}(b-a)+|m_j-z_i| \leq 4|m_j-z_i|.$$
It follows that $m_j$ has a distance to $z_i$ of at least $\frac{\sigma_i}{4}$.
Hence, we can apply Lemma~\ref{lem:sizeoff} to each $m_j$, that is, we have $|f(m_j)|>|\xi-m_j|\cdot 2^{-(2d+\RB+\Sig)}$. Since the signs of $f$ at the endpoints of $I$ are known, it suffices to compute the signs of $f$ at two of the three subdivision points.  For at least two of these points, the distance of $m_j$ to $\xi$ is at least $\frac{b-a}{8}$, thus, we have $|f(m_j)|>|b-a|\cdot 2^{-(2d+3+\RB+\Sig)}$ for at least two points. Then, due to Lemma~\ref{intarithmetic}, we can use interval arithmetic with a precision $\rho$ to compute these signs if $\rho$ satisfies
\[
2^{-\rho+2}(d+1)^2 2^{\tau+d(\RB+2)}\le(b-a)\cdot 2^{-(2d+3+\RB+\Sig)},
\]
which is equivalent to $\rho\geq\frac{\rho_0}{2}$.
Since we double the precision in each step, we will eventually succeed with a precision 
smaller than $\rho_0$.
The bit complexity for an arithmetic operation with fixed precision $\rho$ is $\tilde{O}(\rho+d\tau)$. Namely, since the absolute 
value of each subdivision point is bounded by $O(\tau)$, the results in the intermediate steps have magnitude $O(d\tau)$ and we consider
$\rho$ bits after the binary point.
At each subdivision point, we have to perform $O(d)$ arithmetic operations for the computation of $f(m_j)$, thus, the costs for these evaluations are bounded by $\tilde{O}(d(d\tau+\rho))$ bit operations. Since we double the precision in each iteration, the total costs are dominated by the last successful evaluation and, thus, we have to perform $\tilde{O}(d(\rho_0+d\tau))=\tilde{O}(d(\log(b-a)^{-1}+d\tau+\Sig))$ bit operations.
\end{proof}\vspace{0.25cm}

We proceed with the analysis of an \textsc{Aqir} step. 
In order to bound the required precision, we need additional properties of the isolating interval.\\
\begin{definition}\label{def:normal}
Let $f$ be as in (\ref{polyf}), $I:=(a,b)$ be an isolating interval of a root $\xi$ of $f$. We call $I$ \emph{normal}\footnote{The reader may notice that the definition of "normal" depends on the upper bound $\Gamma$ on $\Gamma_f$. Throughout our argument, we assume that such an initial $\Gamma$ is given. We will finally choose a $\Gamma$ which approximates $\Gamma_f$ up to an (addative) error of $O(\log d)$.} if\vspace{0.25cm}
\begin{itemize}
\item $I\subseteq (-2^{\RB+2},2^{\RB+2})$,
\item $|p-z_i|>\frac{\sigma_i}{4}$ for every $p\in I$ and $z_i\neq \xi$, and
\item $\min\{|f(a)|,|f(b)|\}\geq 2^{-(28+2\tau+17d\RB+2\Sig-5\log(b-a))}.$\\
\end{itemize}
\end{definition}
In simple words, a normal isolating interval has a reasonable distance to any other root of $f$,
and the function value at the endpoints is reasonably large.
We will later see that 
it is possible to get normal intervals 
by a sequence of approximate bisection steps.\\  

\begin{lemma}\label{lem:apxqir}
Let $f$ be a polynomial as in (\ref{polyf}), $I=(a,b)$ be a normal isolating 
interval for a root $\xi=z_{i_0}$ of $f$ with $s=\sgn(f(a))$,
and let $N\leq 2^{2(\RB+4-\log(b-a))}$.
Then, the \textsc{Aqir} step for $(f,I,N,s)$ requires a precision of at most
$$\rho_{max}:=87d\tau+17d\RB+4\Sig-14\log(b-a)$$
and, therefore, its bit complexity is bounded by
$$\Otilde(d(\tau+d\RB+\Sig-\log(b-a))).$$
Moreover, the returned interval is again normal.\\
\end{lemma}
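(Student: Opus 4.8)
The plan is to bound the maximal working precision $\rho$ needed in the two while loops of Algorithm~\ref{alg:aqir} separately, take the larger of the two thresholds, and then translate that precision bound into a bit-complexity bound exactly as in the proof of Lemma~\ref{lem:apxbisection}. The last assertion—that the returned interval is again normal—will be handled at the end using the three defining properties of normality together with the shrinkage estimate $\frac{1}{8N}w(I)\le w(I^*)\le \frac{1}{N}w(I)$ recorded just before Lemma~\ref{aqir-success}.

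\textbf{Step 1: the point-location loop (Step~\ref{aqir:point_location}).} Here we must reach a working precision at which $\IBox(N\frac{f(a)}{f(a)-f(b)},\rho)$ has width at most $\frac14$. Since $I$ is normal, $|f(a)|$ and $|f(b)|$ are bounded below by $2^{-(28+2\tau+17d\RB+2\Sig-5\log(b-a))}$, and since $a,b\in(-2^{\RB+2},2^{\RB+2})$, Lemma~\ref{intarithmetic} gives an upper bound $2^{\tau+d(\RB+2)}$-ish on $|f(a)|,|f(b)|$ and hence on $|f(a)-f(b)|$. Propagating the interval-arithmetic error through one division and one multiplication by the integer $N$ (whose size is controlled by the hypothesis $N\le 2^{2(\RB+4-\log(b-a))}$, i.e.\ $\log N=O(\RB-\log(b-a))$), the width of the computed interval is at most $N$ times the relative error, which is at most $2^{-\rho+\mathrm{poly}}$ times the inverse of $\min\{|f(a)|,|f(b)|\}^2$. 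Forcing this to be $\le \frac14$ gives a linear condition on $\rho$ of the form $\rho\ge c_1 d\tau + c_2 d\RB + c_3\Sig - c_4\log(b-a)$; the constants can be read off directly from the two cited lemmas and from $\log N$.

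\textbf{Step 2: the sign-evaluation loop (Steps~\ref{aqir:evaluation_begin}--\ref{aqir:evaluation_end}).} This mirrors the proof of Lemma~\ref{lem:apxbisection}. We have up to seven subdivision points from (\ref{7points}), all within the subinterval $[m^*-\omega,m^*+\omega]$ of width $2\omega=\frac{2w(I)}{N}$. Because $I$ is normal, each such point $p$ satisfies $|p-z_i|>\frac{\sigma_i}{4}$ for every real root $z_i\ne\xi$, so Lemma~\ref{lem:sizeoff} applies and gives $|f(p)|>|\xi-p|\cdot 2^{-(2d+\RB+\Sig)}$. Among the seven points, at most one can be within distance $\frac{\omega}{8}$ of $\xi$ (they are spread out by construction; indeed since $\xi$ lies in an interval of width $\le\frac18 w(I)=\frac{N\omega}{8}$... more carefully, the consecutive gaps among the seven points are all $\ge\frac{\omega}{8}$, so at most one point lies within $\frac{\omega}{16}$ of $\xi$), hence for at least six of them $|\xi-p|\ge\frac{\omega}{16}=\frac{w(I)}{16N}$, giving $|f(p)|\ge \frac{w(I)}{16N}\cdot 2^{-(2d+\RB+\Sig)}$. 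Plugging $\log N = O(\RB-\log w(I))$ and using Lemma~\ref{intarithmetic} for the interval width of $\IBox(f(p),\rho)$, the loop terminates once $2^{-\rho+2}(d+1)^2 2^{\tau+d(\RB+2)}$ drops below this lower bound, which again is a linear condition $\rho\ge c_1' d\tau + c_2' d\RB + c_3'\Sig - c_4'\log w(I)$. Taking $\rho_{max}$ to be the stated combination $87d\tau+17d\RB+4\Sig-14\log(b-a)$ dominates both thresholds (with room to spare), and doubling the precision each iteration means the last successful precision is $<\rho_{max}$. The bit-complexity bound $\Otilde(d(\tau+d\RB+\Sig-\log(b-a)))$ then follows as in Lemma~\ref{lem:apxbisection}: $O(d)$ Horner operations per point, $O(1)$ points, each operation costing $\Otilde(\rho_{max}+d\tau)=\Otilde(\rho_{max})$ bit operations.

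\textbf{Step 3: the new interval is normal.} If the step fails, the interval is unchanged and there is nothing to prove. If it succeeds, $I^*=(V[v],V[w])\subseteq I\subseteq(-2^{\RB+2},2^{\RB+2})$, so the first condition holds; the second condition (distance $>\frac{\sigma_i}{4}$ to other roots) is inherited from $I$ since $I^*\subseteq I$. The only real work is the third condition: the new endpoints $V[v],V[w]$ are among the seven points of (\ref{7points}), at least one of $V[v],V[w]$ has distance $\ge\frac{\omega}{16}=\frac{w(I)}{16N}$ to $\xi$ (as in Step 2), and the other endpoint, being a neighbour in the grid at spacing $\le\omega$, together with $w(I^*)\ge\frac{w(I)}{8N}$, also has distance bounded below by a comparable quantity; applying Lemma~\ref{lem:sizeoff} and comparing $\log w(I^*)\ge \log w(I)-\log(8N)$ against the exponent in the normality bound shows the required lower bound on $\min\{|f(V[v])|,|f(V[w])|\}$ holds with the chosen constants. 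I expect Step 3, specifically verifying the third normality bound survives the change of interval width (the bookkeeping of how the $-5\log(b-a)$ term absorbs the loss $\log(8N)=O(\RB-\log w(I))$), to be the main obstacle, since the constant $17$ in front of $d\RB$ and the factor $5$ in front of $\log(b-a)$ in Definition~\ref{def:normal} were presumably tuned precisely so that this induction closes.
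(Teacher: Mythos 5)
Your Steps 1 and 2 follow the paper's decomposition and are essentially sound (in Step 1 you write the error as being proportional to $\min\{|f(a)|,|f(b)|\}^{-2}$; the paper gets $\min\{|f(a)|,|f(b)|\}^{-1}$ by bounding the relative errors of numerator and denominator separately, but this is a constant-factor detail in $\rho$ and does not affect the asymptotic conclusion). The problem is Step 3, which you yourself flag as ``the main obstacle'': your mechanism for establishing the third normality condition does not work, and the paper uses a different one.

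You attempt to bound $\min\{|f(V[v])|,|f(V[w])|\}$ by showing \emph{both} new endpoints have distance $\gtrsim \omega/N$ to $\xi$ and then invoking Lemma~\ref{lem:sizeoff}. But $\xi$ lies strictly between $V[v]$ and $V[w]$, and nothing prevents it from being arbitrarily close to one of them (e.g.\ within $2^{-1000}\omega$ of $V[w]$); the far endpoint $V[v]$ then inherits essentially all of $w(I^*)\ge \omega/8$, while $|V[w]-\xi|$ can be made as small as one likes. So the distance argument gives you a usable bound for exactly one of the two endpoints, not for both, and Lemma~\ref{lem:sizeoff} fails to deliver the required lower bound on $|f|$ at the near endpoint. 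The paper instead observes that both signs $S[v]$ and $S[w]$ were actually \emph{determined} by the interval evaluation at some precision $\rho\le 2\rho_1$; since $\down(f(c),\rho)$ and $\up(f(c),\rho)$ are dyadic with denominator $2^{\rho}$ and the interval $\IBox(f(c),\rho)$ must exclude $0$ for a sign to be returned, the nearest endpoint of that interval is at least $2^{-\rho}$ in absolute value, whence $|f(c)|\ge 2^{-\rho}\ge 2^{-2\rho_1}$. This lower bound on $|f|$ at the new endpoints is monotone in $\log(b'-a')^{-1}\ge\log(b-a)^{-1}$, which is what closes the induction. This is the missing idea; without it the ``bookkeeping'' you anticipate does not, in fact, go through.
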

\begin{proof}
We have to distinguish two cases. 
For $N>2$, we consider the two while-loops in Algorithm~\ref{alg:aqir}.
In the first loop (Step~\ref{aqir:point_location}), we evaluate $N\frac{f(a)}{f(a)-f(b)}$ 
via interval arithmetic, doubling the precision $\rho$ until the width of the resulting 
interval $J$ is less than or equal to $1/4$.
The following considerations show that we can achieve this if $\rho$ satisfies
\begin{align}
2^{-\rho+2}(d+1)^2 2^{\tau+d(\RB+2)}\le\frac{\min(|f(a)|,|f(b)|)}{32N}.\label{suffprec}
\end{align}
W.l.o.g., we assume $f(a)>0$. If $\rho$ satisfies the above condition, then, due to Lemma~\ref{intarithmetic}, $\IBox(N\cdot f(a),\rho)$ is contained within the interval
$$[Nf(a)-\frac{|f(a)|}{32},Nf(a)+\frac{|f(a)|}{32}]
=Nf(a)\cdot[1-\frac{1}{32N},1+\frac{1}{32N}]
$$ 
and $\IBox(f(a)-f(b),\rho)$ is contained within the interval
\begin{align*}
[f(a)-f(b)-\frac{|f(a)-f(b)|}{32N},f(a)-f(b)+\frac{|f(a)-f(b)|}{32N}]
=(f(a)-f(b))\cdot[1-\frac{1}{32N},1+\frac{1}{32N}],
\end{align*} where the latter result uses the fact that $f(a)$ and $f(b)$ have different signs.
It follows that $\IBox(N\frac{f(a)}{f(a)-f(b)},\rho)$ is contained within $\frac{Nf(a)}{f(a)-f(b)}\cdot[ (1-\frac{1}{32N})/(1+\frac{1}{32N}),(1+\frac{1}{32N})/(1-\frac{1}{32N})]$, and a simple computation shows that $N\cdot [ (1-\frac{1}{32N})/(1+\frac{1}{32N}),(1+\frac{1}{32N})/(1-\frac{1}{32N})]$ has width less than $1/4$. Hence, since $\frac{f(a)}{f(a)-f(b)}$ has absolute value less than $1$, $\IBox(N\frac{f(a)}{f(a)-f(b)},\rho)$ has width less than $1/4$ as well. The bound (\ref{suffprec}) on $\rho$ also writes as
\begin{align*}
\rho&\ge 7+2\log(d+1)+\tau+d\RB+2d+\log N+\log\min(|f(a),f(b)|)^{-1}
\end{align*}
and since we double $\rho$ in each iteration, computing $N\frac{f(a)}{f(a)-f(b)}$ via interval arithmetic up to an error of $1/4$ demands for a precision 
\begin{align*}
\rho&< 14+4\log(d+1)+2\tau+2d\RB+4d+2\log N+2\log\min(|f(a),f(b)|)^{-1}\\
&< 14+2\tau+10d\RB+2\log N+ 2\log\min(|f(a),f(b)|)^{-1},
\end{align*}
Since $I$ is normal and because of the posed condition on $N$, we can bound this by
\begin{align*}
\rho&<11d\tau+4(\tau+5-\log(b-a))+2(32d\tau+2\Sig-5\log(b-a))\\
&<87d\tau+4\Sig-14\log(b-a)<\rho_{max}.
\end{align*}
We turn to the second while loop of Algorithm~\ref{alg:aqir} (Steps~\ref{aqir:evaluation_begin}-\ref{aqir:evaluation_end})
where $f$ is evaluated at the subdivision points 
$m^*-\omega,m^*-\frac{7\omega}{8},\ldots,m^*+\omega$ as defined in (\ref{7points}). 
Since the interval is normal, we can apply Lemma~\ref{lem:sizeoff} to each of the seven subdivision points. 
Furthermore, at least six of these points have distance $\ge \frac{b-a}{16N}$ to the root $\xi$ and, thus, 
for these points, $|f|$ is larger than $\frac{b-a}{16N}\cdot 2^{-(2d+\tau+\Sigma_f)}$. 
Then, according to Lemma~\ref{suffprec}, it suffices to use a precision $\rho$ that fulfills
$$
2^{-\rho+2}(d+1)^2 2^{\tau+d(\RB+2)}\leq \frac{b-a}{16N}\cdot 2^{-(2d+\RB+\Sig)},\text{ or}
$$
$$\rho\ge\rho_1:=6+2\log(d+1)+\tau+d\RB+4d+\RB+\Sig+\log N-\log(b-a).$$
The same argumentation as above then shows that the point evaluation will be performed with a 
maximal precision of less than 
\begin{align*}
2\rho_1&<2(6+\tau+7d\RB+\RB+\Sig+\log N-\log(b-a))\\
&\leq 12+2\tau+14d\RB+2\RB+2\Sig+4(\RB+4-\log(b-a))-\log(b-a)\\
&\leq 28+2\tau+17d\RB+2\Sig-5\log(b-a)
\end{align*}
which is bounded by $\rho_{max}$. 
Moreover, at the new endpoints $a'$ and $b'$, $|f|$ is at least
$$2^{-2\rho_1}\geq 2^{-(28+2\tau+17d\RB+2\Sig-5\log(b-a))}\geq 2^{-(28+2\tau+17d\RB+2\Sig-5\log(b'-a'))}$$
which proves that $I'=(a',b')$ is again normal.

It remains the case of $N=2$, where a bisection step is performed. It is straight-forward
to see with Lemma~\ref{lem:apxbisection} that the required precision is bounded by $\rho_{max}$,
and in an analogue way as for the point evaluations for $N>2$, we can see that the resulting
interval is again normal.
By the same argument as in Lemma~\ref{lem:apxbisection},
the overall bit complexity of the \textsc{Aqir} step is bounded by
$$\Otilde(d\rho_{max})=\Otilde(d(d\tau+\Sig-\log(b-a))).$$
\end{proof}

\section{Root refinement}
\label{sec:root_refinement}

We next analyze the complexity of our original problem:
Given a polynomial $f$ as in (\ref{polyf}) and isolating
intervals for all its real roots, refine
the intervals to a size of at most $2^{-L}$.
Our refinement method consists of two steps. First, we turn
the isolating intervals into normal intervals by applying
bisections repeatedly.
Second, we call the \textsc{Aqir} method repeatedly on the intervals
until each has a width of at most $2^{-L}$.
Algorithm~\ref{alg:main} summarizes our method for root refinement.
We remark that depending on the properties of the root isolator 
used to get initial isolating intervals,
the normalization can be skipped; this is for instance
the case when using the isolator from~\cite{sag-complexity}. 
We also emphasize that the normalization is unnecessary for the correctness
of the algorithm; its purpose is to prevent the working precision
in a single \textsc{Aqir} step of growing too high.

\subsection{Normalization}
\label{sec:normalization}

If there exists only one isolating interval, it is easily shown that $(-2^{\Gamma+2},2^{\Gamma+2})$ is already a normal interval that isolates the corresponding root. Hence
we assume that at least two isolating intervals are present.
The normalization (Algorithm~\ref{alg:normal}) 
consists of two steps: first, the isolating
intervals are refined using approximate bisection
until the distance between two consecutive intervals is at least
three times larger than the size of the larger of the two
involved intervals. This ensures that 
all points in an isolating interval are reasonably far away
from any other root of $f$.
In the second step, each interval
is enlarged on both sides by an interval of at least the same size as itself.
This ensures that the endpoints are sufficiently far away
from any root of $f$ to prove a lower bound of $f$
at the endpoints.
W.l.o.g., we also assume that the input intervals are contained
in $(-2^{\Gamma+1},2^{\Gamma+1})$ because
all roots are contained in that interval, 
so the leftmost and rightmost intervals can just be cut if necessary. 
Obviously, the resulting intervals are still isolating 
and disjoint from each other.
Moreover, they do not become too small during the bisection process:\\

\begin{algorithm}[t]
\caption{Normalization}
\label{alg:normal}
\textsc{Input:} $f\in\R[t]$ a polynomial as in (\ref{polyf}), $I_1=(a_1,b_1),\ldots,I_m=(a_m,b_m)$ disjoint isolating intervals in ascending order, $m\geq 2$, $s_1,\ldots,s_m$ with $s_k=\sgn(f(\min I_k))$
\newline
\textsc{Output:} normal isolating intervals $J_1,\ldots,J_m$ with $z_k\in I_k\cap J_k$
\begin{algorithmic}[1]
\Procedure {normalize}{$f,I_1,\ldots,I_m$}
  \For{k=1,\ldots,m-1}
    \While{$\min I_{k+1}-\max I_{k} < 3 \max\{w(I_k),w(I_{k+1})\}$}
      \State \textbf{if} {$w(I_k)>w(I_{k+1})$}
      \State \textbf{then} \Call{Approximate\_bisection}{$f,I_k,s_k$}
      \State \textbf{else} \Call{Approximate\_bisection}{$f,I_{k+1},s_{k+1}$} 
    \EndWhile
    \State $d_k\gets \min I_{k+1}-\max I_{k}$
  \EndFor
 
  \State $d_0\gets d_1$, $d_m\gets d_{m-1}$
  \For{k=1,\ldots,m}
    \State $J_k\gets (a_k-d_{k-1}/4,b_k+d_k/4)$ \Comment{enlarge $I_k$ by more than $w(I_k)$ at both sides}
  \EndFor 
  \State \textbf{return} $J_1,\ldots,J_m$
\EndProcedure
\end{algorithmic}
\end{algorithm}

\begin{lemma}\label{lem:normal_lower_bound}
For $J_1,\ldots,J_m$ as returned by Alg.~\ref{alg:normal},
$w(J_k)>\frac{1}{4}\sigma_k$.\\
\end{lemma}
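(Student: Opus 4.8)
The plan is to track how small an interval $I_k$ can become during the first (refinement) loop of Algorithm~\ref{alg:normal}, and then observe that the enlargement step in the second loop only makes $J_k$ larger. The key invariant is that the inner \texttt{while}-loop bisects $I_k$ (or its neighbour) only as long as the gap to the neighbouring interval is smaller than $3$ times the larger of the two widths; so a bisection of $I_k$ is performed only when $w(I_k)$ is at least as large as the width of the relevant neighbour \emph{and} the gap is still less than $3\,w(I_k)$. I would argue that just before the \emph{last} bisection applied to $I_k$, its width $w'$ satisfies $w' > \tfrac{1}{3}\cdot(\text{gap to the neighbour})$. Since after that bisection the width drops by at most a factor $4$ (Algorithm~\ref{alg:abisect} guarantees $2\,w(J)\le w(I)$, and in the worst case only a quarter remains), the final width of $I_k$ before enlargement is at least $\tfrac{1}{12}$ of the gap that triggered the bisection.

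Next I would relate that gap to the separation $\sigma_k$. The neighbouring interval $I_{k\pm1}$ isolates a root $z_{k\pm1}\neq z_k$, and both $z_k\in I_k$ and $z_{k\pm1}\in I_{k\pm1}$. Hence
\[
\sigma_k \le |z_k - z_{k\pm1}| \le w(I_k) + (\text{gap}) + w(I_{k\pm1}) \le 3\cdot(\text{gap}) + (\text{gap}) + 3\cdot(\text{gap}) ,
\]
using that at the moment of the triggering bisection both widths are bounded by $3$ times the gap (indeed, whichever interval is bisected is the wider one, and the gap-condition bounds it). So the triggering gap is at least $\sigma_k/7$, say, which combined with the previous paragraph gives a lower bound of the form $w(I_k) \ge c\,\sigma_k$ for an explicit constant $c$. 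The enlargement $J_k = (a_k - d_{k-1}/4,\, b_k + d_k/4)$ has $w(J_k) \ge w(I_k)$, so the same bound holds for $w(J_k)$. One then has to check that the constant actually comes out to be at least $\tfrac14$; this may require being a little more careful about which bound is tight (e.g. noting that when $I_k$ itself is bisected it is the wider interval, so the neighbour's width is $\le w(I_k)$, tightening the chain above), and possibly using that $d_{k-1}$ and $d_k$ themselves are comparable to the relevant gaps.

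The main obstacle I anticipate is bookkeeping the constants correctly: one must separately handle the boundary cases ($k=1$ and $k=m$, where $d_0=d_1$ and $d_m=d_{m-1}$ are reused), the case where $I_k$ is never bisected at all (then $w(I_k)$ is its original width, and one needs a different argument — presumably the original intervals are already wide enough, or the gap condition with the original width forces $w(I_k)$ large relative to $\sigma_k$), and the subtlety that a single interval $I_k$ participates in two \texttt{while}-loops (as the right partner for index $k-1$ and the left partner for index $k$), so its width is the minimum over both processes. I would organise the proof by fixing $k$, letting $g$ be the smallest gap value that ever triggered a bisection of $I_k$ in either loop, showing $w(I_k)\ge g/4$ after all bisections (factor $4$ per last bisection, but only one interval is bisected per iteration so the relevant last bisection is well-defined), and then showing $g\ge \sigma_k$ up to the constant via the triangle-inequality chain above, finally absorbing everything to conclude $w(J_k) > \tfrac14\sigma_k$.
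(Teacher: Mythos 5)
There is a genuine gap, and it is fatal to the route you chose. Your plan is to lower-bound $w(I_k)$, the width of $I_k$ \emph{after} the bisection loop, by a constant times $\sigma_k$, and then invoke $w(J_k)\ge w(I_k)$. But $w(I_k)$ can be arbitrarily small relative to $\sigma_k$: the while-loop only ever bisects the \emph{wider} of the two neighbouring intervals, so if $I_k$ starts out tiny while its neighbours are large, $I_k$ is never touched and keeps its (possibly negligible) initial width. Your own remark about ``the case where $I_k$ is never bisected at all'' points at exactly this, and the hoped-for rescue (``the gap condition with the original width forces $w(I_k)$ large relative to $\sigma_k$'') is not available -- the gap condition constrains the gap and the \emph{maximum} of the two widths, not the smaller one. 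There is also a sign error in the displayed chain: while the loop is still running one has $d_k<3\max\{w(I_k),w(I_{k+1})\}$, which is an \emph{upper} bound on the gap, not an upper bound on the widths, so ``both widths are bounded by $3$ times the gap'' does not hold at the moment you invoke it.

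The fix is to stop trying to control $w(I_k)$ and instead read the bound off the enlargement step, as the paper does. After the for-loop exits, the \emph{exit} condition gives $d_k\ge 3\max\{w(I_k),w(I_{k+1})\}$; hence $\sigma_k\le |z_k-z_{k+1}| < w(I_k)+d_k+w(I_{k+1}) < 2d_k$, and symmetrically $\sigma_k<2d_{k-1}$ (with the boundary convention $d_0=d_1$, $d_m=d_{m-1}$). Since $J_k=(a_k-d_{k-1}/4,\,b_k+d_k/4)$, the enlargement alone contributes
\[
w(J_k) \;=\; w(I_k)+\tfrac{d_{k-1}}{4}+\tfrac{d_k}{4} \;>\; \tfrac{\sigma_k}{8}+\tfrac{\sigma_k}{8} \;=\;\tfrac{\sigma_k}{4}.
\]
No bookkeeping about which interval was bisected when, how many times, or by what factor is needed; the whole dynamical analysis you sketch is superfluous once one uses the exit condition and the enlargement directly.
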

\begin{proof}
After the first for-loop, the distance $d_k$ 
between any two consecutive intervals $I_k$ and $I_{k+1}$ satisfies $d_k\ge 
3\max\{w(I_k), w(I_{k+1})\}$, thus $\sigma_k<w(I_k)+w(I_{k+1})+d_k<2d_k$. 
Hence, in the last step, each $I_k$ is enlarged by \emph{more than} 
$\sigma_k/8$ on each side. This proves that the corresponding enlarged intervals $J_k$ have size 
\emph{more than} $\sigma_k/4$. 
\end{proof}\vspace{0.25cm}

\begin{lemma}\label{lem:normalization}
Algorithm~\ref{alg:normal} is correct, i.e., returns normal intervals.\\
\end{lemma}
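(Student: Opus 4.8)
The plan is to verify each of the three defining conditions of a normal interval (Definition~\ref{def:normal}) for the output intervals $J_k$ of Algorithm~\ref{alg:normal}, using the geometric properties enforced by the two phases of the normalization. First I would handle the containment condition: since the input intervals are assumed to lie in $(-2^{\RB+1},2^{\RB+1})$, and each $I_k$ is enlarged by $d_{k-1}/4$ on the left and $d_k/4$ on the right, I need to argue that these enlargements are small enough that $J_k$ still lies in $(-2^{\RB+2},2^{\RB+2})$. The distances $d_k$ are bounded by the diameter of the region containing all roots (plus the initial interval sizes), hence by $O(2^\RB)$, so after the enlargement the $J_k$ still fit comfortably inside $(-2^{\RB+2},2^{\RB+2})$; a small amount of care with constants is needed here, but it is routine.

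Next I would establish the separation condition, that $|p - z_i| > \sigma_i/4$ for every $p \in J_k$ and every root $z_i \neq \xi = z_k$. This is where the first for-loop does its work: after that loop, $d_k \ge 3\max\{w(I_k), w(I_{k+1})\}$ for consecutive intervals. I would argue that for a point $p \in J_k$ and a real root $z_i$ lying in some other interval $I_i$, the path from $p$ to $z_i$ must cross at least one full gap $d_j$ and that $w(J_k)$ together with $\sigma_i \le w(I_i) + w(I_{i+1}) + d_i < 2 d_i$ (the bound already extracted in the proof of Lemma~\ref{lem:normal_lower_bound}) forces $|p - z_i|$ to exceed $\sigma_i/4$. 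For non-real roots $z_i$ the bound $|p - z_i| \ge \operatorname{Im}(z_i) \ge \sigma_i/2 > \sigma_i/4$ from the proof of Lemma~\ref{lem:sizeoff} applies directly. The adjacency structure (only the neighboring gaps are controlled directly, but gaps telescope) makes this the step requiring the most bookkeeping, and I expect it to be the main obstacle: one has to be careful that the "$3\times$ the larger interval" condition, applied only to consecutive pairs, still yields the $\sigma_i/4$ separation for an arbitrary non-consecutive root.

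Finally I would verify the lower bound $\min\{|f(a_k')|, |f(b_k')|\} \ge 2^{-(28 + 2\tau + 17d\RB + 2\Sig - 5\log w(J_k))}$ at the new endpoints $a_k' = a_k - d_{k-1}/4$ and $b_k' = b_k + d_k/4$. The point is that these endpoints have distance at least $d_{k-1}/4$ (resp.\ $d_k/4$) from the isolated root $\xi$ inside $J_k$, and by the separation just established they also have distance $\ge \sigma_i/4$ from every other root, so Lemma~\ref{lem:sizeoff} applies and gives $|f(a_k')| > |\xi - a_k'| \cdot 2^{-(2d + \RB + \Sig)} \ge (d_{k-1}/4) \cdot 2^{-(2d+\RB+\Sig)}$. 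Then I would relate $d_{k-1}$ to $w(J_k)$ (they differ by at most a constant factor, since $w(J_k) = w(I_k) + d_{k-1}/4 + d_k/4$ and $w(I_k) < d_k/3$, $w(I_{k+1}) < d_k/3$ after the loop), turning the bound into one of the required form $2^{-O(\cdots) + 5\log w(J_k)}$; matching the explicit constants $28, 2\tau, 17d\RB, 2\Sig$ in the definition is a routine (if slightly tedious) inequality chase, where the generous slack in those constants is what makes it go through.
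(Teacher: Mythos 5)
Your overall plan mirrors the paper's proof: verify each of the three conditions in Definition~\ref{def:normal} using the geometry enforced by the two phases of Algorithm~\ref{alg:normal}. Properties one and two are handled essentially the same way the paper handles them, and your observation that the adjacent gaps are the binding constraint (with telescoping for farther real roots and the imaginary-part bound $|p-z_i|\ge |\operatorname{Im}(z_i)|\ge \sigma_i/2$ for non-real roots) is exactly right and in fact fills in detail the paper leaves implicit.

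For the third property your route differs, and one step in it is wrong as stated. You apply Lemma~\ref{lem:sizeoff} as a black box and then bound $|\xi - a_k'|\ge d_{k-1}/4$, yielding $|f(a_k')|\ge (d_{k-1}/4)\cdot 2^{-(2d+\RB+\Sig)}$, and then propose to ``relate $d_{k-1}$ to $w(J_k)$ (they differ by at most a constant factor)''. That claim is false in one direction: $w(J_k)=w(I_k)+d_{k-1}/4+d_k/4$ certainly gives $d_{k-1}<4w(J_k)$, but $w(J_k)$ can be arbitrarily larger than $d_{k-1}$ when $d_k\gg d_{k-1}$, and the loop condition bounds $w(I_k)$ against $d_{k-1}$, not $d_k$ against $d_{k-1}$. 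The paper sidesteps this entirely by \emph{not} invoking Lemma~\ref{lem:sizeoff} as a black box: it repeats the product estimate inline, so that the factor $\sigma_k/8$ from $|e-\xi|>\sigma_k/8$ cancels the $1/\sigma_k$ hidden in the product over the other roots, giving the clean, $w(J_k)$-independent bound $|f(e)|>\frac{\sigma_k}{8}\prod_{i\neq k}\frac{\sigma_i}{4}=2^{-(2d+\Sig+1)}$; comparing this with the definition's threshold is then a one-line arithmetic check using $\log(b-a)\le\RB+2$ and $-\Sig\le d(\RB+1)$. Your version can be salvaged (in the worst case $d_{k-1}\sim\sigma_k$ one instead uses $\log\sigma_k+\Sig\ge -(d-1)(\RB+1)$ and the $5\log w(J_k)$ term is bounded via $w(J_k)\le 2^{\RB+3}$, after which the slack in the constants absorbs everything), but the inequality chase you'd actually have to carry out is not the one you describe, and relying on the constant-factor claim as written would be an error.
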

\begin{proof}
Let $J_1,\ldots,J_m$ denote the returned intervals,
and fix some interval $J_k$ containing the root $z_k$
of $f$. We have to prove the three properties of Definition~\ref{def:normal}.
The first property is clear because the initial interval are assumed to lie in $(-2^{\Gamma+1},2^{\Gamma+1})$, and they are extended by not more than $2^\Gamma$
to each side.
In the proof of Lemma~\ref{lem:normal_lower_bound}, we have already shown that $I_k$ is eventually enlarged by more than $\sigma_k/8$ on each side. More precisely, the right endpoint of $J_k$ has distance at least $d_k/4>\sigma_{k+1}/8$ to $J_{k+1}$, and the left endpoint of $J_k$ has distance  at least $d_{k-1}/4>\sigma_{k-1}/8$ to $J_{k-1}$. It follows that, for each $x_0\in J_k$, we have $|x_0-z_{k\pm 1}|< \sigma_{k\pm 1}/4$, respectively. Hence, the second property in Definition~\ref{def:normal} is fulfilled. 
For the third property of Definition~\ref{def:normal}, 
let $e$ be one of the endpoints of $J_k$. 
We have just proved that the distance to every root $z_i$ except $z_k$ is 
more than $\frac{\sigma_i}{4}$ and $|e-z_k|>\sigma_k/8$. With an estimation similar as in the proof of Lemma~\ref{lem:sizeoff}, we obtain:
\begin{eqnarray*}
&&|f(e)| > \frac{\sigma_k}{8}\prod_{i\neq k} \frac{\sigma_i}{4}= \frac{1}{8}\cdot\frac{1}{4^{d-1}}2^{-\Sigma_f}
= 2^{-(2d+\Sig+1)},
\end{eqnarray*}
and $2^{-(2d+\Sig+1)}\geq 2^{-(28+2\tau+17d\Gamma+2\Sig-5\log(b-a))}$
because $\log(b-a)\leq \Gamma+2$ and $-\Sig\leq d(\Gamma+1)<2d\Gamma$.
\end{proof}\vspace{0.25cm}

\begin{lemma}\label{lem:normalization_cost}
Algorithm~\ref{alg:normal} has a complexity of
$$\Otilde(d(d\RB + \Sigma_f)(\tau+d\RB+\Sigma_f))$$
\end{lemma}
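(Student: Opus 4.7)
The cost of Algorithm~\ref{alg:normal} is dominated by the bisections performed in the first \texttt{while}-loop; the enlargement step only performs a constant number of arithmetic operations per interval on numbers of bit size $\tilde{O}(\Gamma)$, which is negligible compared to the target bound. So the plan is to (i) count the total number of approximate bisections executed across all intervals, (ii) bound the precision demanded by any single bisection via Lemma~\ref{lem:apxbisection}, and (iii) multiply the two and add the per-step arithmetic cost from that same lemma.

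For (i), I note that each call to \textsc{Approximate\_Bisection} at least halves the width of an interval. The initial intervals are contained in $(-2^{\Gamma+1},2^{\Gamma+1})$, hence have width at most $2^{\Gamma+2}$. By Lemma~\ref{lem:normal_lower_bound}, when the \texttt{while}-loop terminates for the pair $(I_k,I_{k+1})$ the widths are at least $\sigma_k/8$ (since we need only bisect until $d_k \geq 3\max\{w(I_k),w(I_{k+1})\}$, and $J_k$ retains width $>\sigma_k/4$). Consequently the number of bisections performed on $I_k$ is $O\bigl(\Gamma + \max(0,\log \sigma_k^{-1})\bigr)$. Using that $\log \sigma_i^{-1} \geq -(\Gamma+1)$ for every $i$, we obtain
\[
\sum_{k=1}^{m}\max(0,\log \sigma_k^{-1}) \;\leq\; \Sigma_f + m(\Gamma+1),
\]
so the total number $T$ of approximate bisections is $T=O(d\Gamma+\Sigma_f)$.

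For (ii), whenever a bisection is applied to a current interval $(a,b)$ with $w=b-a$, Lemma~\ref{lem:apxbisection} shows it uses precision $\tilde{O}(\log w^{-1}+\tau+d\Gamma+\Sigma_f)$ and bit cost $\tilde{O}(d(\log w^{-1}+\tau+d\Gamma+\Sigma_f))$. Since throughout the normalization $w\geq \sigma_k/8$ for the interval being processed, and $\log\sigma_k^{-1}\leq \Sigma_f+(d-1)(\Gamma+1)$ as above, we get $\log w^{-1} = O(d\Gamma+\Sigma_f)$ uniformly. Hence every single bisection costs $\tilde{O}(d(\tau+d\Gamma+\Sigma_f))$ bit operations. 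Multiplying by $T$ yields
\[
\tilde{O}\bigl( (d\Gamma+\Sigma_f)\cdot d(\tau+d\Gamma+\Sigma_f)\bigr) \;=\; \tilde{O}\bigl(d(d\Gamma+\Sigma_f)(\tau+d\Gamma+\Sigma_f)\bigr),
\]
as required, and the enlargement step is absorbed in the $\tilde{O}$.

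The main subtlety I expect is in step (i): the quantity $\log\sigma_k^{-1}$ may be negative for widely separated roots, so one cannot naively sum logarithms to get $\Sigma_f$. The clean way around this is the bound $\log\sigma_k^{-1}\geq -(\Gamma+1)$ already used above, which lets me pay an extra $m\Gamma=O(d\Gamma)$ term to convert positive-part sums into $\Sigma_f$. A second, minor, point is that one has to track that the interval width never drops below $\Theta(\sigma_k)$ during the process so that the precision bound stays uniform; this is exactly what Lemma~\ref{lem:normal_lower_bound} provides, so no new work is needed there.
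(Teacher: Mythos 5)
Your proof is correct and follows essentially the same route as the paper's: bound the total number of approximate bisections by $O(d\Gamma+\Sigma_f)$ using Lemma~\ref{lem:normal_lower_bound} together with the initial width bound $2^{O(\Gamma)}$, bound each bisection's cost by $\tilde{O}(d(\tau+d\Gamma+\Sigma_f))$ via Lemma~\ref{lem:apxbisection} since the interval width stays $\Omega(\sigma_k)$ throughout, and multiply. The only nitpick is that your intermediate inequality should read $\sum_{k=1}^{m}\max(0,\log\sigma_k^{-1})\leq \Sigma_f+d(\Gamma+1)$ rather than $\Sigma_f+m(\Gamma+1)$, so as to absorb the possibly negative contributions of the non-real roots to $\Sigma_f$; this is immaterial to the asymptotic bound.
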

\begin{proof}
As a direct consequence of Lemma~\ref{lem:normal_lower_bound},
each interval $I_k$ is only bisected $O(\RB+\log(\sigma_k)^{-1})$ 
many times because each starting interval is assumed to be contained in $(-2^{\RB+1},2^{\RB+1})$. So the total number of bisections adds up
to $O(d\RB+\Sigma_f)$ considering all roots of $f$. Also, the size of the isolating interval $I_k$
is lower bounded by $\frac{3}{20}\cdot\sigma_k=2^{-O(\Sig+d\RB)}$, so that one approximate bisection 
step has a complexity of $\Otilde(d(\tau+d\RB + \Sigma_f))$ due to Lemma~\ref{lem:apxbisection}.
\end{proof}\vspace{0.25cm}

\begin{algorithm}[t]
\caption{Root Refinement}
\label{alg:main}
\textsc{Input:} $f=\sum a_ix^i\in\R[t]$ a polynomial as in (\ref{polyf}), isolating intervals $I_1,\ldots, I_m$
for the real roots of $f$ in ascending order, $L\in\Z$
\newline
\textsc{Output:} isolating intervals $J_1,\ldots,J_m$ with $w(J_k)\leq 2^{-L}$
\begin{algorithmic}[1]
\Procedure {root\_refinement}{$f,L,I_1,\ldots,I_m$}
  \State $s_k:=\sgn(a_d)\cdot (-1)^{m-k+1}$ \Comment{$s_k=\sgn(f(\min I_k))$} 
  \State $J_1,\ldots,J_m\gets\Call{normalize}{f,I_1,\ldots,I_m}$
  \For{k=1,\ldots,m}
    \State $N\gets 4$
    \State \textbf{while} $w(J_k)>2^{-L}$ \textbf{do} $(J_k,N)\gets$\Call{\textsc{Aqir}}{$f,J_k,N,s_k$}
  \EndFor
  \State \textbf{return} $J_1,\ldots,J_m$
\EndProcedure
\end{algorithmic}
\end{algorithm}

\subsection{The AQIR sequence}
\label{sec:aqir_sequence}

It remains to bound the cost of the calls of \textsc{Aqir}. 
We mostly follow the argumentation from~\cite{kerber-complexity},
mostly referring to that article for technical proofs.
We introduce the following convenient notation:\\
\begin{definition}
Let $I_0:=I$ be a normal isolating interval for some real root $\xi$ of $f$, $N_0:=4$ and $s:=\sgn(\min I_0)$.
The \textsc{Aqir} \emph{sequence} $(S_0,S_1,\ldots,S_{v_{\xi}})$ is defined by
$$S_0:=(I_0,N_0)=(I,4)\quad S_i=(I_i,N_i):=\textsc{Aqir}(f,I_{i-1},N_{i-1},s)\text{ for }i\geq 1,$$
where $v_{\xi}$ is the first index such that the interval $I_{v_\xi}$ has width at most $2^{-L}$.
We say that $S_i\aqirarrow S_{i+1}$ \emph{succeeds} if \textsc{Aqir}($f,I_i,N_i,s$) succeeds,
and that $S_i\aqirarrow S_{i+1}$ \emph{fails} otherwise.
\end{definition}

\noindent As in~\cite{kerber-complexity}, we divide the QIR sequence into two parts
according to the following definition:\\

\begin{definition}\label{def:C_bound}
For $\xi$ a root of $f$, we define

$$C_{\xi}:=\frac{|f'(\xi)|}{8\left( \frac{d^2}{\sigma(\xi,f)}|f'(\xi)| + \sum_{i=2}^{d} \left(\frac{\sigma(\xi,f)}{d^2}\right)^{i-2}|f^{(i)}(\xi)|\right)}.$$
For $(S_0,\ldots,S_{v_\xi})$ the QIR sequence of $\xi$, define
$k$ as the minimal index such that $S_k=(I_k,N_k)\aqirarrow S_{k+1}$
succeeds and $w(I_k)\leq C_{\xi}$.
We call $(S_0,\ldots,S_k)$ \emph{linear sequence} and $(S_k,\ldots,S_{v_\xi})$
\emph{quadratic sequence} of $\xi$.
\end{definition}\vspace{0.25cm}

Note that~\cite{kerber-complexity} defined a different threshold for splitting the QIR sequence,
and the linear sequence was called \emph{initial sequence} therein.
We renamed it to avoid confusion with the initial normalization phase in our variant.\\

\textbf{Quadratic convergence.} We start by justifying the name ``quadratic sequence''.
Indeed, it turns out that all but one \textsc{Aqir} step in the quadratic sequence
are successful, hence, $N$ is squared in (almost) every step and therefore,
the refinement factor of the interval is doubled in (almost) every step.
We first prove two important properties of $C_\xi$ as defined in Defition~\ref{def:C_bound}:\\

\begin{lemma}\label{qir-bound-lemma}
Let $\xi\in\C$ be a root of $f$.\vspace{0.25cm} 
\begin{enumerate}
\item $0<C_\xi\leq\frac{\sigma(\xi,f)}{8d^2}$
\item Let $\mu\in\C$ be such that $|\xi-\mu|<C_\xi$. Then
$$C_\xi < \frac{|f'(\xi)|}{8|f''(\mu)|}.$$\\
\end{enumerate}
\end{lemma}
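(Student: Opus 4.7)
The plan is to prove both parts directly from the definition of $C_\xi$, using only the Taylor expansion of $f''$ around $\xi$ for part 2.

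For part 1, I would begin by noting that since $f$ is square-free and $\xi$ is a root, $f'(\xi)\neq 0$, so the numerator of $C_\xi$ is positive; the denominator is a sum of positive terms (with all derivative magnitudes nonnegative and $\sigma(\xi,f)>0$), so $C_\xi>0$. For the upper bound, I would simply drop all terms of the sum $\sum_{i=2}^{d}(\sigma(\xi,f)/d^2)^{i-2}|f^{(i)}(\xi)|$ in the denominator, retaining only the $\frac{d^2}{\sigma(\xi,f)}|f'(\xi)|$ summand. This immediately yields
\[
C_\xi \;\le\; \frac{|f'(\xi)|}{8\cdot\frac{d^2}{\sigma(\xi,f)}|f'(\xi)|} \;=\; \frac{\sigma(\xi,f)}{8d^2}.
\]

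For part 2, the key is a Taylor expansion of $f''$ centered at $\xi$: writing $f''(\mu)=\sum_{i=2}^{d}\frac{f^{(i)}(\xi)}{(i-2)!}(\mu-\xi)^{i-2}$ and taking absolute values, I get $|f''(\mu)|\le\sum_{i=2}^{d}|f^{(i)}(\xi)|\,|\mu-\xi|^{i-2}$. By the hypothesis $|\mu-\xi|<C_\xi$ combined with part 1, $|\mu-\xi|<\sigma(\xi,f)/(8d^2)<\sigma(\xi,f)/d^2$, so each factor $|\mu-\xi|^{i-2}$ is bounded by $(\sigma(\xi,f)/d^2)^{i-2}$. Setting $T:=\sum_{i=2}^{d}(\sigma(\xi,f)/d^2)^{i-2}|f^{(i)}(\xi)|$, this gives $|f''(\mu)|\le T$.

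Now I would read off from the definition of $C_\xi$ the identity $8\,C_\xi\bigl(\tfrac{d^2}{\sigma(\xi,f)}|f'(\xi)|+T\bigr)=|f'(\xi)|$, so
\[
8\,C_\xi\,T \;=\; |f'(\xi)| - 8\,C_\xi\cdot\tfrac{d^2}{\sigma(\xi,f)}|f'(\xi)| \;<\; |f'(\xi)|,
\]
the strict inequality coming from $C_\xi>0$ and $|f'(\xi)|>0$. Combining this with $|f''(\mu)|\le T$ yields $8\,C_\xi|f''(\mu)|<|f'(\xi)|$, i.e. $C_\xi<|f'(\xi)|/(8|f''(\mu)|)$ when $|f''(\mu)|>0$; if $|f''(\mu)|=0$ the right-hand side is $+\infty$ and the claim is trivial.

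I do not anticipate a real obstacle here; the whole argument is essentially unpacking the definition. The only place to be slightly careful is ensuring the strict inequality in part 2, which is why I keep the extra positive summand $\tfrac{d^2}{\sigma(\xi,f)}|f'(\xi)|$ rather than absorbing it, and handle the edge case $f''(\mu)=0$ separately.
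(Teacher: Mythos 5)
Your proof is correct and takes essentially the same route as the paper's: part 1 by dropping all summands in the denominator except $\frac{d^2}{\sigma(\xi,f)}|f'(\xi)|$, and part 2 via the Taylor expansion of $f''$ at $\xi$ bounded using $|\mu-\xi|<C_\xi\le\sigma(\xi,f)/d^2$. Your explicit derivation of strictness from the retained positive summand and your handling of the degenerate case $f''(\mu)=0$ only make the argument slightly more watertight than the paper's version.
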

\begin{proof}
Note that all summands in the denominator of $C_\xi$ are non-negative. Therefore, the first property follows
immediately by removing all but the first summand in the denominator.

For the second property, we consider the Taylor expansion of $f''(\mu)$ in $\xi$:
$$f''(\mu)=\sum_{i=2}^{d} (\mu-\xi)^{i-2} \frac{f^{(i)}(\xi)}{(i-2)!}.$$
Because $|\mu-\xi| < C_\xi < \frac{\sigma(\xi)}{d^2}$ by the first property, we can bound
$$|f''(\mu)| < \sum_{i=2}^{d} \left(\frac{\sigma(\xi)}{d^2}\right)^{i-2} |f^{(i)}(\xi)|.$$
It follows that
$$\frac{|f'(\xi)|}{8|f''(\mu)|} > \frac{|f'(\xi)|}{8\left(\sum_{i=2}^{d} \left(\frac{\sigma(\xi)}{d^2}\right)^{i-2} |f^{(i)}(\xi)|\right)} > C_{\xi}$$
\end{proof}\vspace{0.25cm}

The following bound
follows from considering the Taylor expansion of $f$ at $\xi$
in the expression for $m$:\\
\begin{lemma}\label{lem:sufficientlysmall}~\textbf{\emph{\cite[Thm.~4.8]{kerber-complexity}}}
Let $(a,b)$ be isolating for $\xi$ with width $\delta<C_\xi$ and $m$ as in Lemma~\ref{aqir-success} (i.e., $m=a+\frac{f(a)}{f(a)-f(b)}(b-a)$).
Then, $|m-\xi|\leq \frac{\delta^2}{8C_{\xi}}$.\\
\end{lemma}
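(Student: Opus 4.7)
The plan is to Taylor-expand $f$ at $\xi$ and exploit the cancellation of the dominant (linear) contributions in the secant formula. Set $\alpha:=a-\xi$ and $\beta:=b-\xi$; since $(a,b)$ isolates $\xi$ we have $\alpha<0<\beta$ and, crucially, $|\alpha|+|\beta|=\beta-\alpha=\delta$. Because $f(\xi)=0$, Taylor's theorem with Lagrange remainder gives $\eta_a\in(a,\xi)$ and $\eta_b\in(\xi,b)$ with
$$f(a)=f'(\xi)\alpha+\tfrac{1}{2}f''(\eta_a)\alpha^2,\qquad f(b)=f'(\xi)\beta+\tfrac{1}{2}f''(\eta_b)\beta^2.$$
Both $\eta_a$ and $\eta_b$ lie within distance $\delta<C_\xi$ of $\xi$, which is precisely the hypothesis needed to apply Lemma~\ref{qir-bound-lemma}(2) and obtain $|f''(\eta_a)|,|f''(\eta_b)|\le M$ with $M<|f'(\xi)|/(8C_\xi)$.

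Next I would rewrite the definition of $m$ into a symmetric form in which the linear terms automatically cancel. A short manipulation starting from $m-a=\tfrac{f(a)}{f(a)-f(b)}(b-a)$ and using $\alpha+\delta=\beta$ yields
$$m-\xi=\frac{\beta f(a)-\alpha f(b)}{f(a)-f(b)}.$$
Substituting the Taylor expansions, the $f'(\xi)\alpha\beta$ contributions in the numerator cancel exactly, leaving
$$\beta f(a)-\alpha f(b)=\tfrac{1}{2}\alpha\beta\bigl(f''(\eta_a)\alpha-f''(\eta_b)\beta\bigr),$$
which is bounded by $\tfrac{1}{2}|\alpha\beta|\cdot M(|\alpha|+|\beta|)\le\tfrac{1}{2}\cdot\tfrac{\delta^2}{4}\cdot M\delta$ after using AM-GM together with $|\alpha|+|\beta|=\delta$. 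For the denominator, $f(a)-f(b)=-f'(\xi)\delta+\tfrac{1}{2}(f''(\eta_a)\alpha^2-f''(\eta_b)\beta^2)$, and the error term is at most $M\delta^2\le\tfrac{\delta}{8C_\xi}|f'(\xi)|\delta$, so $|f(a)-f(b)|\ge\tfrac{7}{8}|f'(\xi)|\delta$ by a reverse triangle inequality. Dividing the two bounds and applying $M<|f'(\xi)|/(8C_\xi)$ yields $|m-\xi|$ of order $\delta^2/C_\xi$, with the constants straightforwardly adjustable to reach the stated factor $1/8$ (in fact a slightly smaller constant is within reach).

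The main obstacle I expect is bookkeeping rather than any single inequality: one has to be careful that both Taylor evaluation points $\eta_a,\eta_b$ really satisfy the hypothesis of Lemma~\ref{qir-bound-lemma}(2), that the algebraic rewriting of $m-\xi$ into the symmetric quotient is valid (this requires $f(a)\neq f(b)$, which is guaranteed because the denominator bound above is strictly positive), and that signs are tracked correctly since $\alpha$ and $\beta$ have opposite signs. The success of the argument rests on the single observation that the linear terms in the expanded numerator cancel identically; everything else is a quantitative estimate that closes because $\delta<C_\xi$ provides slack in both numerator and denominator.
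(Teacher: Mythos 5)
Your proposal is correct and follows essentially the same route as the paper: the same symmetric rewriting $m-\xi=\frac{\beta f(a)-\alpha f(b)}{f(a)-f(b)}$, the same cancellation of the linear Taylor terms, and the same two uses of Lemma~\ref{qir-bound-lemma}(2), the only cosmetic difference being that you lower-bound the denominator directly from the second-order expansions while the paper invokes the mean value theorem and then Taylor-expands $f'$ at $\xi$, both yielding $|f(a)-f(b)|\geq\frac{7}{8}|f'(\xi)|\delta$. The constants you leave implicit do close with room to spare (your estimates give roughly $|m-\xi|\leq\frac{\delta^2}{56\,C_\xi}\leq\frac{\delta^2}{8C_\xi}$).
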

\begin{proof}
We consider the Taylor expansion of $f$ at $\xi$. For a given $x\in (a,b)$, 
we have
$$f(x)=f'(\xi)(x-\xi)+\frac{1}{2}f''(\tilde{\xi})(x-\xi)^2$$
with some $\tilde{\xi}\in [x,\xi]$ or $\tilde{\xi}\in[\xi,x]$.
Thus, we can simplify
\begin{align*}
|m-\xi| &= \left|\frac{f(b)(a-\xi)-f(a)(b-\xi)}{f(b)-f(a)}\right|= \left|\frac{\frac{1}{2}(f''(\tilde{\xi}_1)(b-\xi)^2(a-\xi)-f''(\tilde{\xi}_2)(a-\xi)^2(b-\xi))}{f(b)-f(a)}\right|\\
&\leq \frac{1}{2}|b-\xi||a-\xi| \cdot\frac{|f''(\tilde{\xi}_1)|(b-\xi)+|f''(\tilde{\xi}_2)|(\xi-a)}{|f(b)-f(a)|}
\leq \frac{\delta^2\max\{|f''(\tilde{\xi}_1)|,|f''(\tilde{\xi}_2)|\}}{2|f'(\nu)|}
\end{align*}
for some $\nu\in (a,b)$. The Taylor expansion of $f'$ yields
$f'(\nu)=f'(\xi)+f''(\tilde{\nu})(\nu-\xi)$
with $\tilde{\nu}\in (a,b)$. Since $\delta\leq C_\xi$, it follows with
Lemma~\ref{qir-bound-lemma}
$$|f''(\tilde{\nu})(\nu-\xi)|\leq |f''(\tilde{\nu})|C_\xi\leq \frac{1}{8}|f'(\xi)|.$$
Therefore $|f'(\nu)|>\frac{7}{8}|f'(\xi)|>\frac{1}{2}|f'(\xi)|$, and it follows again with
Lemma~\ref{qir-bound-lemma} that
\begin{eqnarray*}\
|m-\xi|\leq \frac{\delta^2\max\{|f''(\tilde{\xi}_1)|,|f''(\tilde{\xi}_2)|\}}{|f'(\xi)|}
\leq \frac{\delta^2}{8\frac{|f'(\xi)|}{8\max\{|f''(\tilde{\xi}_1)|,|f''(\tilde{\xi}_2)|\}}}<\frac{\delta^2}{8C_\xi}.
\end{eqnarray*}
\end{proof}\vspace{0.25cm}

\begin{corollary}\label{cor:all_works}
Let $I_j$ be an isolating interval for $\xi$ of width 
$\delta_j\leq \frac{C_\xi}{N_j}$. Then, each call of the \textsc{Aqir} sequence
$$(I_j,N_j)\aqirarrow(I_{j+1},N_{j+1})\aqirarrow\ldots$$ succeeds.\\
\end{corollary}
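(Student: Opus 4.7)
My plan is a straightforward induction on the index in the \textsc{Aqir} sequence starting from $j$, with the inductive hypothesis $\delta_i \leq C_\xi / N_i$ for every $i \geq j$. The base case $i = j$ is the hypothesis of the corollary.

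For the inductive step, assume $\delta_i \leq C_\xi / N_i$; I need to show (a) the step $(I_i, N_i) \aqirarrow (I_{i+1}, N_{i+1})$ succeeds and (b) $\delta_{i+1} \leq C_\xi / N_{i+1}$. For (a), I apply Lemma~\ref{lem:sufficientlysmall}, which is applicable because $\delta_i \leq C_\xi / N_i \leq C_\xi$ (since $N_i \geq 4 > 1$). This gives
\[
|m - \xi| \leq \frac{\delta_i^2}{8 C_\xi} = \frac{\delta_i}{8 N_i} \cdot \frac{\delta_i N_i}{C_\xi} \leq \frac{\delta_i}{8 N_i} = \frac{\omega_i}{8},
\]
so the hypothesis of Lemma~\ref{aqir-success} is satisfied, and the step succeeds.

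For (b), a successful \textsc{Aqir} step sets $N_{i+1} = N_i^2$ and, as noted in the description of Algorithm~\ref{alg:aqir}, produces $I_{i+1}$ with $w(I_{i+1}) \leq w(I_i)/N_i$, i.e., $\delta_{i+1} \leq \delta_i / N_i$. Combined with the inductive hypothesis,
\[
\delta_{i+1} \leq \frac{\delta_i}{N_i} \leq \frac{C_\xi}{N_i^2} = \frac{C_\xi}{N_{i+1}},
\]
which closes the induction.

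The only subtlety is that Lemma~\ref{aqir-success} demands the strict inequality $|m - \xi| < \omega_i / 8$, whereas the chain above only yields $\leq$ when $\delta_i = C_\xi / N_i$ exactly. However, this case is immediately ruled out after the first successful step, since then $\delta_{i+1} \leq \delta_i / N_i < C_\xi / N_{i+1}$ strictly (as $\delta_i < C_\xi$ forces a strict drop), and for the initial step one can invoke the fact that $f$ has no root at the grid points, ensuring the Taylor-expansion bound in Lemma~\ref{lem:sufficientlysmall} is strict as well. This is a minor technicality rather than a real obstacle; the heart of the argument is the multiplicative ``ratio'' $\delta_i N_i / C_\xi$, which is at most $1$ by hypothesis and is driven to zero by squaring $N$, so that both the interval width and the required proximity of $m$ to $\xi$ shrink in the way needed for every subsequent \textsc{Aqir} call to succeed.
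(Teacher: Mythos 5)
Your proof is correct and follows essentially the same route as the paper: an induction maintaining $\delta_i\leq C_\xi/N_i$, combining Lemma~\ref{lem:sufficientlysmall} to get $|m-\xi|\leq\delta_i^2/(8C_\xi)\leq\delta_i/(8N_i)$ with Lemma~\ref{aqir-success} for success, and using $N_{i+1}=N_i^2$ together with $w(I_{i+1})\leq w(I_i)/N_i$ to close the induction. Your remark on the strict inequality required by Lemma~\ref{aqir-success} is a technicality the paper itself glosses over (it is resolved because the estimate established in the proof of Lemma~\ref{lem:sufficientlysmall} is in fact strict), so nothing essential is missing.
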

\begin{proof}
We use induction on $i$. Assume that the first $i$ \textsc{Aqir} calls succeed. Then, 
another simple induction shows that 
$\delta_{j+i}:=w(I_{j+i})\leq \frac{N_j\delta_j}{N_{j+i}}<\frac{C_\xi}{N_{j+i}}$, where we use that $N_{j+i}=N_{j+i-1}^2$. 
Then, according to Lemma~\ref{lem:sufficientlysmall}, we have that
$$|m-\xi|\leq \delta_{j+i}^2\frac{1}{8C_\xi}\leq \delta_{j+i}\frac{C_\xi}{N_{j+i}}\frac{1}{8C_\xi}=\frac{1}{8}\frac{\delta_{j+i}}{N_{j+i}},$$
with $m$ as above.
By Lemma~\ref{aqir-success}, the \textsc{Aqir} call succeeds.
\end{proof}\vspace{0.25cm}

\begin{corollary}\textbf{\emph{\cite[Cor.~4.10]{kerber-complexity}}}
In the quadratic sequence, there is at most one failing \textsc{Aqir} call.\\
\end{corollary}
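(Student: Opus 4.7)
The plan is to exploit Corollary~\ref{cor:all_works}: once the interval width $\delta_j$ has dropped to at most $C_\xi/N_j$, every subsequent \textsc{Aqir} call in the sequence is guaranteed to succeed. I will show that, starting from the beginning of the quadratic sequence, any single failing call already drives the state into this ``safe'' regime, so no second failure is possible.

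To make this quantitative, I would track the ratio $r_j := \delta_j N_j / C_\xi$, in terms of which the hypothesis of Corollary~\ref{cor:all_works} reads simply $r_j \le 1$. Its evolution under the two types of step is transparent: a successful \textsc{Aqir} step shrinks $\delta$ by at least a factor $N_j$ while squaring $N$, hence $r_{j+1} \le r_j$; a failing step leaves $\delta$ unchanged and replaces $N_j$ by $\sqrt{N_j}$, hence $r_{j+1} = r_j/\sqrt{N_j}$. At the start of the quadratic sequence the call $S_k \aqirarrow S_{k+1}$ succeeds by definition of $k$, and $\delta_k \le C_\xi$, so $\delta_{k+1} \le C_\xi/N_k$ while $N_{k+1} = N_k^2$, giving the key starting bound $r_{k+1} \le N_k = \sqrt{N_{k+1}}$.

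Now let $j_0 \ge k+1$ be the first failing index in the quadratic sequence, assuming one exists. All intermediate calls $S_{k+1},\ldots,S_{j_0-1}$ succeed, so $r_{j_0} \le r_{k+1} \le N_k$, while the repeated squaring of $N$ along those successes yields $N_{j_0} = N_k^{2^{j_0-k}}$. The failure at step $j_0$ then produces
\[
r_{j_0+1} \;=\; \frac{r_{j_0}}{\sqrt{N_{j_0}}} \;\le\; \frac{N_k}{N_k^{2^{j_0-k-1}}} \;=\; N_k^{\,1 - 2^{j_0-k-1}} \;\le\; 1,
\]
since $2^{j_0-k-1} \ge 1$ whenever $j_0 \ge k+1$ (and $N_k \ge 4$). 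By Corollary~\ref{cor:all_works}, every \textsc{Aqir} call after step $j_0$ succeeds, so $j_0$ is the only failing index in the quadratic sequence.

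The main obstacle is not arithmetic but choosing the right potential: once $r_j = \delta_j N_j / C_\xi$ is isolated, both transitions of \textsc{Aqir} act on it multiplicatively, and the doubly-exponential growth of $N$ along any run of successes means that a single failure already overshoots the threshold $r \le 1$. The remaining work is the bookkeeping at the entry point $S_k$ and verifying that the algorithmic guarantee $\delta_{j+1} \le \delta_j/N_j$ after a successful step is indeed what Algorithm~\ref{alg:aqir} delivers.
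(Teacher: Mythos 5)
Your argument is correct and takes essentially the same route as the paper: locate the first failing call in the quadratic sequence and verify that the state it produces satisfies the hypothesis $w(I)\le C_\xi/N$ of Corollary~\ref{cor:all_works}, so that all subsequent calls succeed. The paper reaches that hypothesis more locally --- the call immediately preceding the failure succeeds with width at most $C_\xi$, so the failure merely restores $N$ to its value before that success while keeping the already-shrunk width, giving $w(I)\le C_\xi/N$ directly --- whereas your potential $r_j=\delta_j N_j/C_\xi$ together with the doubly exponential growth of $N$ performs the same bookkeeping globally; both are valid.
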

\begin{proof}
Let $(I_i,N_i)\aqirarrow (I_{i+1},N_{i+1})$ 
be the first failing \textsc{Aqir} call in the quadratic sequence.
Since the quadratic sequence starts with a successful \textsc{Aqir} call,
the predecessor $(I_{i-1},N_{i-1})\aqirarrow (I_i,N_i)$ 
is also part of quadratic sequence, and succeeds.
Thus we have the sequence
$$(I_{i-1},N_{i-1})\stackrel{Sucess}{\aqirarrow} (I_i,N_i)\stackrel{Fail}{\aqirarrow} (I_{i+1},N_{i+1})$$

One observes easily that 
$w(I_{i+1})=w(I_i)=\frac{w(I_{i-1})}{N_{i-1}}\leq \frac{C_\a}{N_{i-1}}$,
and $N_{i+1}=\sqrt{N_{i}}=\sqrt{N_{i-1}^2}=N_{i-1}$.
By Corollary~\ref{cor:all_works}, all further \textsc{Aqir} calls succeed.
\end{proof}\vspace{0.25cm}

\textbf{Cost of the linear sequence.} 
We bound the costs of refining the isolating interval of $\xi$ to size $C_\xi$
with \textsc{Aqir}.
We first show that, on average, the \textsc{Aqir} sequence 
refines by a factor two in every second step.
This shows in particular that refining using \textsc{Aqir} is at most a factor
of two worse than refining using approximate bisection.\\
\begin{lemma}\label{lem:aqir_not_worse_than_bisection}
Let $(S_0,\ldots,S_\ell)$ denote an arbitrary prefix of the \textsc{Aqir} sequence for $\xi$,
starting with the isolating interval $I_0$ of width $\delta$.
Then, the width of $I_\ell$ is not larger than $\delta 2^{-(\ell-1)/2}$.\\
\end{lemma}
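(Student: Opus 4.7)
The plan is to recast the sequence as a deterministic walk on the parameter $N$ and then bound the cumulative log-reduction of the width by an amortized argument. Classify each step $S_{i-1} \aqirarrow S_i$ into three types: \emph{success} ($S$) if the AQIR call succeeds, \emph{failure} ($F$) if it fails, and \emph{bisection} ($B$) if $N_{i-1}=2$ so that Algorithm~\ref{alg:abisect} is invoked. Directly from the algorithms, type $S$ yields $N_i=N_{i-1}^2$ and $w(I_i)\le w(I_{i-1})/N_{i-1}$; type $F$ yields $N_i=\sqrt{N_{i-1}}$ and $w(I_i)=w(I_{i-1})$; and type $B$ yields $N_i=4$ and $w(I_i)\le w(I_{i-1})/2$.

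Since $N_0=4$ and $N$ only takes values in $\{2^{2^k}:k\ge 0\}$, the integer $u_i:=\log_2\log_2 N_i$ is always non-negative, with $u_0=1$. Writing $r_i:=\log_2(w(I_{i-1})/w(I_i))\ge 0$ for the per-step log reduction, the three cases give $r_i\ge 2^{u_{i-1}}\ge 2$ on $S$, $r_i=0$ on $F$, and $r_i\ge 1$ on $B$, while $u_i-u_{i-1}$ equals $+1$, $-1$, and $+1$ respectively.

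Next I would introduce the potential $\Phi_i:=(u_i-1)/2$ (so $\Phi_0=0$) together with the amortized reductions $\hat r_i:=r_i-(\Phi_i-\Phi_{i-1})$. A short case check gives $\hat r_i\ge 3/2$ on $S$, $\hat r_i=1/2$ on $F$, and $\hat r_i\ge 1/2$ on $B$; in particular $\hat r_i\ge 1/2$ in every case. Telescoping,
\[
\sum_{i=1}^\ell r_i \;=\; \sum_{i=1}^\ell \hat r_i \;+\; \Phi_\ell - \Phi_0 \;\ge\; \frac{\ell}{2} + \frac{u_\ell-1}{2} \;\ge\; \frac{\ell-1}{2},
\]
since $u_\ell\ge 0$. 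Exponentiating yields $w(I_\ell)\le \delta\cdot 2^{-(\ell-1)/2}$, as claimed.

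The hard part is identifying the right potential. A failure contributes nothing to $r_i$, so that ``debt'' must be paid off by a later $B$ or $S$ step, which is forced to happen because $u$ cannot stay at $0$. The choice $\Phi=(u-1)/2$ exactly balances one failure (which drops $u$ by one) against half the contribution of a bisection or of the smallest possible success (each of which raises $u$ by one); this makes the bound tight on the extremal alternating pattern $F,B,F,B,F,\ldots$, where the reduction $(\ell-1)/2$ is actually attained.
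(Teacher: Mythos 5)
Your proof is correct, and it takes a genuinely different route from the one in the paper. The paper argues block-wise: it decomposes the sequence into maximal subsequences consisting of one successful (or bisection) step followed by a run of $j-1$ failing steps, and observes that sustaining $j-1$ consecutive failures is only possible if the preceding successful step was executed with a doubly-exponentially large $N$, so that the single factor-$N$ reduction already pays for the whole block at a rate of at least $2^{j/2}$; the offset of one step is handled by a separate remark that an initial failing step (possible only once, since $N_0=4$) is immediately followed by a bisection. You instead verify a purely local amortized inequality: with $u_i=\log\log N_i\geq 0$ and the potential $\Phi_i=(u_i-1)/2$, every step type (success, failure, bisection) has amortized log-reduction at least $1/2$, and telescoping with $\Phi_0=0$ and $u_\ell\geq 0$ gives the claim with no case analysis of where blocks begin or end. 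What the paper's decomposition buys is a direct view of the mechanism behind quadratic convergence (long failure runs can only follow a huge successful reduction); what your potential argument buys is uniformity and robustness: it treats bisections and the initial offset on the same footing, uses only that $u$ goes up on success/bisection, down on failure, and is bounded below, and your tightness remark (the alternating fail/bisect pattern attains exactly $(\ell-1)/2$) shows neither approach can improve the constant. One point worth stating explicitly in your write-up: a genuine \textsc{Aqir} success can only occur with $N_{i-1}\geq 4$, since $N_{i-1}=2$ is routed to Algorithm~\ref{alg:abisect}; this is exactly what justifies $r_i\geq 2^{u_{i-1}}\geq 2$ in the success case, and it is guaranteed by Algorithm~\ref{alg:aqir}, but you use it only implicitly.
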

\begin{proof}
Consider a subsequence $(S_i,\ldots,S_{i+j})$ of $(S_0,\ldots,S_\ell)$ such that
$S_i\aqirarrow S_{i+1}$ is successful, but any other step in the subsequence fails.
Because there are $j$ steps in total, and thus $j-1$ consecutive failing steps,
the successful step must have used a $N$ with $N\geq 2^{2^{j-1}}$. Because
$2^{j-1}\geq\frac{j}{2}$, it holds that
$$w(I_{i+j})\leq \frac{w(I_i)}{N}\leq w(I_{i+j})2^{-2^{j-1}}\leq w(I_{i+j})2^{-j/2}.$$
Repeating the argument for maximal subsequences of this form, we get that
either $w(I_\ell)\leq w(I_0)2^{-\ell/2}$ if the sequence starts with a successful step,
or $w(I_\ell)\leq w(I_0)2^{-(\ell-1)/2}$ otherwise, because the second step must
be successful in this case.
\end{proof}\vspace{0.25cm}

We want to apply Lemma~\ref{lem:apxqir} to bound the bit complexity of a single \textsc{Aqir} step.
The following lemma shows that the condition on $N$ from Lemma~\ref{lem:apxqir}
is always met in the \textsc{Aqir} sequence.\\

\begin{lemma}
Let $(I_j,N_j)\aqirarrow(I_{j+1},N_{j+1})$ be a call in an \textsc{Aqir} sequence and $I_j:=(a,b)$.
Then, $N_j\leq 2^{2(\RB+4-\log(b-a))}$.\\
\end{lemma}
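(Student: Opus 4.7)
The plan is to establish the telescoping invariant that the product $N_j\cdot w(I_j)$ is non-increasing along the AQIR sequence and then combine this with the width bound on $I_0$ coming from normality. Once this invariant is in place, the desired inequality becomes elementary: starting from $N_0=4$, the invariant yields $N_j\leq 4w(I_0)/w(I_j)$, and plugging in the normality bound on $w(I_0)$ gives a bound on $\log N_j$ linear in $\Gamma-\log(b-a)$, which is easily seen to be at most $2(\Gamma+4-\log(b-a))$.

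To prove monotonicity, I would argue by case analysis on the transition $(I_j,N_j)\to(I_{j+1},N_{j+1})$. For a \emph{successful} AQIR call with $N_j>2$, Algorithm~\ref{alg:aqir} sets $N_{j+1}=N_j^2$, while the discussion after Lemma~\ref{lem:point_location} guarantees $w(I_{j+1})\leq w(I_j)/N_j$; multiplying gives $N_{j+1}w(I_{j+1})\leq N_j w(I_j)$. For a \emph{failing} call, $N_{j+1}=\sqrt{N_j}\leq N_j$ and $w(I_{j+1})=w(I_j)$, so the product even strictly decreases. For the \emph{bisection} fall-back when $N_j=2$, Algorithm~\ref{alg:aqir} resets $N_{j+1}=4$ and invokes \textsc{Approximate\_Bisection}, whose output satisfies $w(I_{j+1})\leq w(I_j)/2$; hence $N_{j+1}w(I_{j+1})\leq 4\cdot w(I_j)/2=2w(I_j)\leq N_j w(I_j)$ (using $N_j=2$ with equality, or $N_j\geq 2$ more generally).

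Finally, because $I_0$ is normal, $I_0\subseteq(-2^{\Gamma+2},2^{\Gamma+2})$, so $w(I_0)\leq 2^{\Gamma+3}$. Iterating the invariant gives
\[
N_j\cdot w(I_j)\;\leq\;N_0\cdot w(I_0)\;\leq\;4\cdot 2^{\Gamma+3}\;=\;2^{\Gamma+5}.
\]
Taking logarithms, $\log N_j\leq \Gamma+5-\log(b-a)$. Since $I_j\subseteq I_0$ also gives $\log(b-a)\leq \Gamma+3$, one checks at once that $\Gamma+5-\log(b-a)\leq 2(\Gamma+4-\log(b-a))$, which yields the claimed bound $N_j\leq 2^{2(\Gamma+4-\log(b-a))}$.

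The only real obstacle is bookkeeping: one must make sure that the invariant is verified for \emph{all} possible transitions of the AQIR sequence, including the bisection fall-back at $N_j=2$ (which is not literally an AQIR step but is part of the refinement loop), and one must remember to invoke the width guarantee $w(I_{j+1})\leq w(I_j)/N_j$ rather than the weaker lower bound $w(I_{j+1})\geq w(I_j)/(8N_j)$. Beyond this, the argument is a clean one-line telescoping.
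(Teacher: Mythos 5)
Your argument is correct, and it takes a slightly different route from the paper. The paper proves the lemma by induction on $j$: the base case follows from normality ($b-a\leq 2^{\Gamma+3}$ gives $2^{2(\Gamma+4-\log(b-a))}\geq 4=N_0$), the failing case is trivial since the interval is unchanged while $N$ shrinks, and for a successful step it uses $w(I_j)\leq w(I_{j-1})/N_{j-1}$ together with $N_{j-1}=\sqrt{N_j}$ and $w(I_{j-1})\leq 2^{\Gamma+3}$ to get $b-a\leq 2^{\Gamma+3}/\sqrt{N_j}$, i.e.\ $N_j\leq 2^{2(\Gamma+3-\log(b-a))}$ --- the factor $2$ in the exponent comes directly from this square-root relation. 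You instead establish the monotone invariant $N_j\,w(I_j)\leq N_0\,w(I_0)\leq 2^{\Gamma+5}$, which never invokes the square-root relation and yields the linear bound $\log N_j\leq \Gamma+5-\log(b-a)$; this is quantitatively stronger than the stated bound once the interval is small, and you correctly reduce it to the claim via $\log(b-a)\leq\Gamma+3$. Your case analysis is sound: the width guarantees $w(I^*)\leq w(I)/N$ for a successful step, $w$ unchanged with $N\mapsto\sqrt{N}$ for a failing step, and $w(I^*)\leq w(I)/2$ with $N\mapsto 4$ for the bisection fallback at $N=2$ are exactly what the algorithm provides, and your explicit treatment of the bisection case is a point the paper's induction handles only implicitly. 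Both proofs buy the same conclusion from the same two algorithmic facts; yours packages them as a cleaner telescoping product at the cost of an extra (easy) comparison step at the end, while the paper's version lands directly on the bound in the form stated.
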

\begin{proof}
We do induction on $j$.
Note that $I_0\subset(-2^{\RB+2},2^{\RB+2})$ by normality, 
hence $b-a\leq 2^{\RB+3}$.
It follows that $2^{2(\RB+4-\log(b-a))}\geq 4=N_0$.
Assume that the statement is true for $j-1$. If the previous step
$(I_{j-1},N_{j-1})\aqirarrow(I_{j},N_{j})$ is failing, then
$N_j=\sqrt{N_{j-1}}$ and the isolating interval remains unchanged, so the statement
is trivially correct. If the step is successful, then it holds that
$(b-a)\leq\frac{2^{\RB+3}}{\sqrt{N_j}}$. By rearranging terms,
we get that $N_j\leq 2^{2(\RB+3-\log(b-a))}$.
\end{proof}\vspace{0.25cm}

It follows inductively that the conditions of Lemma~\ref{lem:apxqir} 
are met for each call in the \textsc{Aqir} sequence 
because $I_0$ is normal by construction.
Therefore, the linear sequence for a root $\xi$ of $f$
is computed with a bit complexity of
\begin{eqnarray}
\label{linear_complexity_bound}
\tilde{O}((\RB+\log (C_\xi)^{-1}) d (\log(C_\xi^{-1})+\tau+d\RB+\Sig))
\end{eqnarray}
because $O(\RB+\log (C_\xi^{-1}))$ steps are necessary
to refine the interval to a size smaller than $C_\xi$ 
by Lemma~\ref{lem:aqir_not_worse_than_bisection},
and the bit complexity is bounded
by $\Otilde(d (\log(C_\xi^{-1})+\tau+d\RB+\Sig))$ with Lemma~\ref{lem:apxqir}.
It remains to bound $\log(C_\xi)^{-1}$; we do so by bounding the sum of all
$\log(C_\xi)^{-1}$ with the following lemma.\\

\begin{lemma}\label{lem:sum_of_C_alphas}
$\sum_{i=1}^{m} \log (C_{z_{i}})^{-1} = O(d(\RB+\log d)+\Sig))$\\
\end{lemma}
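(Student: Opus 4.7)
The plan is to prove a per-root bound $\log C_{z_j}^{-1} = O(\log d + \log \sigma(z_j)^{-1})$ and then finish by a direct summation. Fixing $\xi = z_j$ a real root, the definition of $C_\xi$ gives
$$\frac{1}{C_\xi} \;=\; \frac{8d^2}{\sigma(\xi)} \;+\; 8\sum_{i=2}^{d}\left(\frac{\sigma(\xi)}{d^2}\right)^{i-2}\frac{|f^{(i)}(\xi)|}{|f'(\xi)|},$$
so it suffices to show that each of these $d$ summands is $O(d^2/\sigma(\xi))$.

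The crux is to control the ratios $|f^{(i)}(\xi)|/|f'(\xi)|$ using the factorization $f(x)=a_d\prod_k (x-z_k)$. Splitting off the root $\xi$ as $f(x)=a_d(x-\xi)g(x)$ with $g(x)=\prod_{k:z_k\neq\xi}(x-z_k)$, one has $f^{(i)}(\xi) = i\cdot a_d\cdot g^{(i-1)}(\xi)$, and Taylor-expanding $g$ about $\xi$ expresses $g^{(i-1)}(\xi)/(i-1)!$ as the elementary symmetric polynomial of degree $d-i$ in the shifted roots. This yields the identity
$$\frac{f^{(i)}(\xi)/i!}{f'(\xi)} \;=\; \sum_{\substack{S\subset\{k:z_k\neq\xi\}\\ |S|=i-1}}\prod_{k\in S}\frac{1}{\xi-z_k}.$$
Bounding each factor by $|\xi-z_k|\ge \sigma(\xi)$ and counting $\binom{d-1}{i-1}$ summands gives $|f^{(i)}(\xi)|/|f'(\xi)|\le i!\binom{d-1}{i-1}/\sigma(\xi)^{i-1}\le d^i/\sigma(\xi)^{i-1}$. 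Substituting back, the $i$-th summand in the expression for $1/C_\xi$ is at most $d^{4-i}/\sigma(\xi)$; summing the resulting geometric-type series over $i$ yields $1/C_\xi = O(d^2/\sigma(\xi))$, and hence $\log C_\xi^{-1} = O(\log d + \log \sigma(\xi)^{-1})$.

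To finish, I would sum over the real roots $z_1,\ldots,z_m$. Since $\sigma_i \le 2\cdot 2^{\Gamma}$ for every root of $f$ (distance between two points in the disc of radius $2^\Gamma$), one has $\log\sigma_i^{-1}\ge -(\Gamma+1)$. Therefore
$$\sum_{j=1}^m \log\sigma(z_j)^{-1} \;=\; \Sigma_f \;-\; \sum_{i=m+1}^d \log\sigma_i^{-1} \;\le\; \Sigma_f + (d-m)(\Gamma+1).$$
Combining with the $O(m\log d)$ contribution from the $\log d$ terms, we obtain $\sum_{j=1}^m \log C_{z_j}^{-1} = O(d(\Gamma + \log d) + \Sigma_f)$, as claimed.

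\textbf{Main obstacle.} The only nontrivial step is deriving the identity for $f^{(i)}(\xi)/f'(\xi)$ in terms of reciprocals of root differences; everything else is bookkeeping with binomial coefficients and the trivial geometric summation. The technique of splitting off the linear factor $(x-\xi)$ and Taylor-expanding the cofactor $g$ is standard, so I do not anticipate any real difficulty, but it is the one place where care is required to match the claimed constants.
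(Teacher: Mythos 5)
Your proposal is correct, and it differs from the paper's proof in the one step that actually matters. The paper bounds the ratios $|f^{(i)}(z_\ell)/f'(z_\ell)|$ by writing $f^{(i)}$ in terms of the roots $z_1',\ldots,z_{d-1}'$ of the \emph{derivative} $f'$, so that the quotient becomes a sum over $(i-1)$-subsets of products $\prod 1/(z_\ell-z_j')$; it then needs the nontrivial external estimate $|z_\ell-z_j'|\ge\sigma_\ell/d$ (Eigenwillig's theorem on the distance of roots to roots of higher derivatives) to conclude $|f^{(i)}(z_\ell)/f'(z_\ell)|\le d^{2i-2}/\sigma_\ell^{i-1}$. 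You instead split off the linear factor, $f(x)=a_d(x-\xi)g(x)$, use $f^{(i)}(\xi)=i\,a_d\,g^{(i-1)}(\xi)$ and the Taylor expansion of $g$ at $\xi$ to get the identity
$$\frac{f^{(i)}(\xi)/i!}{f'(\xi)}=\sum_{\substack{S\subset\{k:\,z_k\neq\xi\}\\|S|=i-1}}\ \prod_{k\in S}\frac{1}{\xi-z_k},$$
which lets you use only the trivial bound $|\xi-z_k|\ge\sigma(\xi)$ coming straight from the definition of separation (valid for complex roots as well). This makes your argument self-contained and even gives a slightly sharper per-root bound ($1/C_\xi=O(d^2/\sigma(\xi))$ versus the paper's $8d^3/\sigma_\ell$), though the difference is absorbed in the $O(d\log d)$ term; the paper's route buys brevity at the price of citing the derivative-root separation result. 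Your final summation step (splitting $\Sigma_f$ over real and non-real roots and using $\sigma_i\le 2^{\Gamma+1}$ to control the non-real part) is the same as the paper's, and your bookkeeping $i!\binom{d-1}{i-1}\le d^i$ and the geometric summation over $i$ are fine.
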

\begin{proof}
We note that
$$\sum_{\ell=1}^{m} \log(C_{z_{\ell}})^{-1}=\sum_{\ell=1}^{m} \log \left(8\cdot\left(\frac{d^2}{\sigma_\ell}+\sum_{i=2}^d \left(\frac{\sigma_\ell}{d^2}\right)^{i-2}\left|\frac{f^{(i)}(z_\ell)}{f'(z_\ell)}\right|\right)\right).$$
We focus on the quotient $\left|\frac{f^{(i)}(z_\ell)}{f'(z_\ell)}\right|$. Let $z_1',\ldots,z_{d-1}'$ denote the (not necessarily distinct) roots of $f'$.
Note that for $x\in\C$ and any $i\geq 1$, 
$$f^{(i)}(x)=a_d\sum_{\stackrel{X\subseteq\{1,\ldots,n-1\}}{|X|=i-1}}\, \prod_{\stackrel{j\in\{1,\ldots,d-1\}}{j\notin X}} (x-z_j')$$
Therefore, the quotient writes as
$$\left|\frac{f^{(i)}(z_\ell)}{f'(z_\ell)}\right|=\left|\sum_{\stackrel{X\subseteq\{1,\ldots,d-1\}}{|X|=i-1}}\, \prod_{j\in X} \frac{1}{z_\ell-z_j'}\right|\leq \sum_{\stackrel{X\subseteq\{1,\ldots,d-1\}}{|X|=i-1}}\, \prod_{j\in X} \frac{1}{|z_\ell-z_j'|}. $$
Since $|z_\ell-z_j'|\geq \frac{\sigma_\ell}{d}$~\cite[Thm.8]{Eigenwillig2007a}, we can further bound this to
$$\sum_{\stackrel{X\subseteq\{1,\ldots,d-1\}}{|X|=i-1}}\, \prod_{j\in X} \frac{1}{|z_\ell-z_j'|}\leq \sum_{\stackrel{X\subseteq\{1,\ldots,d-1\}}{|X|=i-1}}\, \prod_{j\in X} \frac{d}{\sigma_\ell} \leq \sum_{\stackrel{X\subseteq\{1,\ldots,d-1\}}{|X|=i-1}} \left(\frac{d}{\sigma_\ell}\right)^{i-1} \leq d^{i-1}\left(\frac{d}{\sigma_\ell}\right)^{i-1} = \frac{d^{2i-2}}{\sigma_\ell^{i-1}},$$
and, therefore,
$$\sum_{i=2}^d \left(\frac{\sigma_\ell}{d^2}\right)^{i-2}\left|\frac{f^{(i)}(z_\ell)}{f'(z_\ell)}\right| \leq \sum_{i=2}^d \left(\frac{\sigma_\ell}{d^2}\right)^{i-2}\frac{d^{2i-2}}{\sigma_\ell^{i-1}}=\sum_{i=2}^{d}\frac{d^2}{\sigma_\ell}=(d-1)\frac{d^2}{\sigma_\ell}.$$
Plugging in into the overall sum yields
\begin{eqnarray*}
&&\sum_{\ell=1}^{m} \log(C_{z_{\ell}})^{-1}=\sum_{\ell=1}^{m}\log \left(8\cdot\left(\frac{d^2}{\sigma_\ell}+(d-1)\frac{d^2}{\sigma_\ell}\right)\right)
= 3d+ \sum_{\ell=1}^{m}\log \frac{d^3}{\sigma_\ell}\\
&=& 3d+3m\log d + \Sig + \sum_{\ell=m+1}^{d}\log\sigma_\ell
\leq 3d+3d\log d + \Sig + d(\RB+1)=O(d(\RB+\log d)+\Sig).
\end{eqnarray*}
\end{proof}\vspace{0.25cm}

\begin{lemma}\label{lem:linear_cost}
The linear sequences for all real roots are computed within a total bit complexity of
$$\tilde{O}(d(d\RB+\Sig)(\tau+d\RB+\Sig).\\$$
\end{lemma}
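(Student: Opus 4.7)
The plan is to sum the per-root bit complexity from (\ref{linear_complexity_bound}) over all $m\leq d$ real roots of $f$ and then apply Lemma~\ref{lem:sum_of_C_alphas} to bound the contribution of the root-dependent quantity $\log C_\xi^{-1}$ globally. Concretely, for each real root $z_\ell$ the linear sequence costs $\tilde O((\RB+\log C_{z_\ell}^{-1})\cdot d\cdot(\log C_{z_\ell}^{-1}+\tau+d\RB+\Sig))$ bit operations, so the total cost is $\tilde O(d)$ times a sum of products of the form $(\RB+\log C_{z_\ell}^{-1})\cdot(\log C_{z_\ell}^{-1}+\tau+d\RB+\Sig)$.

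The key observation is that the sum of products cannot be split into a product of sums directly, so I would bound one of the two factors uniformly by its maximum over $\ell$. Since all $\log C_{z_\ell}^{-1}$ are (without loss of generality, up to absorption of constants in $\tilde O$) nonnegative, the max is bounded by the sum, and Lemma~\ref{lem:sum_of_C_alphas} gives
\[
\max_{\ell}\log C_{z_\ell}^{-1}\;\leq\;\sum_{\ell=1}^{m}\log C_{z_\ell}^{-1}\;=\;O(d(\RB+\log d)+\Sig)\;=\;\tilde O(d\RB+\Sig).
\]
Therefore the second factor is uniformly $\tilde O(\tau+d\RB+\Sig)$ across all real roots. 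The sum of the first factors is likewise controlled by Lemma~\ref{lem:sum_of_C_alphas}: $\sum_{\ell=1}^m(\RB+\log C_{z_\ell}^{-1})\leq m\RB+\tilde O(d\RB+\Sig)=O(d\RB+\Sig)$.

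Combining the two estimates, the total cost is
\[
\tilde O(d)\cdot O(d\RB+\Sig)\cdot \tilde O(\tau+d\RB+\Sig)=\tilde O(d(d\RB+\Sig)(\tau+d\RB+\Sig)),
\]
which is the claimed bound. The main subtlety is only the product structure of the per-root estimate, which forces us to bound one factor by its maximum before summing; once one applies Lemma~\ref{lem:sum_of_C_alphas} to both the max and the sum, the result follows immediately. I also expect to remark, for completeness, that the hypothesis of Lemma~\ref{lem:apxqir} (i.e., the bound on $N_j$ and normality of the interval) has already been verified in the paragraph preceding (\ref{linear_complexity_bound}), so that invoking the per-step bound at every step of every root's linear sequence is justified.
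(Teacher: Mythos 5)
Your proposal is correct and follows essentially the same route as the paper: both sum the per-root bound (\ref{linear_complexity_bound}) over all real roots (with the hypotheses of Lemma~\ref{lem:apxqir} already verified beforehand) and reduce everything to Lemma~\ref{lem:sum_of_C_alphas}. The only cosmetic difference is that you bound the second factor uniformly by its maximum (where your nonnegativity caveat enters), whereas the paper expands the product and bounds each resulting sum separately; both manipulations are equivalent at the stated $\tilde{O}$ level of precision.
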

\begin{proof}
The total cost of all linear sequences is bounded by
$$\tilde{O}(\sum_{i=1}^{m}(\RB+\log (C_{z_i}^{-1})) d (\log(C_{z_i}^{-1})+\tau+d\RB+\Sig)).$$
By rearranging terms, we obtain
$$=\tilde{O}(d^2\RB(\tau+d\RB+\Sig)+d(\tau+d\RB+\Sig)\sum\log (C_{z_i}^{-1})+d(\sum\log (C_{z_i}^{-1}))^2)$$
which equals $\tilde{O}(d(d\RB+\Sig)(\tau+d\RB+\Sig))$  
with Lemma~\ref{lem:sum_of_C_alphas}.
\end{proof}\vspace{0.25cm}

\textbf{Cost of the quadratic sequence.}
Let us fix some root $\xi$ of $f$. Its quadratic sequence consists
of at most $1+\log L$ steps, because $N$ is squared in every step (except
for at most one failing step) and the sequence stops as soon
as the interval is smaller than $2^{-L}$.
Since we ignore logarithmic factors, it is enough to bound the costs of
one QIR step in the sequence.
Clearly, since the interval is not smaller than $2^{-L}$ in such a step,
we have that $\log(b-a)^{-1}\leq L$. Therefore, the required precision
is bounded by $O(L+\tau+d\RB+\Sig)$. It follows that
an \textsc{Aqir} step performs up to $\tilde{O}(d(L+\tau+d\RB+\Sig))$ bit operations.\\

\begin{lemma}\label{lem:quadratic_cost}
The quadratic sequences for one real root is computed within a bit complexity of
$$\tilde{O}(d(L+\tau+d\RB+\Sig)).\\$$
\end{lemma}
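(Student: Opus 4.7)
The plan is to combine the quadratic convergence of the sequence with the per-step cost bound from Lemma~\ref{lem:apxqir}. The argument is essentially outlined already in the paragraph preceding the statement, so my proof will make that sketch rigorous.

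First I would bound the length of the quadratic sequence. By the preceding corollary, at most one \textsc{Aqir} call in the quadratic sequence fails; every other call is successful and therefore squares $N$. Since each successful call shrinks the interval by a factor at least $1/N$, after $k$ successful steps starting from width $w(I_k) \le C_\xi$ the width is at most $w(I_k)\cdot 2^{-2^{k-1}}$ or so (accounting for the at most one failing step which leaves $N$ and the width essentially unchanged). Hence $O(\log L)$ steps suffice to reach width $2^{-L}$, and since $\tilde{O}$ hides logarithmic factors, it is enough to bound the cost of a single \textsc{Aqir} step inside the quadratic sequence.

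Next I would bound the per-step cost. For every call $(I_j,N_j)\aqirarrow(I_{j+1},N_{j+1})$ appearing in the quadratic sequence before termination, the interval width satisfies $w(I_j)\ge 2^{-L}$, so $\log w(I_j)^{-1}\le L$. The preceding lemma in the excerpt already verified that the condition $N_j\le 2^{2(\Gamma+4-\log w(I_j))}$ of Lemma~\ref{lem:apxqir} holds along any \textsc{Aqir} sequence starting from a normal interval, and the interval remains normal by Lemma~\ref{lem:apxqir}. Therefore Lemma~\ref{lem:apxqir} applies and yields a bit complexity of
$$\tilde{O}\bigl(d(\tau+d\RB+\Sig-\log w(I_j))\bigr) = \tilde{O}\bigl(d(L+\tau+d\RB+\Sig)\bigr)$$
for a single step.

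Finally, multiplying the per-step cost $\tilde{O}(d(L+\tau+d\RB+\Sig))$ by the $O(\log L)$ bound on the number of steps and absorbing the logarithm into the $\tilde{O}$ notation gives the claimed total bit complexity of $\tilde{O}(d(L+\tau+d\RB+\Sig))$ for the quadratic sequence of a single root $\xi$. No serious obstacle is anticipated; the only point requiring a bit of care is making sure the at most one failing step does not spoil the geometric decay argument, which is handled by the fact that failure leaves the interval unchanged and merely replaces $N$ by $\sqrt{N}$, delaying progress by a single step.
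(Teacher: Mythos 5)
Your proposal is correct and follows essentially the same route as the paper, whose justification for this lemma is precisely the paragraph preceding the statement: at most one failing call plus squaring of $N$ gives $O(\log L)$ steps, each step has width at least $2^{-L}$ so $\log(b-a)^{-1}\leq L$, and Lemma~\ref{lem:apxqir} then bounds each step by $\tilde{O}(d(L+\tau+d\RB+\Sig))$, with the logarithmic number of steps absorbed into $\tilde{O}$. Your additional checks (normality preservation and the bound on $N_j$ from the preceding lemma) are exactly the ingredients the paper relies on implicitly.
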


\textbf{Total cost.} We have everything together to prove the first main result\\
\begin{theorem}\label{thm:main}
Algorithm~\ref{alg:main} performs root refinement within
$$\tilde{O}(d(d\RB_f+\Sig)^2 + dL)$$
bit operations for a single real root\footnote{In its initial formulation, Algorithm~\ref{alg:main} assumes that isolating intervals for \emph{all} real roots are given. If only one isolating interval $I_k$ for a root $z_k$ is given, we have to normalize $I_k$ first and, then, compute the signs of $f$ at the endpoints of $I$.}  of $f$,
and within
$$\tilde{O}(d(d\RB_f+\Sig)^2 + d^2L)$$
for all real roots.
The coefficients of $f$ need to be approximated to
$\tilde{O}(L+d\RB_f+\Sig)$ bits after the binary point.\\
\end{theorem}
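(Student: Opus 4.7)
The plan is to assemble the costs of the three phases of Algorithm~\ref{alg:main}---normalization, linear sequences, and quadratic sequences---and then eliminate the coefficient-size parameter $\tau$ to match the purely geometric form in the theorem.

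First I would simply add the three contributions. By Lemma~\ref{lem:normalization_cost}, normalization costs $\tilde{O}(d(d\RB_f+\Sig)(\tau+d\RB_f+\Sig))$; by Lemma~\ref{lem:linear_cost}, the linear sequences for all real roots together cost the same amount asymptotically; by Lemma~\ref{lem:quadratic_cost}, one quadratic sequence costs $\tilde{O}(d(L+\tau+d\RB_f+\Sig))$, so refining a single real root adds one such term, and refining all $m\le d$ real roots adds $m$ of them, contributing $\tilde{O}(d^2(L+\tau+d\RB_f+\Sig))$. (Choosing the integer upper bound $\RB:=\RB_f+O(\log d)$ for Definition~\ref{def:normal} only introduces factors absorbed into $\tilde{O}(\cdot)$, so the $\RB$'s in those lemma statements may be read as $\RB_f$.)

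The main obstacle is to drop $\tau$, which does not appear in the theorem. Here I would invoke Landau's inequality $\max_i|a_i|\le 2^d|a_d|\prod_i\max(1,|z_i|)$, which yields $\tau\le\log|a_d|+d(\RB_f+1)$. Because Algorithm~\ref{alg:main} accesses $f$ only through approximate sign evaluations at rational points and the results of all such evaluations are invariant under multiplication of $f$ by a positive constant, I would pass to $\tilde f:=f/|a_d|$. The roots, and hence $\RB_f$, $\Sig$, and all separations $\sigma_i$, are preserved, while the new coefficient parameter satisfies $\tilde\tau=O(d\RB_f)$. Substituting $\tilde\tau$ for $\tau$ above, every occurrence of $\tau$ collapses into the already-present $d\RB_f$ term; after noting that $d(d\RB_f+\Sig)$ and $d^2(d\RB_f+\Sig)$ are absorbed into $d(d\RB_f+\Sig)^2$ (since $\RB_f\ge 1$), this gives $\tilde{O}(d(d\RB_f+\Sig)^2+dL)$ for a single root and $\tilde{O}(d(d\RB_f+\Sig)^2+d^2L)$ for all real roots.

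For the precision claim I would take the maximum of the per-step precision bounds of Lemmas~\ref{lem:apxbisection} and~\ref{lem:apxqir}, both of the form $\rho_{\max}=O(\tilde\tau+d\RB_f+\Sig-\log w(I))$. Every interval width arising during the execution satisfies $-\log w(I)=O(L+d\RB_f+\Sig)$: during normalization $w(I)=\Omega(\sigma_k/d)=2^{-O(d\RB_f+\Sig)}$ by Lemma~\ref{lem:normal_lower_bound}, while in the refinement phase $w(I)$ drops below $2^{-L}$ only at the final successful step by at most a constant factor. Combined with $\tilde\tau=O(d\RB_f)$, this yields $\rho_{\max}=\tilde{O}(L+d\RB_f+\Sig)$ bits after the binary point, which is exactly the precision the coefficient oracle must supply.
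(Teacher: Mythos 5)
Your decomposition is essentially the paper's own: sum the costs of Lemma~\ref{lem:normalization_cost}, Lemma~\ref{lem:linear_cost} and Lemma~\ref{lem:quadratic_cost} (one quadratic sequence per root, hence the extra factor $m\le d$ for all roots), read the precision demand off Lemmas~\ref{lem:apxbisection} and~\ref{lem:apxqir}, and then eliminate $\tau$ by normalizing the leading coefficient and invoking the root--coefficient bound $\tau\le\log|a_d|+O(d\RB_f)$. The one substantive difference is how the normalization is done: you divide by the real number $|a_d|$, whereas the paper shifts the coefficients by $s=\lfloor\log|a_d|\rfloor$ bits, i.e.\ divides by the power of two $2^s$, and cites~\cite{sag-complexity} for $\tau=O(d\RB)$ once $1\le|a_d|<2$. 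The paper's choice is not purely cosmetic: division by $2^s$ acts exactly on the coefficient approximations, so the oracle for $f$ immediately yields an oracle for the scaled polynomial and the precision demand on $f$ even drops by $s$; in contrast, $|a_d|$ is itself only known approximately, so you must simulate an oracle for $\tilde f=f/|a_d|$ from the oracle for $f$. This does work --- since $|a_i|/|a_d|\le 2^{O(d\RB_f)}$, producing $\tilde f$'s coefficients to $\rho$ bits needs $f$'s coefficients to only $\rho+O(d\RB_f)$ bits, which stays within $\tilde{O}(L+d\RB_f+\Sig)$ --- but your appeal to scale-invariance of sign evaluations alone does not settle the precision claim, which is stated for the coefficients of $f$, so a sentence of this form is missing. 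A second glossed-over point is your parenthetical ``choose $\RB:=\RB_f+O(\log d)$'': the algorithm (normalization, Definition~\ref{def:normal}) needs a concrete upper bound $\RB\ge\RB_f$, and a cheap bound such as Cauchy's only gives $\RB=O(d\RB_f)$, which would spoil the $(d\RB_f+\Sig)^2$ term; the paper obtains a bound with $\RB_f\le\RB<\RB_f+4\log d$ from~\cite{sag-complexity} at cost $\Otilde((d\RB_f)^2)$ and explicitly charges this to the total. Neither point breaks your argument, but both need to be addressed to make the proof complete.
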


\begin{proof}
We first restrict to the case where $1\le |a_d|<2$. The so far achieved complexity bounds are formulated in terms of an arbitrary (but given) upper bound $\Gamma\in\N$ on $\Gamma_f$. In~\cite[Section 6.1]{sag-complexity}, it is shown how to compute a $\Gamma$ with $\Gamma_f\le\Gamma<\Gamma_f+4\log d$ using $\Otilde((d\Gamma_f)^2)$ bit operations and approximations of $f$ to $\Otilde(d\Gamma_f)$ bits after the binary point. Furthermore, the latter construction also shows that $\tau=\left\lceil \log(\max_i|a_i|)\right\rceil=O(d\Gamma)$ if $1\le |a_d|<2$. 
By Lemma~\ref{lem:normalization_cost}, the normalization for all isolating intervals requires
$\tilde{O}(d(d\RB+\Sig)(\tau+d\RB+\Sig))$
bit operations. The linear subsequences of the \textsc{Aqir} sequence
are computed in the same time by Lemma~\ref{lem:linear_cost}.
The quadratic subsequences are computed with 
$\tilde{O}(d^2L+d^2\tau+d^3\RB+d^2\Sig)$ bit operations 
by Lemma~\ref{lem:quadratic_cost}; the latter three terms are
all dominated by $\tilde{O}(d(d\RB+\Sig)(\tau+d\RB+\Sig))$. 
Hence, with $\Gamma=O(\Gamma_f+\log d)$ as above and $\tau=\Otilde(d\Gamma_f)$, the claimed bound on the bit complexity to refine all roots follows. The maximal number of required bits follows from 
Lemma~\ref{lem:apxqir} because the maximal required
precision in any \textsc{Aqir} step is bounded by $O(L+\tau+d\RB+\Sig)=\tilde{O}(L+d\RB_f+\Sig)$. The bound on refining a single
root follows easily when considering the cost of the quadratic sequence for this root only.

For the more general case, where $1\le|a_d|< 2$ is not necessarily given, we first shift the coefficients by $s=\left\lfloor \log |a_d|\right\rfloor$ bits such that we can apply the above result to the shifted polynomial. Since this coefficient shift does not change the roots, our bit complexity bound follows immediately. For the required precision, we need $\tilde{O}(L+d\RB_f+\Sig)-s$ since we need an approximation of the shifted polynomial to $\tilde{O}(L+d\RB_f+\Sig)$ bits after the binary point.
\end{proof}\vspace{0.25cm}

\section{Experimental Results}
\label{sec:experiments}

We compare the asymptotic bounds of \textsc{Eqir} and \textsc{Aqir}
and their practical behavior for increasing input sizes in the case of integer polynomials.
We have implemented
both algorithms exactly as described in this paper (without the techniques presented in
the forthcoming Section~\ref{sec:tricks}), in the context of the 
\textsc{Cgal}\footnote{Computational Geometry Algorithms Library, \url{www.cgal.org}} library, written in C++.
We used \textsc{gmp}, version 5.0.4, for integer and rational arithmetic.
We generated integer polynomials of various types (described below) using the Maple routine \textit{randpoly},
isolated their real roots using Descartes method, and measured the time to refine them to a predefined
refinement precision on a laptop with dual Pentium core clocked at 2.4 GHz with 3MB cache size each, 
and a total RAM of 4 GB, running Debian squeeze.
Both the source code and the benchmark instances can be sent on request.

In the first run, we chose polynomials with $20$-bit-coefficients chosen uniformly at random 
and a degree between $50$ and $1600$. 
The refinement quality was set to 10000 bits after the binary point.
Table~\ref{tbl:dense} (top) lists the results. 
We also generated two bivariate dense polynomials, each with randomly chosen
$10$-bit coefficients and total degrees between $5$ and $40$, and computed the resultant of them.
The results are listed in Table~\ref{tbl:dense} (bottom).

\begin{table}
\begin{center}
\begin{tabular}{|c|c|c|c|c|c|c|}
\hline
  & \multicolumn{2}{c|}{\textsc{Eqir}} & \multicolumn{3}{c|}{\textsc{Aqir}} & \\
d & $\frac{\text{\# bis.}}{\text{\# roots}}$ & $\frac{\text{time}}{\text{\# roots}}$ & $\frac{\text{\# bis-norm}}{\text{\# roots}}$ & $\frac{\text{\# bis-refine}}{\text{\# roots}}$ & $\frac{\text{time}}{\text{\# roots}}$ & $\frac{t_{\text{exact}}}{t_{\text{approx}}}$ \\
\hline\hline
  50 & 2.5   &   0.438 & 1.5 & 3     &   7.12    &     0.0615 \\
 100 & 1.5 &     1.63  & 1   & 2.5   &  14.2     &     0.115  \\
 200 & 3.5 &     6.40  & 3   & 3     &  30.7     &     0.209  \\
 400 & 3.6 &    24.2   & 2.4 & 2.3   &  60.0     &     0.403  \\
 800 & 3   &    97.6   & 2   & 1.3   & 124       &     0.790  \\
1600 & 4.3 &   392     & 2.3 & 2.3   & 249       &     1.58   \\
\hline
\end{tabular}

\begin{tabular}{|c|c|c|c|c|c|c|c|}
\hline
&  & \multicolumn{2}{c|}{\textsc{Eqir}} & \multicolumn{3}{c|}{\textsc{Aqir}} & \\
$(d_1,d_2)$ & $\tau$ & $\frac{\text{\# bis.}}{\text{\# roots}}$ & $\frac{\text{time}}{\text{\# roots}}$ & $\frac{\text{\# bis-norm}}{\text{\# roots}}$ & $\frac{\text{\# bis-refine}}{\text{\# roots}}$ & $\frac{\text{time}}{\text{\# roots}}$ & $\frac{t_{\text{exact}}}{t_{\text{approx}}}$ \\
\hline\hline
(10,5)  &  161 & 1     &    0.445  & 1     & 0.5   &   7.18     &     0.0620 \\
(10,10) &  226 & 3.3   &    1.67   & 1.8   & 2.5   &  14.6      &     0.114  \\
(20,10) &  353 & 2.2   &    6.38   & 1.8   & 2.5   &  30.1      &     0.212  \\
(20,20) &  487 & 1.8   &   25.2    & 2.2   & 1.7   &  60.1      &     0.414  \\
(40,20) &  755 & 2.9   &  104      & 1.8   & 2.1   & 127        &     0.813  \\
(40,40) & 1042 & 3.6   &  426      & 1.8   & 1.9   & 274        &     1.556  \\
\hline
\end{tabular}
\end{center}
\vspace{0.3cm}
\caption{Experimental results for polynomials with random $20$-bit coefficients 
(first table) and for resultants of bivariate polynomials with random $10$-bit coefficients
(second table). For the latter, the degree is $d_1\cdot d_2$, and the maximal coefficient
bitsize is displayed in the second column. In all cases, the final
precision $L$ is set to $10000$. 
For each degree, we generated $5$ instances and measured the time
of root refinement for \textsc{Eqir} and \textsc{Aqir}. The displayed numbers refer
to the instance whose quotient of running times (last column) 
is the median among the $5$ instances.
The other columns display (from left to right) the number of bisections
the \textsc{Eqir} method performs internally per root, 
the refinement time of \textsc{Eqir} per root, the number of bisections in the normalization
of \textsc{Aqir} per root, the number of bisections \textsc{Aqir} performs per root
after normalizing,
the refinement time of \textsc{Aqir} per root, and the ratio of the total running times
of \textsc{Eqir} and \textsc{Aqir}.}
\label{tbl:dense}
\end{table}

First of all, the quotient between the running times for \textsc{Eqir} and
\textsc{Aqir} is proportional to $d$ which matches the asymptotic bound proved in this paper.
Moreover, both in the exact and approximate version, only a small number of bisections are performed during
the refinement. That means that quadratic convergence takes place almost immediately. The normalization phase
(which only exists in the approximate version) also performs just a small number of bisections.
This implies that the normalization phase and the linear sequence have a minor impact on the practical
running time of the algorithm, and that the cost is dominated by the
quadratic sequence. Recall that a single root can be refined in 
$\Otilde(d^3\tau^2+dL)$ with \textsc{Aqir}, where the first term is caused by
the normalization and the linear sequence, the second by the quadratic sequence.
Indeed, the running time per root increases linearly in $d$ for the approximate variant, as suggested
by the second term of the complexity bound.
For \textsc{Eqir}, the complexity is $\tilde{O}(d^4\tau^2+d^2L)$ for a single root,
with the second term accounting for the quadratic sequence.
We can observe that
the running time per root grows quadratically with $d$. Note that also in the second table,
the running times of both QIR versions are only moderately worse despite the coefficient growth
of the input instances.

We investigate the dependance on the refinement precision $L$ by
fixing a degree of 100 and a coefficient bitsize of $20$, 
and let the final precision $L$ grow from $2000$ to
$128000$ (Table~\ref{tbl:precision}). As we can observe, the quotient of the running times 
of both refinement variants stabilizes for high values of $L$.
However, the growth factor of the running time is not linear in $L$; we observe that the running time roughly
increases by a factor of about $2.6$ when $L$ doubles, 
which corresponds to a growth of roughly $L^{1.4}$.
To explain this super-linear behavior, 
we remark that our analysis ignored logarithmic factors in $L$; 
at least one such factor is included from fast integer arithmetic. Also, \textsc{gmp} does only switch
to asymptotically fast arithmetic for very large integers and uses asymptotically inferior methods
for smaller instances. 

\begin{table}
\begin{center}

\begin{tabular}{|c|c|c|c|}
\hline
$L$ & $\frac{t_{\text{exact}}}{\text{\# roots}}$ & $\frac{t_{\text{approx}}}{\text{\# roots}}$ & $\frac{t_{\text{exact}}}{t_{\text{approx}}}$ \\
\hline\hline
  2000 &  0.0817 &   1.68 & 0.0486 \\
  4000 &  0.220  &   2.89 & 0.0760 \\
  8000 &  0.605  &   6.06 & 0.100  \\
 16000 &  1.60   &  13.9  & 0.115  \\
 32000 &  4.36   &  35.6  & 0.122  \\
 64000 & 11.6    &  94.6  & 0.122  \\
128000 & 28.9    & 242    & 0.120  \\
\hline
\end{tabular}
\end{center}
\vspace{0.3cm}
\caption{Experimental results for polynomials with random $20$-bit coefficients of
degree $100$. Again, the table lists the median over $5$ independent instances. }
\label{tbl:precision}
\end{table}

Finally, we investigate the case of growing coefficient sizes. For that, we fix a degree of $100$ and
a final precision of $10000$ bits and vary the coefficient size. We see in Table~\ref{tbl:coefficients}
that the running time grows very moderately for increasing bitsizes. Also, \textsc{Aqir}
handles large coefficients worse than \textsc{Eqir} (we also have tested a polynomial with $128000$-bit
coefficients where the ratio drops to about $0.01$). 
Recall that our implemented version of \textsc{Aqir} uses absolute precision arithmetic and therefore does not round
the coefficients during the computation.
Consequently, it suffers from high coefficient sizes in every step where it uses
interval arithmetic. An improved version of \textsc{Aqir} using relative precision would remove this drawback.

\begin{table}
\begin{center}
\begin{tabular}{|c|c|c|c|}
\hline
$\tau$ & $\frac{t_{\text{exact}}}{\text{\# roots}}$ & $\frac{t_{\text{approx}}}{\text{\# roots}}$ & $\frac{t_{\text{exact}}}{t_{\text{approx}}}$ \\
\hline\hline
  40 &  1.63  &  14.2  & 0.115  \\
  80 &  1.64  &  14.1  & 0.116  \\
 160 &  1.66  &  14.5  & 0.114  \\
 320 &  1.68  &  14.9  & 0.112  \\
 640 &  1.75  &  15.5  & 0.111  \\
1280 &  1.82  &  16.7  & 0.109  \\
2560 &  1.96  &  19.4  & 0.101  \\
5120 &  2.11  &  24.7  & 0.085  \\
\hline
\end{tabular}
\end{center}
\vspace{0.3cm}
\caption{Experimental results for polynomials of
degree $100$, and a refinement precision of $10000$ bits,
with coefficients chosen uniformly at random.
Again, the table lists the median over $5$ independent instances. }
\label{tbl:coefficients}
\end{table}

\vspace{0.1cm}

To summarize our experiments, the cost of the quadratic sequence dominates the refinement process,
and the cost of this sequence is proportional to $dL^\alpha$ for \textsc{Aqir} and proportional to $d^2L^\alpha$
for \textsc{Eqir} in practice, with $\alpha\approx 1.4$. It shows that the approximate version
is not just a theoretical trick to reduce the complexity, but has a practical impact. 

On the possible disagreement that
\textsc{Aqir} is faster than \textsc{Eqir} only for quite large values of $d$ and $L$,
we reply that our version of \textsc{Aqir} is rather designed for a simple complexity analysis than
for a fast implementation. 
Some optimizations include to use relative instead of absolute precision, to leave out the additional
subdivision points at $m^*\pm\frac{7}{8}\omega$ (which are formally needed for quadratic convergence,
but should be insignificant in practice), and to choose the internal working precision more adaptively
(instead of always setting $\rho\gets 2$ before each while-loop).
We believe that such improvements lead to an implementation
which shows its strength for much smaller instances.

Finally, we remark that the recently introduced \textsc{cgal}-package
on algebraic computations~\cite{bhk-generic} represents algebraic numbers by their
isolating intervals and uses QIR to refine them. The implemented version therein can be considered
as a ``light version'' of the techniques presented in this paper, using relative approximations
to speed up polynomial evaluations, but falling back to exact methods in the case of failure.
The results of our experimental evaluations motivate an integration of a fully approximate variant 
(that is, \textsc{Aqir} with the described optimizations)
into \textsc{Cgal}.

\section{Asymptotic improvements}
\label{sec:tricks}
We further improve our bound from Theorem~\ref{thm:main} in two ways: first, in Section~\ref{ssec:multipoint}, we adapt the technique of \emph{fast multipoint evaluation}
to lower the second term in the bound for \emph{all} real roots from $\tilde{O}(d^2L)$
to $\tilde{O}(dL)$ (so that refining all roots has the same complexity as refining a single root).
Second, we restrict our attention to integer polynomials; the improved bound from Section~\ref{ssec:multipoint}
yields $\tilde{O}(d^3\tau^2+dL)$ for a polynomial of degree $d$ and bitsize $\tau$. Using a recent algorithm for computing isolating
intervals and further refining them to a fixed precision, we improve the latter bound to $\tilde{O}(d^3\tau+dL)$ in Section~\ref{ssec:integers}.
We remark that both optimizations require adaptions of our \textsc{Aqir} algorithm which are not recommended for a practical
implementation (at least not for polynomials of degree and bitsize as they are considered these days).

\subsection{Fast multipoint evaluation}
\label{ssec:multipoint}

It is well known that, roughly speaking, evaluating a univariate polynomial of degree $d$ in $O(d)$ positions simultanuously has the
same arithmetic complexity as evaluating it at a single position, up to logarithmic factors~\cite[Corollary 10.8]{gg-mca-99}. 
These techniques are called \emph{fast multipoint evaluation}; it suggests itself to apply them on our 
\textsc{Aqir} algorithm since polynomial evaluation is the dominant operation. However, since all our evaluations are only approximate
with a fixed working precision, we need an approximate variant of fast multipoint evaluation. 
We use a recent result by Kobel and Sagraloff:\\

\begin{theorem}\cite[Thm. 10]{ks-fast}
\label{thm:ame}
Let $F\in\mathbb{C}[x]$ be a polynomial of degree $d$ with $\|F\|_1\leq 2^\tau$, with $\tau\geq 1$, and let $x_1,\ldots,x_d\in\mathbb{C}$
be complex points with absolute values bounded by $2^\Gamma$, where $\Gamma\geq 1$. Then, approximate multipoint evaluation up to a precision
of $2^{-L}$ for some integer $L\geq 0$, that is, computing $\tilde{y}_j$ such that $|\tilde{y}_j-F(x_j)|\leq 2^{-L}$ for all $j$, is possible
with
$$\tilde{O}(d(L+\tau+d\Gamma))$$
bit operations. The precision demand on $F$ and the points $x_j$ is bounded by $L+O(\tau+d\Gamma+d\log d)$ bits after the binary point.\\\\
\end{theorem}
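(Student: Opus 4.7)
The plan is to adapt the classical subproduct-tree algorithm for fast multipoint evaluation (as in Chapter~10 of von zur Gathen and Gerhard) to the approximate setting, and to carefully track the precision loss at every level of the tree. First I would build the subproduct tree whose leaves are the linear factors $M_{0,j}(x):=x-x_j$ and whose internal nodes at level $i$ are the products $M_{i,j}:=M_{i-1,2j}\cdot M_{i-1,2j+1}$, computed using approximate polynomial multiplication. By a straightforward induction, each $M_{i,j}$ has degree $2^i$ and $\|M_{i,j}\|_1\le (1+\max_j|x_j|)^{2^i}\le 2^{2^i(\Gamma+1)}$, so at the top level (where $2^i=\Theta(d)$) the intermediate polynomials have coefficient magnitudes of order $2^{O(d\Gamma)}$.

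The evaluation is then carried out top-down: compute the approximate remainder $F\bmod M_{\lceil\log d\rceil,0}$, and recursively reduce modulo the two children at each step until reaching the leaves, where $F\bmod(x-x_j)=F(x_j)$. Each remainder step uses Newton-iteration-based fast polynomial division of cost $\tilde O(\deg)$ bit operations per fixed working precision. Classical counting gives $O(d)$ work per level and $O(\log d)$ levels, so the total number of arithmetic operations is $\tilde O(d)$; the bit complexity is $\tilde O(d)$ times the working precision.

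The hard part, and the reason Theorem~\ref{thm:ame} (rather than the classical statement) is needed, will be choosing the working precision so that the final error at the leaves is guaranteed to be at most $2^{-L}$. I would argue this by a backward error analysis: at every level $i$, the approximate polynomials agree with their exact counterparts up to an error of $2^{-\rho}\cdot 2^{O(2^i(\Gamma+\log d))}$ in the $\|\cdot\|_1$ norm, because each multiplication and each Newton inversion amplifies the absolute error by a factor bounded by the $\|\cdot\|_1$ norm of the operands times a factor polynomial in the degree. Summing the contributions from the $O(\log d)$ levels along the path from the root to a leaf, and using that $\|F\|_1\le 2^\tau$ and $|x_j|\le 2^\Gamma$, one sees that taking $\rho=L+O(\tau+d\Gamma+d\log d)$ suffices to certify $|\tilde y_j-F(x_j)|\le 2^{-L}$ for every $j$. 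The precision demand on $F$ and on the points $x_j$ inherits the same bound.

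The main obstacle, and the place where I would spend most of the technical effort, is the precision analysis for the fast polynomial remainder step. A reduction of a polynomial of degree $<2d$ modulo a polynomial of degree $d$ is implemented by computing an approximate reciprocal of the reversed divisor via Newton iteration; the reciprocal coefficients can be huge (of magnitude up to $2^{O(d\Gamma)}$ when the roots are large), so the usual "add one guard bit per multiplication" heuristic does not suffice. I would need an explicit perturbation bound stating that if the divisor and dividend are perturbed by $\varepsilon$ in $\|\cdot\|_1$, then the remainder is perturbed by at most $\varepsilon\cdot\|M_{i,j}\|_1^{O(1)}$, and then inject this into the per-level analysis. Once this estimate is established, the two claimed bounds (bit complexity $\tilde O(d(L+\tau+d\Gamma))$ and precision demand $L+O(\tau+d\Gamma+d\log d)$) follow by multiplying the per-level cost by $\log d$ levels and by $d$ nodes in total.
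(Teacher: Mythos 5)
You should note first that the paper does not prove this statement at all: it is imported verbatim as Theorem~10 of Kobel and Sagraloff~\cite{ks-fast}, so there is no in-paper argument to compare against, and your outline in fact follows the same strategy as that reference (subproduct tree, top-down remaindering via Newton-based fast division, and a fixed-point error analysis level by level, with the $d\Gamma$ term coming from the coefficient growth $\|M_{i,j}\|_1\le 2^{O(2^i(\Gamma+\log d))}$).

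As a proof, however, your proposal has a genuine gap, and you name it yourself: the perturbation bound for the approximate fast remainder step is deferred rather than proved, and that bound is precisely the non-trivial content of the theorem. You need (i) a quantitative statement that an $\varepsilon$-perturbation of dividend and divisor (in $\|\cdot\|_1$) perturbs the quotient and remainder by at most $\varepsilon$ times a factor $2^{O(2^i(\Gamma+\log d))}$, which requires controlling the truncated inverse of the reversed (monic) divisor, whose coefficients can themselves be of magnitude $2^{\Theta(2^i(\Gamma+\log d))}$; (ii) an argument that the Newton iteration computing this reciprocal can be run in fixed-point arithmetic at the same working precision without additional loss, and within the bit-operation budget $\tilde{O}(2^i(\rho+2^i\Gamma+\tau))$ per node; and (iii) the verification that the amplification factors along a root-to-leaf path compound only to $2^{O(d\Gamma+d\log d)}$ (the geometric sum $\sum_i 2^i(\Gamma+\log d)$), so that a single global precision $\rho=L+O(\tau+d\Gamma+d\log d)$ suffices rather than the naive bound with an extra factor of $d$. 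Without these three steps the claimed precision demand and the bound $\tilde{O}(d(L+\tau+d\Gamma))$ are plausible but not established; carrying them out is essentially reproducing~\cite{ks-fast}, which is why the present paper cites that result instead of proving it.
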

Note that, with the notations of the theorem, $[\tilde{y}_j-2^L, \tilde{y}_j+2^L]$ is guaranteed to contain $F(x_j)$; 
therefore, the theorem gives an alternative to interval arithmetic with bounded precision.
Specifically, we can replace the usage of interval arithmetic in line~\ref{algstep:ia1} of Algorithm~\ref{alg:abisect}
and in lines~\ref{algstep:ia2} and~\ref{algstep:ia3} of Algorithm~\ref{alg:aqir} by the multipoint evaluation algorithm in~\cite{ks-fast} 
(for now, just applied at a single point). The precision quality is adaptively increased during the execution of the while loop,
and we can prove the same asymptotic bounds (up to an additional term $O(d\log d)$) on the maximal precision 
as in Lemmas~\ref{lem:apxbisection} and~\ref{lem:apxqir}.

Of course, we want to exploit that Theorem~\ref{thm:ame} bounds the cost of evaluating a polynomial at multiple points.
For that goal, we adapt our root refinement algorithm as follows: think of multipoint evaluation as a virtual machine with $d$ input slots
and $d$ output slots which returns $\tilde{y}_1,\ldots,\tilde{y}_d$ for input $x_1,\ldots,x_d$ as described in Theorem~\ref{thm:ame}.
The idea is to perform the refinement of all real roots simultaneously and to use that machine whenever a polynomial has to be evaluated.
To be a bit more precise, reconsider Algorithm~\ref{alg:main}. We leave the normalization subprocedure unchanged (we could use
multipoint evaluation here as well, but it would not change the complexity). Instead of the for-loop, we initialize an integer $P$ to $1$, find all isolating intervals of length at least $2^{-P}$ and call a modified version of \textsc{Aqir} for them that we describe below; 
if all intervals are smaller, we double $P$ and repeat. That means that intervals which are comparably very small 
are not further refined until the other intervals are roughly of the same size.

The modification of \textsc{Aqir} are as follows: we apply Algorithm~\ref{alg:aqir} to all isolating intervals and divide them into two groups:
those for which $N=2$ (that is, an approximate bisection is performed) and those for which $N> 2$. For the first group, we execute
the while loop of Algorithm~\ref{alg:abisect} simultanuously for all intervals; it makes sense to think about this as a parallel process
with execution branches~-- we can easily simulate parallelism by a sequential algorithm that cycles through the different branches.
Every branch fills one input slot of the virtual machine and then waits for the other branches to fill their slots, 
(or send a signal that they have left the loop already).
Once all slots are filled, the machine starts the evaluation and all branches continue their execution until the next loop iteration
requires an evaluation with increased precision. This process continues until all branches have left the loop. 
For the group of intervals that perform an \textsc{Aqir} step with $N>2$, the same strategy is used.

Regarding the complexity of the described method; note that all computations except for the calls of the virtual machine are negligible.\footnote{We remark that, for each \textsc{Aqir} step, we also have to compute an approximation of the fraction of the values $f(a)$ and $f(b)-f(a)$, provided that sufficiently good approximations of $f(a)$ and $f(b)$ are already computed. The cost for the computation of one fraction is then bounded by $\tilde{O}(n\Gamma+\tau+\rho)$, where $\rho$ denotes the required output precision. Hence, when processing up to $n$ intervals in parallel, the total cost is bounded by $\tilde{O}(n(n\Gamma+\tau+\rho))$ bit operations which matches the complexity for one call of the virtual machine with output precision $\rho$.} 
Moreover, for a fixed value of $P$ (as defined above), every interval of length at least $2^{-P}$ is refined by at least one half
per iteration (in an amortized sense).
It follows that there are at most $O(P)$ iterations of the modified \textsc{Aqir} procedure, and afterwards, all intervals
are of size at most $2^{-P}$. On the other hand, if all intervals have entered the quadratic sequence, the virtual machine spends at most $O(\log P)$ 
iterations before doubling $P$ because there is at most one failing QIR call per isolating interval.

We analyze the complexity similar to Section~\ref{sec:aqir_sequence}: Set $C:=\max_\xi C_\xi$, where the maximum is taken over all real roots $\xi$ of $f$ and $C_\xi$ is defined 
as in Definition~\ref{def:C_bound}. Let $P_0$ be the smallest power of two that is larger than $2^{-C}$. 
We bound the complexity to refine all intervals to size $2^{-P_0}$ or less: As we said above, we need $O(P_0)$ calls of the multipoint version
of \textsc{Aqir} for that. Each call, in turn, requires at most
$$\tilde{O}(d(P_0+\tau+d\RB+\Sig))$$
bit operations (compare Lemma~\ref{lem:apxqir} and Theorem~\ref{thm:ame}). Since $P_0\leq 2C\leq 2\sum_\xi C_\xi=O(d(\RB+\log d)+\Sig))$ (Lemma~\ref{lem:sum_of_C_alphas}),
the cost of refining all intervals to size less than $2^{-P_0}$ is bounded by
$$\tilde{O}(d(d\RB+\Sig)(\tau+d\RB+\Sig)$$
with the same argumentation as in Lemma~\ref{lem:linear_cost}.

The benefit of multipoint evaluation takes effect in the second part of the complexity analysis:
suppose that all intervals have entered the quadratic sequence, then, as mentioned above, there
are at most $O(\log P)$ calls per $P$, and there are only $\log L$ different $P$-values reached during the refinement. It follows
that (up to logarithmic factors) the cost is determined by a single execution of the multipoint version of \textsc{Aqir} which is
$$\tilde{O}(d(L+\tau+d\RB+\Sig)).$$
Notice that this matches the previous cost of the quadratic sequence for a single root. Putting everything together, we can prove in analogy 
to Theorem~\ref{thm:main} that the multipoint evaluation variant of \textsc{Aqir} needs
$$\tilde{O}(d(d\RB+\Sig)^2+dL)$$
bit operations to refine \emph{all} isolating intervals to a width of at most $2^{-L}$.

In summary, we obtain the following result:\\

\begin{theorem}\label{thm:themainresult}
When using fast approximate multipoint evaluation, Algorithm~\ref{alg:main} performs root refinement within
$$\tilde{O}(d(d\RB_f+\Sig)^2 + dL)$$
bit operations for all real roots of $f$.
The coefficients of $f$ need to be approximated to
$\tilde{O}(L+d\RB_f+\Sig)$ bits after the binary point.\\
\end{theorem}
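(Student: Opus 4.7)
The plan is to adapt Algorithm~\ref{alg:main} so that every evaluation of $f$ is performed inside a call to the approximate multipoint machine of Theorem~\ref{thm:ame}, batched across all surviving isolating intervals. Concretely, I first replace the interval-arithmetic calls in line~\ref{algstep:ia1} of Algorithm~\ref{alg:abisect} and in lines~\ref{algstep:ia2}, \ref{algstep:ia3} of Algorithm~\ref{alg:aqir} by an evaluation via~\cite{ks-fast} at a single point, doubling the output precision in the while-loop until the termination criterion is met. Since Theorem~\ref{thm:ame} guarantees error $\leq 2^{-L}$ at cost $\tilde O(d(L+\tau+d\Gamma))$, the same precision demand that drove Lemmas~\ref{lem:apxbisection} and~\ref{lem:apxqir} continues to suffice (with at most an additive $O(d\log d)$ hidden in the $\tilde O$), so the correctness and per-step precision bounds are inherited.

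Next I replace the outer for-loop over roots by a synchronized parallel scheme. Introduce a parameter $P$, initially $1$, and repeatedly invoke a modified \textsc{Aqir} on exactly those intervals of current width greater than $2^{-P}$, doubling $P$ once all remaining intervals have width $\leq 2^{-P}$, until $P\geq L$. Inside one such collective invocation I partition the active intervals into the bisection group ($N=2$) and the QIR group ($N>2$); the branches in each group execute Algorithm~\ref{alg:abisect} or Algorithm~\ref{alg:aqir} in lockstep, pooling their current query point into one input slot of the multipoint machine and waiting at each while-loop iteration until every branch is ready. Sequentially simulating this parallelism, one round over a group costs a single multipoint call of $\tilde O(d(\rho+\tau+d\Gamma+\Sigma))$ for the current working precision $\rho$, plus negligible overhead (including $\tilde O(d(d\Gamma+\tau+\rho))$ for all the secant fractions $f(a)/(f(b)-f(a))$ computed in parallel).

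For the complexity I reuse the decomposition of Section~\ref{sec:aqir_sequence}. The normalization is unchanged, costing $\tilde O(d(d\Gamma+\Sigma)(\tau+d\Gamma+\Sigma))$ by Lemma~\ref{lem:normalization_cost}. Set $P_0$ to be the smallest power of two exceeding $\max_\xi \log C_\xi^{-1}$; by Lemma~\ref{lem:sum_of_C_alphas}, $P_0 = \tilde O(d\Gamma+\Sigma)$. Refining every interval down to width $\leq 2^{-P_0}$ takes $O(P_0)$ collective rounds (Lemma~\ref{lem:aqir_not_worse_than_bisection} still applies per branch), each costing $\tilde O(d(P_0+\tau+d\Gamma+\Sigma))$, hence $\tilde O(d(d\Gamma+\Sigma)(\tau+d\Gamma+\Sigma))$ in total — exactly the linear-phase bound of Lemma~\ref{lem:linear_cost}. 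Once every branch has entered its quadratic sequence (Corollary~\ref{cor:all_works}), at most $O(1)$ collective rounds pass between successive doublings of $P$, giving $O(\log L)$ collective rounds in total, each of cost $\tilde O(d(L+\tau+d\Gamma+\Sigma))$. Summing and absorbing the $\tau=\tilde O(d\Gamma_f)$ bound and the choice $\Gamma=\Gamma_f+O(\log d)$ from the proof of Theorem~\ref{thm:main} yields the claimed $\tilde O(d(d\Gamma_f+\Sigma)^2+dL)$, with required coefficient precision $\tilde O(L+d\Gamma_f+\Sigma)$ inherited from Theorem~\ref{thm:ame}.

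The main obstacle is the bookkeeping of the parallel scheme: one must argue that intervals which are left alone while $P$ is small (because they are already narrow) do not accumulate extra work, that the amortized ``refine by factor two per round'' statement of Lemma~\ref{lem:aqir_not_worse_than_bisection} survives after branches drop out asynchronously, and that the at-most-one failing call per quadratic sequence still gives $O(\log P)$ collective rounds per value of $P$ rather than $O(m\log P)$. Once this scheduling argument is made rigorous, the complexity estimate is a routine sum.
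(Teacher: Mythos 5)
Your proposal follows essentially the same route as the paper: replacing the interval-arithmetic calls by single-point uses of the approximate multipoint evaluation of Theorem~\ref{thm:ame}, batching all active intervals through a synchronized $P$-doubling scheme with bisection and QIR groups, and then reusing the normalization/linear/quadratic decomposition of Section~\ref{sec:aqir_sequence} with Lemmas~\ref{lem:sum_of_C_alphas} and~\ref{lem:apxqir} to bound the linear phase and a logarithmic number of collective rounds for the quadratic phase. The scheduling caveats you flag are handled at the same informal level in the paper itself, so your argument is correct and matches it.
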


\subsection{Integer polynomials}
\label{ssec:integers}

We now concentrate on the case where the polynomial $f$ has integer coefficients of absolute value bounded by $2^\tau$. 
Directly applying Theorem~\ref{thm:themainresult} to $f$, with $\Gamma_f=O(\tau)$ and $\Sig=\Otilde(d\tau)$~\cite[\S 7.2]{sag-complexity}, yields $\tilde{O}(d^3\tau^2+dL)$ for the bit 
complexity of approximating all real roots to an error of $2^{-L}$ or less. 
The quadratic appearance of $\tau$ in the first term is due to the normalization phase and the 
linear sequences; according to Lemma~\ref{lem:quadratic_cost} and Section~\ref{ssec:multipoint}, the 
quadratic sequence for all root amounts for 
$\tilde{O}(d(L+\tau+d\RB+\Sig))=\tilde{O}(d^2\tau+dL)$ bit operations. The 
higher computational cost with respect to $\tau$ for the first two subroutines 
is caused by initial bisection steps before quadratic convergence can be guaranteed. 
The following considerations which already appeared in an extended (unpublished) version of~\cite{NewDsc} show that the normalization phase as well as the linear sequences 
can be replaced by a smarter approach for integer polynomials. 
As a result, the first term in the above complexity bound improves by a factor of $\tau$.

Recent work~\cite{NewDsc} introduces a novel exact subdivision algorithm (denoted \ND) to 
isolate the real roots of a polynomial with integer coefficients. \ND~combines 
Descartes' Rule of Signs, Newton iteration and a QIR-like subdivision strategy, thus achieving quadratic 
convergence for most iterations. In order to keep our 
presentation self-contained, we briefly review the algorithm and refer to the 
full paper for details.\vspace{0.25cm}
\hrule\vspace{0.25cm} 
\ND~recursively subdivides an initial interval $\mathcal{I}_0$ known to contain 
all real roots of $f$ (e.g.~$\mathcal{I}_0=(-2^{\tau+1},2^{\tau+1})$). In each 
iteration, we proceed an interval $I=(a,b)\subset I_0$ and an integer $N_I$, 
where we initially set $N_{\mathcal{I}_0}:=4$. Based on Descartes' 
Rule of Signs, we compute an upper bound\footnote{$v_I$ is the number of sign variations in the coefficient sequence of the polynomial $f_I(x):=(x+1)^d\cdot f((ax+b)/(x+1))$.} $v_I=\operatorname{var}(f,I)$ for the number $m_I$ of roots 
within $I$ which has the same parity as $m_I$. 
If $v_I=0$, 
we discard $I$. If $v_I=1$, we store $I$ as an isolating interval. For 
$v_I>1$, we consider a point $t\in I$ (e.g.~$t=a$ or $t=b$; cf.~\cite{NewDsc} 
for details) and 
compute the Newton approximation $\lambda=t-v_I\cdot \frac{f(t)}{f'(t)}$ 
according to the ``virtual multiplicity'' $v_I$ of $I$. In the case where 
$\lambda\in 
I$, we consider a subinterval $I'=(a',b')$ of width $w(I')=w(I)/N_I$ that 
contains $\lambda$ and compute $v_{I'}:=\operatorname{var}(f,I')$. If 
$v_I=v_{I'}$, we proceed with $I'$ (i.e.~$I\backslash I'$ is discarded) and 
set 
$N_{I'}:=N_I^2$. Otherwise, $I$ is subdivided into two equally sized intervals 
$I_l=(a,\operatorname{mid}(I))$ and $I_r=(\operatorname{mid}(I),b)$, and we 
set $N_{I_l}:=N_{I_r}:=\max(4,\sqrt{N_I})$. $\ND$ proceeds in this way until either all intervals are discarded or stored as isolating.\vspace{0.25cm}
\hrule\vspace{0.25cm}

The complexity analysis from~\cite{NewDsc} shows that $\ND$ isolates all real 
roots of $f$ using no more than 
$\tilde{O}(d^3\tau)$ bit operations. Each of the isolating
intervals $I$ contains exactly one root $\xi$, and it holds that $v_I=1$. 
Notice that 
we can also use $\ND$ for further refining such an 
isolating interval to a width of $2^{-L}$ or less, that is, $I$ is processed 
in the same manner as in 
the isolation routine, but we do not stop until $w(I)<2^{-L}$. The proof 
of~\cite[Theorem~6]{NewDsc} shows that only $O(\log(d\tau)+\log L)$ 
iterations are necessary in order to do so. The cost for each refinement step 
is bounded by $\tilde{O}(d^2(L+\tau))$ since, for computing $v_I$ and the Newton approximation $\lambda$, we have to perform $O(d)$ arithmetic 
operations with $O(d(L+\tau))$ bit numbers; cf.~\cite{NewDsc} for details. Hence, the cost in order to obtain an 
approximation of $\xi$ to $L$ bits after the binary point is bounded by 
$\tilde{O}(d^2(L+\tau))$, and thus $\tilde{O}(d^3(L+\tau))$ for \emph{all real 
roots} of $f$. 
When $L$ is dominating, the latter bound is by a factor of $d^2$ 
larger than the bound $\tilde{O}(d^3\tau^2+d^2L)$ achieved by the 
$\textsc{Aqir}$-method. This is explained by the following two facts: First, $\ND$ exclusively 
uses exact arithmetic, whereas \textsc{Aqir} uses approximate 
arithmetic. Second, we can use fast approximate multipoint evaluation for the \textsc{Aqir} method, whereas $\ND$ uses polynomial evaluation only at single points.  

We design a hybrid method combining $\ND$ and \textsc{Aqir} 
by altering our root refinement strategy as follows: After having isolated the roots using $\ND$, 
we keep refining with the same method until the interval is so small that quadratic convergence
of \textsc{Aqir} is guaranteed. The accumulated cost for getting these intervals
is bounded by $\tilde{O}(d^3\tau)$ as shown below. Then, we apply our the modified \textsc{Aqir} method from Section~\ref{ssec:multipoint} until the interval is smaller
then $2^{-L}$. The cost for that is determined by the quadratic sequence of \textsc{Aqir} which is
$\tilde{O}(d^3\tau+dL)$ for all real roots in total.
The next theorem gives the detailed analysis of this method. The main challenge is that the threshold
for switching from $\ND$ to \textsc{Aqir} depends on parameters which are not readily known; the algorithm
has to estimate these parameters closely enough to achieve the desired asymptotic bound.\\

\begin{theorem}\label{thm:maininteger}
For a square-free polynomial $f$ of degree $d$ with integer coefficients of modulus less than $2^\tau$, we can compute isolating intervals (for all real root of $f$) of width less than $2^{-L}$ using $\tilde{O}(d^3\tau+dL)$ bit operations.\\
\end{theorem}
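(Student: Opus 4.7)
The plan follows the hybrid strategy sketched just before the theorem statement: first isolate the real roots of $f$ with $\ND$, then continue refining each isolating interval with $\ND$ until its width falls below the quadratic-convergence threshold $C_\xi$ of Definition~\ref{def:C_bound}, and finally apply the fast-multipoint variant of \textsc{Aqir} from Section~\ref{ssec:multipoint} to bring every width below $2^{-L}$. The isolation phase costs $\tilde O(d^3\tau)$ bit operations by the complexity analysis of~\cite{NewDsc}, so the remaining task is to bound the two refinement phases and design the handoff between them.

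For the second phase, I aim for intervals $I_\xi$ with $w(I_\xi)\le C_\xi/4$ so that, by Corollary~\ref{cor:all_works}, the subsequent \textsc{Aqir} sequence (starting with $N_0=4$) consists exclusively of successful steps. Since $\ND$ converges quadratically on an already isolated root, only $O(\log(d\tau)+\log\log C_\xi^{-1})$ additional iterations are needed per root, and each iteration on an interval of width $2^{-P}$ costs $\tilde O(d^2(\tau+P))$ bit operations. The last iteration dominates, so the work spent on root $\xi$ is $\tilde O(d^2(\tau+\log C_\xi^{-1}))$. Summing over the at most $d$ real roots yields $\tilde O(d^3\tau+d^2\sum_\xi \log C_\xi^{-1})$; invoking Lemma~\ref{lem:sum_of_C_alphas} with the integer-polynomial bounds $\Gamma_f=O(\tau)$ and $\Sig=\tilde O(d\tau)$ (cf.~\cite[\S 7.2]{sag-complexity}) gives $\sum_\xi \log C_\xi^{-1}=\tilde O(d\tau)$, so the second phase also costs $\tilde O(d^3\tau)$.

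For the third phase, the linear subsequence of \textsc{Aqir} is empty by construction, and the quadratic subsequences for all real roots together cost $\tilde O(d(L+\tau+d\Gamma_f+\Sig))=\tilde O(dL+d^2\tau)$ by the argument in Section~\ref{ssec:multipoint}. Adding the three contributions produces the claimed bound $\tilde O(d^3\tau+dL)$.

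The main obstacle is implementing the stopping rule of the second phase without explicit knowledge of $C_\xi$. My plan is to maintain a certified lower bound $\tilde C_\xi\le C_\xi$ by evaluating $|f'|,|f''|,\ldots$ at $\operatorname{mid}(I_\xi)$ via interval arithmetic and combining the results through the defining formula of $C_\xi$; by Lemma~\ref{intarithmetic}, one such evaluation costs $\tilde O(d(\tau+\log C_\xi^{-1}))$ bit operations, and performing the test only after each doubling of the internal working precision adds at most a logarithmic factor, keeping everything within the $\tilde O(d^3\tau)$ budget. A secondary technical point is to verify the precondition of Lemma~\ref{lem:apxqir}, namely that the intervals handed to multipoint \textsc{Aqir} are normal in the sense of Definition~\ref{def:normal}: the distance condition follows from $w(I_\xi)\le C_\xi/4\le \sigma(\xi,f)/(32d^2)$ by Lemma~\ref{qir-bound-lemma}(1) together with a triangle-inequality argument, and the lower bound on $|f|$ at the endpoints is obtained analogously to the proof of Lemma~\ref{lem:normalization}.
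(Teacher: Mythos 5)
Your overall skeleton (isolate with \ND, keep refining with \ND\ until the quadratic-convergence threshold is reached, then finish with the multipoint variant of \textsc{Aqir}) is the same as the paper's, and your cost accounting for the first and third phases is fine. The genuine gap is in the step you yourself flag as the main obstacle: the computable stopping rule for the second phase. Your plan is to certify a lower bound $\tilde C_\xi\le C_\xi$ by interval-evaluating $|f'|,|f''|,\dots$ at $\operatorname{mid}(I_\xi)$ and plugging into the defining formula of $C_\xi$ --- but that formula also involves $\sigma(\xi,f)$, the distance from $\xi$ to its nearest root, which may be a \emph{non-real} root. Neither derivative evaluations nor the real isolating intervals produced by \ND\ give you a certified lower bound on this quantity, so the test cannot be ``combined through the defining formula'' as stated. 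Moreover, if you patch this with a crude lower bound on $\sigma(\xi,f)$ (say via $|f'(\xi)|$ and a root bound of size $2^{O(\tau)}$), the surrogate threshold can undershoot $C_\xi$ by a factor whose logarithm is $\Theta(d\tau)$ \emph{per root}; summed over up to $d$ roots this inflates the second phase to $\tilde O(d^4\tau)$, breaking the budget. This is exactly why the paper abandons $C_\xi$ at this point and instead uses the $\sigma$-free threshold $w_\xi=|f'(\xi)|/(32ed^32^\tau\max\{1,|\xi|\}^{d-1})$ from the conference version: the two quantities appearing in it are certified up to a factor $2$ by the computable data $|f'(a)|$ and $\max\{1,|a|\}$, \emph{after} first refining with \ND\ until $w(I)<1/(2d)$ and $\operatorname{var}(f,I^+)=1$ for an enlarged interval $I^+$ (two-circle theorem/Obreshkoff lens), which is the device that certifies the absence of nearby roots, real or complex. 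The resulting computable surrogate $w_a$ satisfies $w_\xi/32<w_a$ and $w(I)<w_a\Rightarrow w(I)<w_\xi/2$, and the summability $\sum_\xi\log w_\xi^{-1}=\tilde O(d\tau)$ needed for the $\tilde O(d^3\tau)$ bound follows from the Mahler bound together with the discriminant estimate $\prod_i|f'(z_i)|\ge 2^{-O(d(\log d+\tau))}$ --- none of which appears in your proposal.

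A smaller point: your normality argument for the hand-off interval covers the distance condition of Definition~\ref{def:normal}, but the third condition (lower bound on $|f|$ at the endpoints) does not follow ``analogously to Lemma~\ref{lem:normalization}'' for the raw \ND\ output, because \ND\ gives no guarantee that $\xi$ is not extremely close to an endpoint of its interval, and then $|f|$ at that endpoint can be arbitrarily small relative to the interval width. The paper fixes this by passing to the interval $\tilde I$ of double width centered at $I$, which forces both endpoints to be at distance at least $w(\tilde I)/4$ from $\xi$; some such enlargement step is needed in your argument as well.
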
 

\begin{proof}
Let $z_1,\ldots,z_m$ denote the real roots of $f$ and let
$\xi=z_i$ be one of them, for which $\ND$ returns the isolating interval $I:=I(\xi)$.
We want to refine $I$ further using $\ND$ to a width for which success of \textsc{Aqir}
is ensured. Such a bound is given in Corollary~\ref{cor:all_works};
for simplicity, however, we can use the simpler bound of the conference version~\cite{ks-complexity}
of this paper instead: for a normal interval of width
\begin{align}
w(I)<w_{\xi}:=\frac{|f'(\xi)|}{32ed^{3}2^{\tau}\max\{|\xi|,1\}^{d-1}}
\label{sizewI},
\end{align}
where $e\approx 2.71\ldots$ denotes the Eulerian number, it is guaranteed that 
each \textsc{Aqir} step succeeds. 
In order to check whether the inequality 
(\ref{sizewI}) holds, 
the algorithm needs to estimate $|f'(\xi)|$ and 
$\operatorname{max}\{1,|\xi|\}^{d-1}$. For this purpose, 
it uses $\ND$ to refine $I$ further until $w(I)<1/(2d)$ and 
$v_{I^+}:=\operatorname{var}(f,I^+)=1$, where $I^+$ is defined as the enlarged 
interval 
$$I^+=(a^+,b^+):=\left(a-\frac{w(I)}{2}\cdot(2^{3\lceil\log 
d\rceil+6}-1),b+\frac{w(I)}{2}\cdot(2^{3\lceil\log d\rceil+6}-1)\right).$$
The interval $I^+$ is centered at $I$ and has width $w(I^+)=2^{3\lceil\log d\rceil+6}\cdot w(I)\ge 
64d^3w(I)$. According to the two-circle theorem (e.g. see~\cite[Theorem~1]{NewDsc}), $\operatorname{var}(f,I^+)=1$ holds for sure if $w(I^+)<\sigma(\xi,f)/2$. It 
follows that the endpoints of the so-obtained intervals $I$ and $I^+$ are 
dyadic numbers that can be represented by $O(\tau+\log 
d+\log\sigma(\xi,f)^{-1})$ many bits. Hence, the cost for this refinement is 
bounded by $\tilde{O}(d^2(\tau+\log\sigma(\xi,f)^{-1}))$ bit operations since, 
in each iteration, we perform $O(d)$ arithmetic operations, and the total 
number of iterations is bounded by 
$O(\log(d\tau)+\log\log\sigma(\xi,f)^{-1})$. 
This yields the bound $\tilde{O}(d^3\tau)$ for the total cost for all real roots because $\sum_{i=1}^m \log \sigma(z_i,f)^{-1}=\Sigma_F+O(d\tau)=\tilde{O}(d\tau)$.

Since 
$\operatorname{var}(f,I^+)=1$, the Obreshkoff lens $L_d^+$ for $I^+$ (see~\cite[Figure~2.1]{NewDsc} for the definition and an illustration of the Obrsehkoff lens) contains exactly one root, namely, $\xi\in I$. According to~\cite[Lemma 5]{NewDsc}, the distance from an arbitrary point in $I$ to an arbitrary point outside $L_d^+$ is lower bounded by
$$\frac{1}{4d}\cdot\left(\min\{|a^+ -a|,|b^+ -b|\}-8d^2w(I)\right)>4d^2w(I),$$
and thus $w(I)<\sigma(\xi,f)/(4d^2)$. It follows that each point within $I$ has distance more than $\sigma(\xi,f)/2$ to any root $z_j\neq \xi$.
It is well-known (e.g.~\cite{yap-fpaa-00}) that the disc $\Delta_{\sigma(\xi,f)/d}(\xi)$ of radius $\sigma(\xi,f)/d$ centered at $\xi\in I$ contains no root of the derivative $f'$, hence the disc $\Delta_{2dw(I)}(\operatorname{mid}(I))\subset \Delta_{\sigma(\xi,f)/d}(\xi)$ contains no root of $f'$ either. It follows that 
\begin{align}
|f'(\xi)|/2< |f'(a)|<2|f'(\xi)|\label{firstinequality}
\end{align}
since, for each root $z_j'$ of the derivative $f'$, we have $|a-z_j'|/|\xi-z_j'|\in (1-1/(2d),1+1/(2d))$, and $(1+1/(2d))^{d-1}<\sqrt{e}<2$ and $(1-1/(2d))^{d-1}>1/\sqrt{e}>1/2.$ 
In addition, we have $$(1-1/(2d))\cdot\max\{1,|\xi|\}<\max\{1,|a|\}<(1+1/(2d))\cdot\max\{1,|\xi|\}$$ since $w(I)<1/(2d)$. Hence, it follows that 
\begin{align}
\frac{1}{2}\cdot\max\{1,|a|\}^{d-1}<\max\{1,|\xi|\}^{d-1}<2\max\{1,|a|\}^{d-1}.\label{secondinequality}
\end{align}
Combining (\ref{firstinequality}) and (\ref{secondinequality}) with (\ref{sizewI}), we have $w(I)<w_{\xi}$ if
\begin{align}
w(I)<w_a:=\frac{|f'(a)|}{256ed^{3}2^{\tau}\max\{|a|,1\}^{d-1}}.\label{sizewItest}
\end{align}
In fact, we even have $w(I)<w_{\xi}/2$ in this case, which will turn out useful in the last step.

The algorithm further refines $I$ using $\ND$ until $w(I)<w_a$. 
Using (\ref{firstinequality}) and (\ref{secondinequality}) in the other direction, it follows that $w_a>w_{\xi}/32$;
therefore $I$ is refined to a width of not smaller than $(w_{\xi}/32)^2$. 
These refinements demand for $\tilde{O}(d^2(\tau+d\log\max\{1,|\xi|\}-\log |f'(\xi)|))$ bit operations. 
Using the Mahler bound yields $\log\prod_{i=1}^d \max\{1,|z_i|\}=O(\tau+\log d)$. 
The product of all $f'(z_i)$, $i=1,\ldots,d$, equals $\operatorname{lcf}(f)^{2-d}\operatorname{Disc}(f)$, 
where $\operatorname{lcf}(f)$ denotes the leading coefficient and $\operatorname{Disc}(f)\in\mathbb{Z}$ the discriminant of~$f$. Since $|f'(z)|\le d^2 2^\tau \max\{1,|z|\}^d$ for all $z\in\mathbb{C}$, it follows that 
\begin{equation*}\begin{split}
\prod_{i=1}^m |f'(z_i)|&\ge \prod_{i>m} (d^2 2^\tau \max\{1,|z_i|\}^d)^{-1} \prod_{i=1}^d |f'(z_i)|\ge \prod_{i=1}^d(d^2 2^\tau \max\{1,|z_i|\}^d)^{-1} \operatorname{lcf}(f)^{2-d}\operatorname{Disc}(f)\\
&=2^{-O(d(\log d +\tau))}
\end{split}\end{equation*}
Thus, the total cost for the refinement is bounded by $\tilde{O}(d^3\tau)$.

Notice that, after the latter refinement steps, the width of the interval $I$
satisfies $w(I)<w_{\xi}/2$, but $I$ may not be normal, a property which is required to ensure success of the \textsc{Aqir} steps. The following consideration however shows that the interval $$\tilde{I}=(\tilde{a},\tilde{b}):=(a-w(I)/2,b+w(I)/2)$$ of double width centered at $I$ is normal: Obviously, the first property of Definition~\ref{def:normal} is satisfied for $\tilde{I}$. We have already shown that the distance from an arbitrary point within $I$ to an arbitrary root $z_j\neq \xi$ is more than $\sigma(\xi,f)/2$, and $w(I)<\sigma(\xi,f)/(4d^2)$. Hence, each point $p\in\tilde{I}$ has distance more than $\sigma(z_j,f)/4$ to any root $z_j\neq \xi$. This shows that the second property of Definition~\ref{def:normal} is satisfied. 
Furthermore, both endpoints of $\tilde{I}$ have distance at least $w(\tilde{I})/4$ from $\xi$. A completely analogous computation as in the proof of Lemma~\ref{lem:normalization} then shows that the third property of Definition~\ref{def:normal} holds as well. Thus, considering the interval $\tilde{I}$ as the starting interval for the $\textsc{Aqir}$ method, 
quadratic convergence is achieved for all steps right from the beginning. According to Lemma~\ref{lem:quadratic_cost} and the argumentation in Section~\ref{ssec:multipoint}, the cost for the remaining refinement steps are then bounded by $\tilde{O}(d^3\tau+dL)$
\end{proof}\vspace{0.25cm}

\section{Concluding Remarks}
\label{sec:conclusion}

We have presented a complete solution to the root refinement problem
using validated numerical methods in this paper. Despite the relative
simplicity of the approach, we obtain a bit complexity which is essentially
competitive to best known bounds which have been achieved by much more
sophisticated algorithms. Moreover, we have demonstrated that our approach
is easily implementable and leads to practical improvements even
when implemented in the most naive form.

We have shown that the complexity of approximating roots of a real polynomial only depends on the geometry of the roots and not on the complexity or the type of the coefficients. For instance, we used this fact in~\cite{ks-worst} to derive considerably improved complexity bounds for the topology computation
of algebraic plane curves.

Although the focus of this work was the asymptotic complexity,
the presented algorithm also aims for a practically efficient solution
of the root approximation problem. Indeed, a simplified version of our approach
(for integer coefficients) is included in the recently introduced
\textsc{cgal}-package
on algebraic computations~\cite{bhk-generic}. 
Experimental comparisons in the context of~\cite{bes-bisolve11} have shown
that the approximate version of QIR gives significantly better running times
than its exact counterpart.
These observations underline the practical relevance of our approximate version
and suggest a practical comparison with state-of-the-art solvers as further work.


\begin{thebibliography}{10}

\bibitem{abbott-quadratic}
J.~Abbott.
\newblock {Quadratic Interval Refinement for Real Roots}.
\newblock arxiv:1203.1227v1; Originally presented as a ``Poster'' at the
  \textit{2006 Intern.\ Symp.\ on Symbolic and Algebraic Computation (ISSAC
  2006)}, 2012.

\bibitem{bes-bisolve11}
E.~Berberich, P.~Emeliyanenko, and M.~Sagraloff.
\newblock An elimination method for solving bivariate polynomial systems:
  Eliminating the usual drawbacks.
\newblock In {\em Workshop on Algorithm Engineering \& Experiments (ALENEX)},
  pages 35--47, 2011.

\bibitem{bhk-generic}
E.~Berberich, M.~Hemmer, and M.~Kerber.
\newblock A generic algebraic kernel for non-linear geometric applications.
\newblock In {\em Proceedings of the 27th Annual Symposium on Computational
  Geometry (SoCG'11)}, pages 179--186, 2011.

\bibitem{BF00}
D.~A. Bini and G.~Fiorentino.
\newblock Design, analysis, and implementation of a multiprecision polynomial
  rootfinder.
\newblock {\em Numerical Algorithms}, 23(2--3):127--173, 2000.

\bibitem{bd-two}
J.~Bus and T.J.Dekker.
\newblock Two efficient algorithms with guaranteed convergence for finding a
  zero of a function.
\newblock {\em ACM Trans.\ on Math.\ Software}, 1(4):330--345, 1975.

\bibitem{RUR-topology}
J.~Cheng, S.~Lazard, L.~Peñaranda, M.~Pouget, F.~Rouillier, and E.~Tsigaridas.
\newblock On the topology of real algebraic plane curves.
\newblock {\em Mathematics in Computer Science}, 4:113--137, 2010.

\bibitem{ca-prriudrs-76}
G.~E. Collins and A.~G. Akritas.
\newblock {Polynomial Real Root Isolation Using {Descartes}' Rule of Signs}.
\newblock In {\em Proc.\ of the 3rd ACM Symp.\ on Symbolic and Algebraic
  Computation (SYMSAC 1976)}, pages 272--275. ACM Press, 1976.

\bibitem{dusharmayap-dmbound2007}
Z.~Du, V.~Sharma, and C.~Yap.
\newblock Amortized bound for root isolation via {S}turm sequences.
\newblock In {\em Symbolic-Numeric Computation}, Trends in Mathematics, pages
  113--129. Birkh\"auser Basel, 2007.

\bibitem{Eigenwillig2007a}
A.~Eigenwillig.
\newblock On multiple roots in {D}escartes' rule and their distance to roots of
  higher derivatives.
\newblock {\em Journal of Computational and Applied Mathematics},
  200(1):226--230, March 2007.

\bibitem{ekw-fast}
A.~Eigenwillig, M.~Kerber, and N.~Wolpert.
\newblock Fast and exact geometric analysis of real algebraic plane curves.
\newblock In {\em Proc.\ of the 2007 Intern.\ Symp.\ on Symbolic and Algebraic
  Computation (ISSAC 2007)}, pages 151--158, 2007.

\bibitem{ek+-descartes}
A.~Eigenwillig, L.~Kettner, W.~Krandick, K.~Mehlhorn, S.~Schmitt, and
  N.~Wolpert.
\newblock A {D}escartes algorithm for polynomials with bit-stream coefficients.
\newblock In {\em 8th International Workshop on Computer Algebra in Scientific
  Computing (CASC 2005)}, volume 3718 of {\em LNCS}, pages 138--149, 2005.

\bibitem{kerber-complexity}
M.~Kerber.
\newblock On the complexity of reliable root approximation.
\newblock In {\em 11th International Workshop on Computer Algebra in Scientific
  Computing (CASC 2009)}, volume 5743 of {\em LNCS}, pages 155--167. Springer,
  2009.

\bibitem{ks-complexity}
M.~Kerber and M.~Sagraloff.
\newblock Efficient real root approximation.
\newblock In {\em Proceedings of the 36th International Symposium on Symbolic
  and Algebraic Computation (ISSAC 2011)}, pages 209--216, 2011.

\bibitem{ks-worst}
M.~Kerber and M.~Sagraloff.
\newblock A worst-case bound for topology computation of algebraic curves.
\newblock {\em Journal of Symbolic Computation}, 47(3):239--258, 2012.

\bibitem{ks-fast}
A.~Kobel and M.~Sagraloff.
\newblock Fast approximate polynomial multipoint evaluation and applications.
\newblock arXiv:1304.8069, 2013.

\bibitem{ms-cpgeneral2011}
K.~Mehlhorn, R.~Osbild, and M.~Sagraloff.
\newblock A general approach to the analysis of controlled perturbation
  algorithms.
\newblock {\em Comput. Geom.}, 44(9):507--528, 2011.

\bibitem{mesawa:ISSAC13}
K.~Mehlhorn, M.~Sagraloff, and P.~Wang.
\newblock From approximate factorization to root isolation.
\newblock In {\em Proceedings of the 38th ISSAC}, pages 1--8. ACM, 2013.
\newblock To appear.

\bibitem{Pan:alg}
V.~Pan.
\newblock {Univariate Polynomials: Nearly Optimal Algorithms for Numerical
  Factorization and Root Finding}.
\newblock {\em J. Symb. Comput.}, 33(5):701--733, 2002.

\bibitem{pan-optimal}
V.~Y. Pan.
\newblock Optimal and nearly optimal algorithms for approximating polynomial
  zeros.
\newblock {\em Computers and Mathematics with Applications}, 31(12):97--138,
  1996.

\bibitem{pan-survey}
V.~Y. Pan.
\newblock Solving a polynomial equation: Some history and recent progress.
\newblock {\em SIAM Review}, 39(2):187--220, 1997.

\bibitem{RS}
F.~Rouillier and P.~Zimmermann.
\newblock Efficient isolation of polynomial's real roots.
\newblock {\em Journal of Compututational and Applied Mathematics},
  162(1):33--50, 2004.

\bibitem{sag-complexity}
M.~Sagraloff.
\newblock On the complexity of real root isolation.
\newblock arXiv:1011.0344v2, 2011.

\bibitem{NewDsc}
M.~Sagraloff.
\newblock When newton meets descartes: A simple and fast algorithm to isolate
  the real roots of a polynomial.
\newblock In {\em Proceedings of the 37th International Symposium on Symbolic
  and Algebraic Computation (ISSAC 2012)}, pages 297--304, 2012.
\newblock An extended version is available at
  \textit{http://arxiv.org/abs/1109.6279}.

\bibitem{schonhage:fundamental}
A.~Sch{\"o}nhage.
\newblock The fundamental theorem of algebra in terms of computational
  complexity, 1982; updated 2004.
\newblock Manuscript, Department of Mathematics, University of T{\"u}bingen.

\bibitem{pantsi:ISSAC13}
E.~Tsigaridas and V.~Pan.
\newblock On the boolean complexity of real root refinement.
\newblock In {\em Proceedings of the 38th ISSAC}, pages 1--8. ACM, 2013.
\newblock To appear.

\bibitem{gg-mca-99}
J.~von~zur Gathen and J.~Gerhard.
\newblock {\em {Modern Computer Algebra}}.
\newblock Cambridge University Press, 1999.

\bibitem{yap-fpaa-00}
C.~K. Yap.
\newblock {\em {Fundamental Problems in Algorithmic Algebra}}.
\newblock Oxford University Press, 2000.

\end{thebibliography}
\end{document}